\documentclass[aps,prx,reprint,superscriptaddress, amsmath,amssymb,longbibliography]{revtex4-2}

\usepackage[utf8]{inputenc}
\usepackage[T1]{fontenc}
\usepackage[british]{babel}

\usepackage[sc,osf]{mathpazo}
\usepackage[babel]{microtype}

\usepackage{amsmath,amssymb,amsthm,amsfonts,bm,mathrsfs,mathtools,bbm}

\usepackage{centernot}

\usepackage{graphicx}
\usepackage{subcaption}
\usepackage{float}

\usepackage{booktabs}
\usepackage{tabularx}
\usepackage{multirow}
\usepackage{array}
\usepackage{bigstrut}
\usepackage{makecell}

\usepackage[table,dvipsnames]{xcolor}

\usepackage{tikz}
\usetikzlibrary{positioning,arrows.meta,fit}

\usepackage{algorithm}
\usepackage{algorithmic}

\usepackage{enumitem}
\usepackage{comment}
\usepackage{xspace}
\usepackage{ragged2e}

\DeclareCaptionJustification{justified}{\justifying}

\usepackage[colorlinks=true,linkcolor=Blue,citecolor=Blue,urlcolor=Blue]{hyperref}

\newcommand{\be}{\begin{equation}}
\newcommand{\ee}{\end{equation}}

\newcommand{\bea}{\begin{eqnarray}}
\newcommand{\eea}{\end{eqnarray}}

\newcommand{\Tr}{\operatorname{Tr}}

\newcommand{\rank}{\operatorname{rank}}

\newcommand{\id}{\operatorname{id}}

\newcommand{\F}{\mathbb{F}_2}

\newcommand{\ket}[1]{\lvert #1\rangle}
\newcommand{\bra}[1]{\langle #1\rvert}

\newcommand{\proj}[1]{\ket{#1}\bra{#1}}

\newtheoremstyle{breakprop}
    {6pt}        
    {6pt}        
    {\itshape}   
    {}           
    {\bfseries}  
    {.}          
    {\newline}   
    {}           

\theoremstyle{breakprop}

\newtheorem{theorem}{Theorem}
\newtheorem{lemma}{Lemma}
\newtheorem{proposition}{Proposition}
\newtheorem{corollary}{Corollary}
\newtheorem{observation}{Observation}
\newtheorem{definition}{Definition}

\makeatletter

\newtheorem*{rep@theorem}{\rep@title}

\newcommand{\newreptheorem}[2]{%
    \newenvironment{rep#1}[1]{%
        \def\rep@title{#2 \ref{##1}}%
        \begin{rep@theorem}%
    }{%
        \end{rep@theorem}%
    }%
}

\makeatother

\newreptheorem{thm}{Theorem}

\begin{document}

\title{Beyond Maximal Entanglement: Exact Resources for Multiparty Encrypted Quantum Cloning}

\author{Pritam Roy}
\email{roy.pritamphy@gmail.com}
\affiliation{
S. N. Bose National Centre for Basic Sciences,
Kolkata 700106, India
}

\author{Shashank Gupta}
\email{shashankg687@gmail.com}
\affiliation{
Centre for Interdisciplinary Areas in Quantum Computing,
Indian Institute of Technology Indore,
Indore 453552, India
}
\affiliation{
Department of Physics,
Indian Institute of Technology Indore,
Indore 453552, India
}

\begin{abstract}
Encrypted quantum cloning distributes an unknown $k$-qubit state among $m$ encrypted clones
so that no individual clone reveals the input, yet
the state can be recovered from any one clone together with a common
quantum key. We ask the inverse question: for a fixed encoding
architecture, which multipartite pure states can serve as exact
resources for this task? For $m\ge2$, we completely characterize the
pure resources compatible with a sector-wise two-Pauli encoder, with
necessity holding for arbitrary completely positive trace-preserving
(CPTP) recovery maps. For even $m$, exact recovery requires maximal
entanglement across the signal--noise cut. For odd $m$, less
entanglement can suffice, provided that the surviving signal
correlations have the structure selected by the encoder. We further
show, without fixing the encoder, that exact recovery from every
authorized subsystem already implies perfect concealment of each
individual signal and requires at least $(m-1)k$ ebits of
signal--noise entanglement. For graph states, the resource
classification reduces to an exact condition on the kernel of the
signal--noise cut matrix, leading to binary certification and
constructive Clifford recovery. We identify rank-deficient graph
resources that attain the architecture-independent entanglement bound
and prove that the entire Dicke family, including $W$ states, is
excluded for every sector-wise two-Pauli encoder. Our results show that
encrypted recovery depends not only on how much entanglement a resource
contains, but also on how its correlations are organized relative to
the encoder.
\end{abstract}
\maketitle

\section{Introduction}

The no-cloning theorem~\cite{wootters1982single,dieks1982communication}
forbids perfect duplication of an arbitrary unknown quantum state, while
approximate cloning~\cite{scarani2005quantum} and no broadcasting
~\cite{barnum1996noncommuting} delimit more general forms of quantum
information distribution. Encrypted quantum cloning, introduced by
Yamaguchi and Kempf~\cite{yamaguchi2026encrypted}, realizes a different
form of quantum redundancy: several individually concealed signal
outputs coexist, while the original state can be recovered from any
selected signal together with a common consumable quantum key. The
signals therefore provide alternative recovery pathways rather than
simultaneously accessible copies. The protocol has also been demonstrated
on superconducting quantum hardware~\cite{yamaguchi2026experimental}.

Recent work has extended encrypted cloning to finite-dimensional systems
~\cite{ceara2026qudit}, connected it to absolutely maximally entangled
states and quantum secret sharing~\cite{lim2026ame}, analyzed information
leakage from unauthorized subsystems
~\cite{gianini2026leak,gianini2026full,bai2026}, developed schemes
with more general authorization structures~\cite{gianini2026access} and shown that the redundancy of the canonical construction can also
be exploited for state-blind relational fault diagnosis ~\cite{gianini2026diagnosticresource}.
These advances broaden the protocol landscape, but leave a more
fundamental inverse resource question open: \emph{which multipartite
pure states can support exact encrypted cloning, and what structural
properties determine their usefulness?} We address this question at
two levels. We first derive constraints that follow from exact
all-output recovery independently of the particular two-Pauli
construction, and then obtain a necessary-and-sufficient,
decoder-independent characterization for a prescribed sector-wise
two-Pauli architecture.In the latter problem, the signal outputs, common key, and encoder are
fixed, and we classify the multipartite resources that permit exact
recovery of the complete $k$-qubit state from each signal $S_i$
together with the full key register $\mathcal N$. This differs from the
original construction~\cite{yamaguchi2026encrypted}, where for even
numbers of signal--key pairs the encoded source system $A$ can itself
serve as an additional encrypted output. Here $A$ is not counted among
the outputs, which are fixed as $S_1,\ldots,S_m$.

The resource question is closely related to distributed quantum
information and quantum secret sharing, where quantum information must
remain recoverable from designated subsystems while being inaccessible
from others
~\cite{kimble2008quantum,wehner2018quantum,cuomo2020towards,
hillery1999quantum,CleveGottesmanLo1999,Gottesman2000,ImaiEtAl2005,
Gottesman1997,SchumacherNielsen1996}. Graph states provide a natural setting for such problems and have been
widely used in quantum secret sharing, including threshold and
nonthreshold schemes, graph-based characterizations of authorized
sets, and explicit decoding constructions
~\cite{MarkhamSanders2008,Sarvepalli2012,MarinMarkhamPerdrix2013}.
Whereas these works primarily determine which subsets of a given
sharing scheme can reconstruct the secret, we instead fix the
encrypted-cloning recovery structure and classify the multipartite
resources that realize it.

For arbitrary pure resources and a fixed sector-wise two-Pauli encoder,
we derive an exact condition for recovering the complete $k$-qubit state
from each signal $S_i$ together with the full key register $\mathcal N$,
with necessity valid for arbitrary completely positive trace-preserving
(CPTP) recovery maps. For even $m$, exact recovery from all outputs
requires maximal entanglement across the signal--noise cut
$S:\mathcal N$. For odd $m$, resources with nonmaximal entanglement
across the same cut can still support exact recovery, but only when the
nontrivial correlations remaining in the signal marginal belong to the
commuting algebra of complete-sector Pauli operators selected by the
encoder. The criterion is therefore stronger than an entanglement
condition alone: resources with the same signal--noise entanglement can
behave differently depending on how their surviving correlations are
aligned with the encoding action
~\cite{horodecki2009quantum,guhne2009entanglement}.

We also derive constraints that do not depend on the specific two-Pauli
encoder. For an arbitrary unitary acting on the input and signal
registers while leaving the common key $\mathcal N$ untouched, exact
recovery of the complete $k$-qubit state from every $S_i\mathcal N$
already implies perfect concealment of each individual signal $S_i$.
The same argument imposes architecture-independent restrictions on the
resource: the signal--noise entanglement must be at least $(m-1)k$
ebits, and the spectrum of the signal marginal is correspondingly
constrained. The fixed two-Pauli theorem then sharpens these general
requirements by identifying exactly which residual signal correlations
are compatible with recovery. Previous works have characterized information leakage from unauthorized
subsystems of fixed encrypted-cloning constructions, including
parity-dependent leakage in the qubit protocol, subsets containing the
transformed source register, and the corresponding qudit generalization
~\cite{gianini2026leak,gianini2026full,bai2026}. Related access-structure
formulations further distinguish recoverability from perfect secrecy,
allowing partially informative non-recovering subsystems. Our setting
instead requires concealment only of each individual signal \(S_i\);
no joint-concealment condition is imposed on larger collections of
signal outputs.

For graph-state resources, the pure-resource condition becomes an exact
binary criterion. Their bipartite entanglement and stabilizer
correlations admit exact descriptions through the signal--key cut
matrix~\cite{hein2006entanglement,hein2004,
vandennest2004graphical,FattalEtAl2004}. Full cut rank always yields a
valid resource, but for odd $m$ it need not be necessary:
rank-deficient resources remain valid precisely when the complete cut
kernel lies within the encoder-compatible exceptional subspace. Thus
the cut rank quantifies the amount of signal--key entanglement, while
the complete kernel determines whether the associated rank deficiency
is compatible with exact recovery. In a fixed graph-state Pauli frame,
this leads to distinct parity- and encoder-dependent branches,
including rank-deficient resources that attain the
architecture-independent entanglement minimum. The same graph and
bipartition can consequently succeed for one encoder frame and fail
for another; these distinctions are relative to the chosen Pauli frame,
since simultaneous local Clifford rotations map the corresponding
encoder-resource descriptions into one another without changing the
signal--key entanglement.

Finally, the characterization is constructive. We introduce
\emph{Graph-State Encrypted-Cloning Certification} (\textsc{GSECC}), an
exact graph-level procedure based on binary linear algebra, and obtain
constructive Clifford recovery maps for every certified resource.
Full-rank cuts additionally admit a key-side Clifford reduction to the
canonical maximally entangled form. Beyond graph states, the general
pure-resource theorem yields exact exclusions, including the entire
Dicke family and hence $W$ states for every sector-wise two-Pauli
encoder with $m\geq2$. Taken together, these results complement both
encrypted-cloning constructions
~\cite{yamaguchi2026encrypted,ceara2026qudit,lim2026ame,
gianini2026access} and graph-state secret-sharing analyses
~\cite{MarkhamSanders2008,Sarvepalli2012,MarinMarkhamPerdrix2013},
and separate the amount of signal--key entanglement from the specific
correlation structure required for exact encrypted recovery.

\begin{figure*}[t]
\centering
\includegraphics[width=0.94\textwidth]{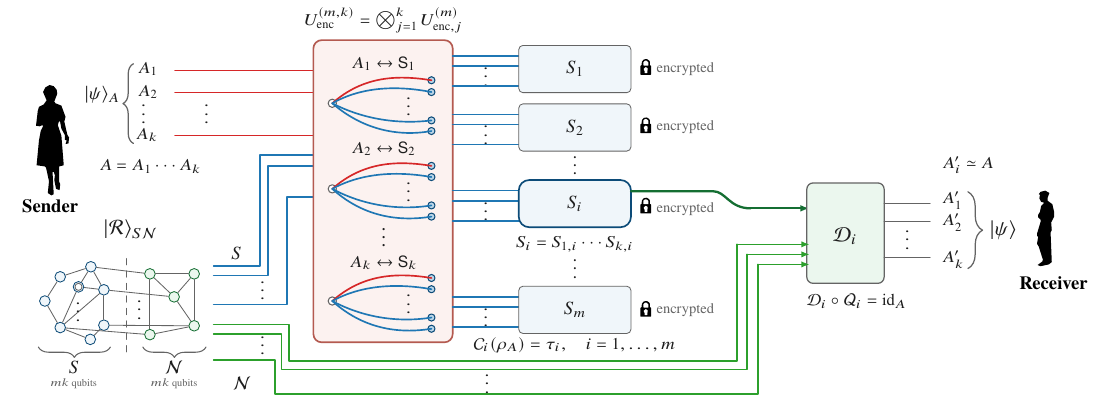}
\caption{\footnotesize
\textbf{Operational architecture of multiparty encrypted quantum cloning.}
An arbitrary $k$-qubit state $\ket{\psi}_A$, with
$A=A_1\cdots A_k$, is encoded together with a pure multipartite resource
$\ket{\mathcal R}_{S\mathcal N}$ containing $\nu=mk$ signal qubits and
$\nu$ noise, or key, qubits. The signal register has two compatible
decompositions: the logical sector
$\mathsf S_j=(S_{j,1},\ldots,S_{j,m})$ is associated with input qubit
$A_j$, while
$S_i=S_{1,i}\cdots S_{k,i}$ forms the $i$th encrypted output.
The sector-wise encoder leaves the common key $\mathcal N$ untouched.
Each $S_i$ is individually concealed, whereas $S_i\mathcal N$ permits
exact recovery of the complete logical state. The registers
$S_1,\ldots,S_m$ therefore constitute alternative encrypted recovery
pathways sharing a common quantum key rather than simultaneously
accessible copies.
}
\label{fig:main_encrypted_cloning_architecture}
\end{figure*}

The paper is organized as follows.
Sec.~\ref{sec:general_framework} establishes the general resource
conditions and architecture-independent entropy constraint.
Sec.~\ref{sec:graph_resources} derives the exact graph-state
classification and its parity- and encoder-dependent branches.
Sec.~\ref{sec:certification} develops \textsc{GSECC} and constructive
Clifford recovery. Sec.~\ref{sec:resource_landscape} applies the framework
to representative multipartite resources, and
Sec.~\ref{sec:conclusion} discusses the resulting resource picture and
open directions.
\section{General framework and resource conditions}
\label{sec:general_framework}

\subsection{Multiparty encrypted-cloning architecture}
\label{subsec:main_architecture}

We consider a multiparty extension of encrypted quantum cloning
~\cite{yamaguchi2026encrypted}, illustrated in
Fig.~\ref{fig:main_encrypted_cloning_architecture}.
Let $m,k\geq1$ and set $\nu=mk$. The input register
$A=A_1\cdots A_k$ contains $k$ qubits, while the pure multipartite
resource $\ket{\mathcal R}_{S\mathcal N}$ contains $\nu$ signal
qubits and $\nu$ noise, or key, qubits. The signal register admits
two compatible decompositions,
\[
\begin{aligned}
S_i
&=
S_{1,i}\cdots S_{k,i},
&& i=1,\ldots,m,
\\
\mathsf S_j
&=
(S_{j,1},\ldots,S_{j,m}),
&& j=1,\ldots,k,
\\
S
&=
S_1\cdots S_m
=
\mathsf S_1\cdots\mathsf S_k .
\end{aligned}
\]
Here $S_i$ is the $i$th $k$-qubit signal output, whereas
$\mathsf S_j$ is the $m$-qubit signal sector associated with the
input qubit $A_j$.

For an ordered pair of distinct Pauli operators
$P,Q\in\{X,Y,Z\}$, we consider the sector-wise encoder
\[
\begin{aligned}
U_{P,Q}^{(m,k)}
&=
\bigotimes_{j=1}^{k}
U_{P,Q;j}^{(m)}\quad,
\\
U_{P,Q;j}^{(m)}
&=
e^{-i\pi P_{A_j}P_{\mathsf S_j}^{\otimes m}/4}
e^{-i\pi Q_{A_j}Q_{\mathsf S_j}^{\otimes m}/4}.
\end{aligned}
\]
The encoder acts on $AS$ and leaves the complete key register
$\mathcal N$ untouched. The canonical encrypted-cloning construction
corresponds to $(P,Q)=(X,Z)$
~\cite{yamaguchi2026encrypted}. We retain the full ordered two-Pauli
family because, relative to a fixed resource and Pauli frame, exact
recovery can depend on the encoding axes.

Writing $U\equiv U_{P,Q}^{(m,k)}$ and suppressing the identity action
on $\mathcal N$, define the encoded state by
\[
\mathcal V_{\mathcal R}(\rho_A)
=
U
\left(
\rho_A\otimes\proj{\mathcal R}
\right)
U^\dagger .
\]
For each signal output $S_i$, let
\[
\begin{aligned}
\mathcal C_i(\rho_A)
&=
\Tr_{A S_{\bar i}\mathcal N}
\mathcal V_{\mathcal R}(\rho_A),
\\
\mathcal Q_i(\rho_A)
&=
\Tr_{A S_{\bar i}}
\mathcal V_{\mathcal R}(\rho_A),
\qquad
S_{\bar i}
=
\bigotimes_{\ell\neq i}S_\ell .
\end{aligned}
\]
Thus $\mathcal C_i$ is the channel accessible from $S_i$ alone,
whereas $\mathcal Q_i$ describes the system available when $S_i$ is
combined with the full key register $\mathcal N$.

Perfect \emph{individual concealment} requires $\mathcal C_i$ to be
constant,
\[
\mathcal C_i(X)
=
\Tr(X)\tau_i,
\qquad
i=1,\ldots,m,
\]
for some fixed state $\tau_i$ on $S_i$. Equivalently, the state of
$S_i$ is independent of the input. Perfect \emph{authorized recovery}
requires an otherwise unrestricted completely positive trace-preserving
(CPTP) map
\[
\mathcal D_i:
S_i\mathcal N\longrightarrow A_i',
\qquad
A_i'\simeq A,
\]
satisfying $\mathcal D_i\circ\mathcal Q_i
=
\id_A .$

Recovery is required for the complete $k$-qubit input, including its
arbitrary correlations among the input qubits and with an external
reference.

Throughout the fixed-encoder classification, we assume a pure resource
on $S\mathcal N$, the sector-wise two-Pauli encoder defined above
acting trivially on $\mathcal N$, and $m$ prescribed disjoint
$k$-qubit signal outputs $S_1,\ldots,S_m$. Every authorized decoder
receives one signal $S_i$ together with the complete common key
$\mathcal N$, and the recovery map is otherwise unrestricted.
Concealment is imposed on each $S_i$ individually; no joint-concealment
condition is imposed on collections of several signal outputs.

To formulate concealment and recovery independently of a particular
input state, introduce a $k$-qubit reference $\widetilde A$ maximally
entangled with $A$,
\[
\ket{\Phi^+}_{\widetilde A A}
=
2^{-k/2}
\sum_{x\in\{0,1\}^k}
\ket{x}_{\widetilde A}\ket{x}_A ,
\]
and define the encoded Choi state~\cite{Choi1975},
\[
\ket{\Omega}_{\widetilde A A S\mathcal N}
=
\left(
I_{\widetilde A}\otimes U\otimes I_{\mathcal N}
\right)
\left(
\ket{\Phi^+}_{\widetilde A A}
\otimes
\ket{\mathcal R}_{S\mathcal N}
\right).
\]
Because the encoder leaves $\mathcal N$ untouched,
\[
\Omega_{\widetilde A\mathcal N}
=
\frac{I_{\widetilde A}}{2^k}
\otimes
\rho_{\mathcal N},
\qquad
\rho_{\mathcal N}
=
\Tr_S\proj{\mathcal R}.
\]

Individual concealment of $S_i$ is equivalent to
\[
\Omega_{\widetilde A S_i}
=
\frac{I_{\widetilde A}}{2^k}
\otimes
\Omega_{S_i}.
\]
Exact recovery from $S_i\mathcal N$ is equivalent to decoupling the
reference from the complementary subsystem,
\begin{equation}
\mathcal D_i\circ\mathcal Q_i=\id_A
\quad\Longleftrightarrow\quad
I(\widetilde A:AS_{\bar i})_\Omega=0 .
\label{eq:main_recovery_decoupling}
\end{equation}
This is the exact information--disturbance, or recovery--decoupling,
criterion~\cite{SchumacherNielsen1996}. A direct derivation from purity
and uniqueness of purification is given in the Supplemental Material
~\cite{supplemental}.

Purity of $\Omega$ further gives
\[
I(\widetilde A\rangle S_i\mathcal N)_\Omega
=
k-
I(\widetilde A:AS_{\bar i})_\Omega ,
\]
so exact recovery is equivalent to
\[
I(\widetilde A\rangle S_i\mathcal N)_\Omega
=
k .
\]
Since the input dimension is $2^k$, the coherent information cannot
exceed $k$. Exact reversibility attains this value for the maximally
entangled input and also permits perfect transmission of the full
$k$-qubit input. Consequently,
\begin{equation}
I(\widetilde A\rangle S_i\mathcal N)_\Omega
=
Q^{(1)}(\mathcal Q_i)
=
C_Q(\mathcal Q_i)
=
k ,
\label{eq:main_capacity_saturation}
\end{equation}
where $Q^{(1)}$ denotes the optimized single-use coherent information
and $C_Q$ the quantum capacity
~\cite{Lloyd1997,Devetak2005,DevetakShor2005}.
\subsection{General pure-state resources}
\label{subsec:main_general_resource}

We now characterize all pure resources compatible with the fixed
sector-wise two-Pauli encoder, without assuming graph-state or
stabilizer structure. For distinct Pauli operators $P$ and $Q$, let
\[
R=iPQ,
\]
which, up to sign, is the Pauli operator complementary to $P$ and $Q$.
For each sector define
\[
T_j
=
\bigotimes_{\ell=1}^{m}R_{S_{j,\ell}},
\qquad
T(c)
=
\prod_{j=1}^{k}T_j^{c_j},
\qquad
c\in\F^k .
\]
The operators $T_j$ act on disjoint sectors and therefore commute.

The relevant structure follows from the Pauli action of the encoder.
For even $m$, no nonidentity Pauli operator supported entirely on $S$
is conjugated to an operator with trivial signal support. For odd $m$,
the unique such direction within each sector is $R^{\otimes m}$.
Hence all signal correlations compatible with exact all-output recovery
are generated by the complete-sector operators $T_j$. The full
binary-symplectic derivation is given in the Supplemental Material
~\cite{supplemental}.

\begin{theorem}[Exact fixed-encoder pure-resource criterion]
\label{thm:exact_general_resource_main}
Let $m\ge2$, $k\ge1$, $\nu=mk$, and fix
$U_{P,Q}^{(m,k)}$. For an arbitrary pure resource
$\ket{\mathcal R}_{S\mathcal N}$ with
$\rho_S=\Tr_{\mathcal N}\proj{\mathcal R}$, exact recovery of the
complete $k$-qubit input from every $S_i\mathcal N$, using otherwise
unrestricted CPTP recovery maps, is possible if and only if
\begin{equation}
\rho_S=
\begin{cases}
\dfrac{I_S}{2^\nu},
&
m\ \mathrm{even},
\\[3mm]
\dfrac{1}{2^\nu}
\displaystyle\sum_{c\in\F^k}
\alpha_c\,T(c),
&
m\ \mathrm{odd},
\end{cases}
\label{eq:main_exact_pure_resource}
\end{equation}
where $\alpha_0=1$, $\alpha_c\in\mathbb R$, and the odd-$m$
expression is positive semidefinite. Whenever
Eq.~\eqref{eq:main_exact_pure_resource} holds, every individual signal
is automatically perfectly concealed,
\[
\mathcal C_i(\rho_A)
=
\frac{I_{S_i}}{2^k},
\qquad
i=1,\ldots,m,
\]
for every input state $\rho_A$.
\end{theorem}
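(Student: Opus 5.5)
We begin from the recovery--decoupling criterion, Eq.~\eqref{eq:main_recovery_decoupling}: exact recovery from $S_i\mathcal N$ holds if and only if $\Omega_{\widetilde A A S_{\bar i}}=\Omega_{\widetilde A}\otimes\Omega_{AS_{\bar i}}$. Because $\mathcal N$ is traced out and the encoder does not act on it, this marginal depends on the resource only through $\rho_S$:
\[
\Omega_{\widetilde A A S_{\bar i}}=\Tr_{S_i}\!\left[U(\Phi^+_{\widetilde A A}\otimes\rho_S)U^\dagger\right].
\]
We then expand both objects in Pauli bases. Write $\Phi^+=2^{-2k}\sum_{\sigma}\sigma^T_{\widetilde A}\otimes\sigma_A$ and $\rho_S=2^{-\nu}\sum_{\tau}\beta_\tau\,\tau$, with $\tau$ ranging over Pauli strings on $S$. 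The encoder is Clifford, so it maps each product $\sigma_A\otimes\tau_S$ to $\pm\,\sigma'_A\otimes\tau'_S$. Taking the partial trace over $S_i$ keeps only the images whose restriction to $S_i$ is the identity. Decoupling of $\widetilde A$ requires that every term with $\sigma\neq I$ either vanishes or, after this trace, assembles into a product form. Since the $\widetilde A$ factors are linearly independent, the practical condition is the following: for every $\sigma\neq I$ and every $\tau$ with $\beta_\tau\neq0$, the image of $\sigma\otimes\tau$ must have nontrivial support on $S_i$, and this must hold for every $i$.

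The core step is the sector-wise binary-symplectic computation. For each sector $j$, conjugation by $e^{-i\pi P_{A_j}P^{\otimes m}/4}$ multiplies an operator that anticommutes with $P_{A_j}P^{\otimes m}$ by $P_{A_j}P^{\otimes m}$, up to a phase, and leaves commuting operators unchanged; the $Q$ factor acts in the same way. We therefore tabulate the images of $\sigma_{A_j}\otimes\tau_{\mathsf S_j}$ and ask when an image has identity on the site $S_{j,i}$. The input-side factors are $\sigma_{A_j}\in\{I,P,Q,R\}$, and the encoder maps them to $\sigma_{A_j}$ times $P^{\otimes m}$, $Q^{\otimes m}$, or $R^{\otimes m}$ (up to phase), depending on commutation. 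It follows that the image of $\sigma\otimes\tau$ restricted to $\mathsf S_j$ is $\tau_{\mathsf S_j}$ times a complete-sector string $W^{\otimes m}$, where $W\in\{I,P,Q,R\}$. This string is nontrivial whenever $\sigma_{A_j}\neq I$, but the choice of $W$ is coupled to the commutation of $\tau$ with $P^{\otimes m}$ and $Q^{\otimes m}$.

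Identity on $S_{j,i}$ then requires $\tau_{S_{j,i}}=W$. Demanding that this fails for all $i$ and for all admissible $\sigma$ gives the two parity cases.
\begin{itemize}[leftmargin=*]
\item For even $m$: $W^{\otimes m}$ commutes with every $P^{\otimes m}$. A nonidentity $\tau$ on a sector combined with a suitable $\sigma_{A_j}$ therefore always yields an image that is trivial on some site $S_{j,i}$, which kills every $\beta_\tau$ with $\tau\neq I$. The bookkeeping across sectors should be handled by choosing $\sigma$ supported only on the sectors where $\tau$ is nontrivial.
\item For odd $m$: the parity of $W^{\otimes m}$ matters. The one surviving direction is $\tau_{\mathsf S_j}=R^{\otimes m}$, which commutes with both $P^{\otimes m}$ and $Q^{\otimes m}$ only in a way that keeps the image trivial on no site. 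This yields exactly the span of the $T(c)$, with $\alpha_c=\beta_{T(c)}$. The $\alpha_c$ are real because $\rho_S$ is Hermitian, and $\alpha_0=1$ because $\Tr\rho_S=1$.
\end{itemize}
The main obstacle is this case analysis. We must show that every $\tau$ outside the $T$-algebra, including $\tau$ that mix sectors, is excluded by some choice of $\sigma$ and $i$. We must also show that cancellations between different $\tau$ mapping to the same image cannot rescue an excluded term. For the second point, we will use that the map $(\sigma,\tau)\mapsto(\sigma',\tau')$ is a bijection on Pauli pairs, so distinct terms never cancel.

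For sufficiency, we substitute the stated form of $\rho_S$. Each $T(c)$ commutes with the encoder, since $R^{\otimes m}$ commutes with $P^{\otimes m}$ and $Q^{\otimes m}$ for odd $m$. The only cross terms therefore come from $\sigma\neq I$ acting on the identity component, and these map to $\sigma'_A\otimes(\text{complete-sector strings})$, which are nontrivial on every $S_i$ and so vanish under $\Tr_{S_i}$. The reduced state is then $I_{\widetilde A}/2^k\otimes\Omega_{AS_{\bar i}}$, which is the decoupling condition. Concealment follows from the same computation applied to $\Omega_{\widetilde A S_i}$. Every term with $\sigma\neq I$ or $c\neq0$ carries a factor that is nontrivial on $S_{\bar i}$ (for $m\ge2$), so it is removed by the trace over $S_{\bar i}$. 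Hence $\Omega_{S_i}=I/2^k$, and $\mathcal C_i(\rho_A)=I_{S_i}/2^k$.
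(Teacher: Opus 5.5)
The skeleton of your proposal matches the paper's: the decoupling criterion, the Pauli expansion of $\Omega_{\widetilde AAS}$, a term surviving only if its encoded image is identity on $S_i$, and no cancellation because Clifford conjugation permutes the Pauli basis. However, the sufficiency and concealment steps rest on a false claim. For odd $m$, $R^{\otimes m}$ \emph{anticommutes} with $P^{\otimes m}$ and $Q^{\otimes m}$, since each site contributes a sign; they commute only for even $m$. So $I_A\otimes T_j$ anticommutes with both generators $P_{A_j}P^{\otimes m}$ and $Q_{A_j}Q^{\otimes m}$, and the encoder maps it to $\pm R_{A_j}\otimes I_S$ rather than leaving it invariant.

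This is not cosmetic. Correspondingly $U(R_{A_j}\otimes I)U^\dagger=\pm I_{A_j}\otimes T_j$. Your premise would then give $U(R_{A_j}\otimes T_j)U^\dagger=\pm I$, a term that survives $\Tr_{S_i}$ with reference factor $R_{\widetilde A_j}^{\mathsf T}\neq I$. Decoupling would then fail whenever $\alpha_{e_j}\neq0$. Your sentence that the only cross terms come from $\sigma\neq I$ on the identity component also drops exactly the $\sigma\neq I$, $c\neq 0$ terms that need an argument.

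The correct step is $U(\sigma\otimes T(c))U^\dagger=U(\sigma\otimes I)U^\dagger\,(\pm R_A(c)\otimes I_S)$. The second factor has no signal support, so the signal support equals that of $U(\sigma\otimes I)U^\dagger$. That operator is nonidentity on every site of each sector where $\sigma$ acts, so these terms vanish under $\Tr_{S_i}$. Likewise, in the concealment computation the $\sigma=I$, $c\neq0$ terms have no signal support at all. They vanish because $\mathcal C_i$ also traces out $A$ and $R_A(c)$ is traceless, not because of support on $S_{\bar i}$.

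Necessity is left at ``the main obstacle'', and what you sketch does not close it. The claim that $W^{\otimes m}$ is nontrivial whenever $\sigma_{A_j}\neq I$ is false: $P_{A_j}\otimes P_{S_{j,1}}$ commutes with both generators, so $W=I$. The even-$m$ step from ``$W^{\otimes m}$ commutes with $P^{\otimes m}$'' to ``some site becomes trivial'' also does not follow.

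The missing idea is to factor the other way: write $U(\sigma\otimes\tau)U^\dagger=U(\sigma\otimes I)U^\dagger\cdot E_\tau$ with $E_\tau=U(I\otimes\tau)U^\dagger$. Then use that $U(\sigma_{A_j}\otimes I)U^\dagger$ restricts on \emph{every} site of sector $j$ to the same nonidentity Pauli $V_{\sigma_{A_j}}$, with $\sigma\mapsto V_\sigma$ a bijection of $\{X,Y,Z\}$; this is a one-qubit binary-symplectic check. If $E_\tau$ is nontrivial on some $S_i$, take $V_{\sigma_{A_j}}=E_\tau|_{S_{j,i}}$ wherever that restriction is nonidentity, and $\sigma_{A_j}=I$ elsewhere. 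The image is then identity on all of $S_i$ with $\sigma\neq I$, uniformly for $\tau$ that mixes sectors.

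Necessity thereby reduces to showing that $E_\tau$ has empty signal support only for $\tau=I$ (even $m$), or for $\tau$ a product of sector strings $R^{\otimes m}$ (odd $m$). That is the per-sector parity computation you did not carry out. Your prescription of supporting $\sigma$ on the sectors where $\tau$ is nontrivial also fails for odd $m$ on a sector with $\tau_{\mathsf S_j}=R^{\otimes m}$. There $E_\tau$ has no signal support, so any $\sigma_{A_j}\neq I$ reintroduces support on $S_{j,i}$. The choice of $\sigma$ must be read off from $E_\tau$, not from $\tau$.
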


Theorem~\ref{thm:exact_general_resource_main} is decoder independent
but encoder dependent. If Eq.~\eqref{eq:main_exact_pure_resource}
fails, exact recovery is impossible for at least one
$S_i\mathcal N$, irrespective of the CPTP recovery map. The theorem
is therefore a complete operator-level characterization of the
admissible signal marginal, rather than merely an entanglement
condition.

For even $m$, all nontrivial Pauli components of $\rho_S$ are excluded,
so $\rho_S$ must be maximally mixed. Since the resource is pure, this
is equivalent to maximal entanglement across the signal--key cut
$S:\mathcal N$. For odd $m$, nonmaximal entanglement across the same
cut is possible, but every departure from maximal mixing must lie in
the commuting algebra generated by $T_1,\ldots,T_k$. Thus resources
with the same signal--key entanglement can nevertheless differ in
their ability to support exact recovery because their surviving
correlations are organized differently relative to the encoder. The
complete necessity-and-sufficiency proof, including the Pauli
no-cancellation argument, is given in the Supplemental Material
~\cite{supplemental}.

The concealment established by
Theorem~\ref{thm:exact_general_resource_main} applies to each signal
individually and does not imply concealment of the signals under joint
access. For odd \(m\), one can identify a family of input moments that
is always accessible from the complete signal register. Define
\begin{equation*}
R_A(c)
=
\prod_{j=1}^{k} R_{A_j}^{\,c_j},
\qquad
\sigma_m
=
(-1)^{(m-1)/2}.
\end{equation*}
The encoder satisfies
\begin{equation*}
U_{P,Q}^{(m,k)\dagger}
\bigl(I_A\otimes T(c)\bigr)
U_{P,Q}^{(m,k)}
=
\sigma_m^{|c|}
R_A(c)\otimes I_S,
\end{equation*}
where \(|c|\) denotes the Hamming weight of \(c\). For the channel to
the complete signal register,
\begin{equation*}
\mathcal C_S(\rho_A)
=
\Tr_{A\mathcal N}
\mathcal V_{\mathcal R}(\rho_A),
\end{equation*}
this identity implies
\begin{equation*}
\Tr\!\left[
T(c)\,
\mathcal C_S(\rho_A)
\right]
=
\sigma_m^{|c|}
\Tr\!\left[
R_A(c)\rho_A
\right].
\end{equation*}

Joint access to \(S_1\cdots S_m\) therefore always reveals at least the
commuting \(R\)-basis moments of the input, equivalently, its diagonal
statistics in the product \(R\) basis. These moments constitute a
guaranteed component of the joint-signal information, but need not
exhaust it. Nonidentity terms in an admissible signal marginal
\(\rho_S\) can make additional, generally noncommuting, input moments
accessible from the complete signal register, and for some resources
these moments are sufficient to determine the full input state
\(\rho_A\).

This is consistent with our operational requirement, which imposes
perfect concealment on each individual \(S_i\), but not on arbitrary
collections of signal outputs. Related leakage from unauthorized
subsystems in fixed encrypted-cloning constructions has been analyzed
in Refs.~\cite{gianini2026leak,gianini2026full,bai2026}. In the canonical
single-qubit Bell-pair protocol, the odd-\(m\) leakage is carried by
\(Y^{\otimes m}\)~\cite{gianini2026leak}; more general access-structure
formulations interpret such partially informative non-recovering
subsystems within ramp-QSS schemes.

A direct consequence of
Theorem~\ref{thm:exact_general_resource_main} is the maximally
entangled resource class.

\begin{corollary}[Maximally entangled resources]
\label{cor:maximal_resource_main}
For every $m\ge2$, $k\ge1$, and fixed sector-wise two-Pauli encoder,
any pure resource satisfying
\[
\rho_S
=
\frac{I_S}{2^{mk}}
\]
supports exact recovery of the complete $k$-qubit input from every
$S_i\mathcal N$ and perfect individual concealment of every $S_i$.
For even $m$, this condition is also necessary.
\end{corollary}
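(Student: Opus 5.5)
The corollary follows directly from Theorem~\ref{thm:exact_general_resource_main}, so the plan is to check that the maximally mixed marginal fits the criterion in Eq.~\eqref{eq:main_exact_pure_resource} for both parities of $m$. We then read off the converse for even $m$ from the ``only if'' direction of the theorem.

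\emph{Sufficiency for even $m$.} If $\rho_S=I_S/2^{mk}$ with $m$ even, this is literally the even branch of Eq.~\eqref{eq:main_exact_pure_resource}. The ``if'' direction of Theorem~\ref{thm:exact_general_resource_main} then gives exact recovery from every $S_i\mathcal N$, and the final clause of the theorem gives perfect individual concealment, $\mathcal C_i(\rho_A)=I_{S_i}/2^k$.

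\emph{Sufficiency for odd $m$.} Choose the coefficients $\alpha_0=1$ and $\alpha_c=0$ for all $c\neq0$. Since $T(0)=I_S$, the odd-$m$ expression reduces to $I_S/2^\nu$. This is positive semidefinite and has real coefficients with $\alpha_0=1$, so it is an admissible instance of the odd branch. The same two conclusions of the theorem, recovery and concealment, then apply.

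\emph{Necessity for even $m$.} Suppose exact recovery holds from every $S_i\mathcal N$. The ``only if'' direction of Theorem~\ref{thm:exact_general_resource_main} forces $\rho_S$ to have the even-branch form, namely $\rho_S=I_S/2^\nu$. By purity of $\ket{\mathcal R}$, this is equivalent to maximal entanglement across $S:\mathcal N$.

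There is no genuine obstacle here; all the substantive work is contained in the theorem. The only point requiring care is that the odd-$m$ specialization is a legitimate choice of the $\alpha_c$, that is, that taking $\alpha_c=0$ for $c\neq0$ is permitted and that the resulting operator is positive. Both are immediate.
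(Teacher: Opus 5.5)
Your proposal is correct and follows the paper's own proof. The paper likewise notes that $\rho_S=I_S/2^{mk}$ has all nonidentity Pauli coefficients equal to zero, which is your choice $\alpha_c=0$ for $c\neq0$ in the odd branch. It then invokes Theorem~\ref{thm:exact_general_resource_main} for recovery and concealment for every $m\ge2$, and uses its ``only if'' direction for necessity when $m$ is even.
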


The condition $\rho_S=I_S/2^{mk}$ is equivalent to maximal
entanglement across $S:\mathcal N$. Since
$\dim S=\dim\mathcal N=2^{mk}$, uniqueness of purification implies
that every such resource is related to a canonical maximally
entangled state by a unitary acting only on $\mathcal N$,
\[
\ket{\mathcal R}_{S\mathcal N}
=
\left(
I_S\otimes V_{\mathcal N}
\right)
\ket{\Phi_{2^{mk}}}_{S\mathcal N}.
\]
The Bell-pair resource used in the canonical protocol
~\cite{yamaguchi2026encrypted} is therefore one representative of an
entire key-side-unitary equivalence class of valid maximally entangled
resources.

The same maximally entangled resource can also support different
factorizations of the signal register.

\begin{corollary}[Relabelling of maximally entangled resources]
\label{cor:hierarchical_main}
Let $\ket{\mathcal R}_{S\mathcal N}$ be maximally entangled across
$S:\mathcal N$, with
$\dim S=\dim\mathcal N=2^\nu$. For any factorization
$\nu=m'k'$ with $m'\ge2$, a relabelling of the $\nu$ signal qubits
into $k'$ sectors of size $m'$, together with the corresponding
sector-wise encoder and recovery maps, yields a valid $(m',k')$
encrypted-cloning realization.
\end{corollary}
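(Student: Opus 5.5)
The plan is to reduce the claim directly to Corollary~\ref{cor:maximal_resource_main}. That corollary holds for every $m\ge2$, $k\ge1$ and every sector-wise two-Pauli encoder. Its only hypothesis on the resource is $\rho_S=I_S/2^{mk}$, and this hypothesis makes no reference to how the signal qubits are grouped into outputs or sectors. So I will not touch the resource. I will only rebuild the combinatorial data: the output registers, the sectors, the encoder and the decoders. Then I check that the hypotheses of Corollary~\ref{cor:maximal_resource_main} hold in the new labelling.

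\emph{Step 1 (relabelling).} Fix a factorization $\nu=m'k'$ with $m'\ge2$. Choose any bijection between the $\nu$ signal qubits and index pairs $(j,\ell)$ with $j\in\{1,\ldots,k'\}$ and $\ell\in\{1,\ldots,m'\}$. Set $\mathsf S'_j=(S'_{j,1},\ldots,S'_{j,m'})$ and $S'_\ell=S'_{1,\ell}\cdots S'_{k',\ell}$. Then $S=S'_1\cdots S'_{m'}=\mathsf S'_1\cdots\mathsf S'_{k'}$, which is exactly the double decomposition of Sec.~\ref{subsec:main_architecture} with $(m,k)$ replaced by $(m',k')$. The input register becomes $A'=A'_1\cdots A'_{k'}$. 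The key register $\mathcal N$ is left as it is; it still has $\nu$ qubits, matching $\nu=m'k'$.

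\emph{Step 2 (invariance of the hypothesis).} A permutation of tensor factors is a unitary $\Pi$ on $S$. Conjugating the identity gives $\Pi I_S\Pi^\dagger=I_S$, so the relabelled marginal is still $I_S/2^{m'k'}$. Purity of the resource is also unchanged.

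\emph{Step 3 (conclusion).} Apply Corollary~\ref{cor:maximal_resource_main} with parameters $(m',k')$ and the encoder $U_{P,Q}^{(m',k')}$ built on the sectors $\mathsf S'_j$. It yields CPTP maps $\mathcal D'_\ell:S'_\ell\mathcal N\to A'$ with exact recovery of the full $k'$-qubit input, together with perfect concealment of each $S'_\ell$.

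I do not expect a genuine obstacle. The one point to check is that Theorem~\ref{thm:exact_general_resource_main}, and hence the corollary, was proved for an arbitrary identification of $S$ with the $m\times k$ grid, and not for a privileged ordering. Since the theorem is stated for an arbitrary pure resource on $S\mathcal N$, this holds. An explicit decoder is also available if wanted: the key-side unitary form $(I_S\otimes V_{\mathcal N})\ket{\Phi_{2^\nu}}$ shows that one can first apply $V_{\mathcal N}^\dagger$ and then use the canonical Bell-pair decoder in the new labelling.
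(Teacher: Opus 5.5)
Your proposal is correct and follows essentially the paper's route. The paper also regroups the signal qubits (via $\ell=(j-1)m'+i$) and invokes Corollary~\ref{cor:maximal_resource_main}, noting that the resource stays maximally entangled across the unchanged cut $S:\mathcal N$. Its only extra step, a key-side unitary bringing the resource to canonical Bell-pair form, is, as you observe, optional for the logic and useful only for writing an explicit decoder.
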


Thus, within the maximally entangled class, the same resource can
support several output--input factorizations of a fixed total signal
size $\nu$. Explicit relabellings and recovery constructions are given
in the Supplemental Material~\cite{supplemental}.
\subsection{Single-output boundary and architecture-independent constraints}
\label{subsec:main_single_clone}

Theorem~\ref{thm:exact_general_resource_main} concerns the
multi-output regime $m\ge2$. The single-output case separates a
general information-theoretic obstruction from the stronger restriction
imposed by the sector-wise encoder. Set $d=2^k$ and regard
$A$, $S$, and $\mathcal N$ as $d$-dimensional systems. For
$m=1$ and $k=1$, corresponding to the single-qubit limit of the
original encrypted-cloning setting~\cite{yamaguchi2026encrypted},
perfect concealment and exact recovery cannot coexist. For larger
input dimension, however, the unrestricted task becomes possible.

\begin{proposition}[Global single-output characterization]
\label{prop:single_clone_global_main}
For $m=1$, let the encoder be an arbitrary unitary on $AS$ and let
$\ket{\mathcal R}_{S\mathcal N}$ be an arbitrary pure resource.
Perfect concealment of $S$ together with exact recovery of the complete
$k$-qubit input from $S\mathcal N$ is possible if and only if the
encoded Choi state
$\ket{\Omega}_{\widetilde AAS\mathcal N}$ is an
$\operatorname{AME}(4,d)$ state. Every such realization satisfies
\[
\rho_S
=
\rho_{\mathcal N}
=
\frac{I_d}{d}.
\]
Consequently, the task is impossible for $k=1$ and achievable for
every $k\ge2$.
\end{proposition}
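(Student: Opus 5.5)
We work entirely with the encoded Choi state $\ket{\Omega}_{\widetilde A A S\mathcal N}$, a pure state on four $d$-dimensional systems ($d=2^k$). The plan is to translate concealment and recovery into statements about its two-party marginals, and then to show these are exactly the AME conditions. An $\operatorname{AME}(4,d)$ state is one for which every bipartition into two pairs gives maximally mixed marginals. Since $\Omega$ is pure, it suffices to check the three pairs containing $\widetilde A$: $\widetilde A\mathcal N$, $\widetilde A S$ and $\widetilde A A$. The complementary pairs then have the same spectra.

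\emph{Necessity.} Concealment of $S$ gives $\Omega_{\widetilde A S}=\frac{I}{d}\otimes\Omega_S$. For $m=1$ the complement $S_{\bar 1}$ is trivial, so the decoupling criterion~\eqref{eq:main_recovery_decoupling} reads $I(\widetilde A:A)_\Omega=0$, i.e. $\Omega_{\widetilde A A}=\frac{I}{d}\otimes\Omega_A$. The encoder does not touch $\mathcal N$, so $\Omega_{\widetilde A\mathcal N}=\frac{I}{d}\otimes\rho_{\mathcal N}$. The key step is to apply purity to each of these.
\begin{itemize}
\item Purity gives $S(\widetilde A A)=S(S\mathcal N)$ and $S(S)=S(\widetilde A A\mathcal N)$.
\item By~\eqref{eq:main_capacity_saturation}, $I(\widetilde A\rangle S\mathcal N)=k$, so $S(S\mathcal N)=S(\widetilde AS\mathcal N)+k=S(A)+k$.
\item Together with $S(\widetilde A A)=k+S(A)$ from decoupling, this is consistent but not yet sufficient, so we also use concealment. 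Combining $\Omega_{\widetilde A S}=\frac Id\otimes\Omega_S$ with purity gives $S(A\mathcal N)=k+S(S)$.
\item Since $S(A\mathcal N)\le 2k$, we get $S(S)\le k$. In the other direction, $S(\widetilde A A)=k+S(A)$ equals $S(S\mathcal N)\le S(S)+S(\mathcal N)\le S(S)+k$, so $S(A)\le S(S)$.
\end{itemize}
To close the loop, note that $\widetilde A$ is maximally entangled with $A S\mathcal N$ while decoupled from each of $A$, $S$ and $\mathcal N$ individually. Perfect recoverability from $S\mathcal N$, with $S$ alone carrying nothing, is a $((2,3))$-type threshold structure. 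The known bound for such schemes is that every share has dimension at least that of the secret, and it is saturated here since every share has dimension $d$. Saturation forces every single-party marginal to be maximally mixed, which gives $S(A)=S(S)=S(\mathcal N)=k$ and hence $\rho_S=\rho_{\mathcal N}=I/d$. Substituting back, all three pair marginals containing $\widetilde A$ become $I/d^2$, so $\Omega$ is AME. I expect this saturation step to be the main obstacle. I would prove it directly by the entropy chain above, using $I(\widetilde A:S\mathcal N)=2k$ and $I(\widetilde A:S)=I(\widetilde A:\mathcal N)=0$. Strong subadditivity then gives $I(\widetilde A:S|\mathcal N)=2k$, which is at most $2\min(S(S),S(\mathcal N))$, forcing $S(S)=S(\mathcal N)=k$. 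The same argument with $A$ in place of $S$ gives $S(A)=k$.

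\emph{Sufficiency.} Given an $\operatorname{AME}(4,d)$ state $\Omega$, its $\widetilde A$ marginal is $I/d$, so it is the Choi state of some isometry from $\widetilde A$ into $AS\mathcal N$. Because $\Omega_{S\mathcal N}$ has full rank $d^2$ and is maximally mixed, $\Omega$ is maximally entangled across $\widetilde A A : S\mathcal N$. Taking $\ket{\mathcal R}$ to be a maximally entangled state on $S\mathcal N$, uniqueness of purification gives a unitary $U$ on $AS$ with $(I\otimes U\otimes I)(\Phi^+\otimes\mathcal R)=\Omega$. To see this, note that $\Omega$ and $\Phi^+\otimes\mathcal R$ are both purifications of $I/d\otimes\rho_{\mathcal N}$ on $\widetilde A\mathcal N$, with purifying system $AS$ of the same dimension $d^2$. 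The AME marginals then give concealment ($\Omega_{\widetilde AS}=I/d^2$) and decoupling ($\Omega_{\widetilde AA}=I/d^2$), hence recovery by~\eqref{eq:main_recovery_decoupling}.

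Finally, $\operatorname{AME}(4,2)$ does not exist (Higuchi--Sudbery), so the task is impossible for $k=1$. For $k\ge2$, $d=2^k\ge4$, and $\operatorname{AME}(4,d)$ exists, for instance the state built from two orthogonal Latin squares of order $d$, or tensor products of the $\operatorname{AME}(4,4)$ state from two-qubit blocks when $d=4^r$. For odd $k$ we can use $\operatorname{AME}(4,8)$ times copies of $\operatorname{AME}(4,4)$, since AME$(4,\cdot)$ states are closed under tensor product. This yields achievability for every $k\ge2$.
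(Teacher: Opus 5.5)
Your proof is correct in substance, and its necessity half rests on the same idea as the paper's: a conditional mutual information saturating its bound at $2k$. The paper orders the chain rule as $I(\widetilde A:S\mathcal N)=I(\widetilde A:S)+I(\widetilde A:\mathcal N|S)$. Concealment then gives $2k=I(\widetilde A:\mathcal N|S)\le 2S(\rho_{\mathcal N})$, so $\rho_{\mathcal N}=I/d$ directly. Purity of the resource gives $\rho_S=I/d$, unitarity on $AS$ gives $\Omega_{AS}=I/d^2$, and concealment plus decoupling then make $\Omega_{\widetilde AS}$ and $\Omega_{\widetilde AA}$ equal to $I/d^2$. You use the other ordering and first obtain $\Omega_S=I/d$. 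From there your own identity $S(A\mathcal N)=k+S(\Omega_S)=2k$ already gives $\Omega_{A\mathcal N}=I/d^2$, hence $\Omega_A=\rho_{\mathcal N}=I/d$. Apart from that identity, the $((2,3))$ share-size discussion, the other preliminary bullets and the ``same argument with $A$'' can all be dropped.

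The real difference is in sufficiency and achievability. The paper only reads off concealment and decoupling from the AME marginals of a given realization. It then exhibits one explicit finite-field encoder $U_\lambda$ with an explicit decoder. Your uniqueness-of-purification argument on $\widetilde A\mathcal N$ proves more: every $\operatorname{AME}(4,d)$ state is the encoded Choi state of a maximally entangled resource and some unitary on $AS$. Mere existence of $\operatorname{AME}(4,2^k)$ therefore suffices. Orthogonal Latin squares over $\mathbb F_{2^k}$ cover every $k\ge2$, and the paper's labels $a+s$, $a+\lambda s$ are one such pair, so your tensor-product patch for odd $k$ is unnecessary. The price of your route is that it gives no explicit encoder, and no decoder beyond the abstract one from the decoupling lemma.

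There are two slips to fix.

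First, $I(\widetilde A:S|\mathcal N)\le 2S(\mathcal N)$ is not a valid general inequality. Conditional mutual information is bounded by the entropies of its two arguments, not of the conditioning system; take $\mathcal N$ trivial to see this fail. Here it holds only because $I(\widetilde A:S)=I(\widetilde A:\mathcal N)=0$ force $I(\widetilde A:S|\mathcal N)=I(\widetilde A:\mathcal N|S)$, which is the paper's ordering. Also, the equality $I(\widetilde A:S|\mathcal N)=2k$ comes from the chain rule, not from strong subadditivity.

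Second, your $S(S)$ is the entropy of the encoded marginal $\Omega_S$, which in general differs from the resource marginal $\rho_S$. The conclusion $\rho_S=I/d$ must come from $\rho_{\mathcal N}=\Omega_{\mathcal N}=I/d$ together with purity of $\ket{\mathcal R}$, not from $S(\Omega_S)=k$.
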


For $k=1$, the proposition would require an
$\operatorname{AME}(4,2)$ state, which does not exist
~\cite{HiguchiSudbery2000}. For $k\ge2$, $d=2^k$ is a prime-power
dimension admitting suitable $\operatorname{AME}(4,d)$ constructions
~\cite{HelwigEtAl2012}. An explicit finite-field realization and
recovery map are given in the Supplemental Material
~\cite{supplemental}. This AME characterization is consistent with
the established connection between pure-state information sharing and
AME structures; here its role is specifically to identify the
$m=1$ boundary of the encrypted-cloning task. It is distinct from the
$\operatorname{AME}(5,d)$ construction discussed by Lim and Lo
~\cite{lim2026ame}, which concerns a particular two-output
encrypted-cloning protocol and its relation to quantum secret sharing.

The fixed sector-wise two-Pauli encoder is more restrictive.

\begin{proposition}[Single-output obstruction of the sector-wise encoder]
\label{prop:single_clone_leakage_main}
For $m=1$, every encoder $U_{P,Q}^{(1,k)}$ fails perfect individual
concealment for every pure resource
$\ket{\mathcal R}_{S\mathcal N}$ and every $k\ge1$.
\end{proposition}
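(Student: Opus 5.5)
For $m=1$ each sector $\mathsf S_j$ is a single qubit $S_{j,1}$, and the unique signal output is $S_1=S$. The encoder therefore factorizes into $k$ two-qubit gates
\[
U_{P,Q;j}^{(1)}=e^{-i\pi P_{A_j}P_{S_j}/4}\,e^{-i\pi Q_{A_j}Q_{S_j}/4}.
\]
The plan is to exhibit one explicit input moment that reaches $S$ with its value intact, whatever the resource. This is the $m=1$ instance of the joint-signal leakage identity stated after Theorem~\ref{thm:exact_general_resource_main}. With $m=1$ we have $\sigma_1=(-1)^0=1$, so the identity would read
\[
U^\dagger\bigl(I_A\otimes T(c)\bigr)U=R_A(c)\otimes I_S .
\]
Then $\Tr[T(c)\,\mathcal C_1(\rho_A)]=\Tr[R_A(c)\rho_A]$ for every input and every pure resource, since the resource enters only through $\Tr\proj{\mathcal R}=1$. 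Taking $c=e_1$, the inputs $\proj{r_\pm}_{A_1}\otimes\rho'$, with $\ket{r_\pm}$ the $\pm1$ eigenstates of $R_{A_1}$, give expectation values $\pm1$ for $T_1=R_{S_1}$. Hence $\mathcal C_1$ is not constant and concealment fails.

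The one step to check is the conjugation identity for a single sector with $m=1$. I would verify it directly rather than cite it, because the earlier statement was made in the odd-$m$ discussion and I want to confirm it covers $m=1$. Write $V_P=e^{-i\pi P_AP_S/4}$. Then $V_P^\dagger O V_P=O$ whenever $O$ commutes with $P_AP_S$, and $V_P^\dagger O V_P=iP_AP_S\,O$ (up to sign convention) when it anticommutes. Start from $I_A\otimes R_S$. It anticommutes with $Q_AQ_S$, so the inner conjugation gives $\pm i\,Q_A\otimes Q_SR_S\propto Q_A\otimes P_S$. That operator anticommutes with $P_AP_S$, so the outer conjugation gives $\propto P_AQ_A\otimes I_S\propto R_A\otimes I_S$. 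Hermiticity and unitarity force the overall phase to be $\pm1$. The same conclusion can be reached by noting that $\bigl[U,\,R_A R_S\bigr]=0$, because $R_AR_S$ commutes with both $P_AP_S$ and $Q_AQ_S$; conjugating $R_S=R_A\cdot(R_AR_S)$ then reduces the question to how $R_A\otimes I$ transforms.

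The main obstacle is fixing the sign bookkeeping, and the sign turns out to be irrelevant. A factor $\sigma=\pm1$ yields $\Tr[R_S\,\mathcal C_1(\rho)]=\sigma\Tr[R_A\rho]$, which still differs between the two eigenstate inputs, so the argument does not depend on it. Products over sectors follow from the tensor factorization of $U$, though a single sector already suffices. Finally, the argument holds for every ordered pair $(P,Q)$ because it used only the anticommutation relations among $P$, $Q$ and $R$.
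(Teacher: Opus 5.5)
Your proposal is correct and follows the paper's route. The single-sector identity $U_{P,Q;j}^{(1)\dagger}(I_{A_j}\otimes R_{S_{j,1}})U_{P,Q;j}^{(1)}=R_{A_j}\otimes I_{S_{j,1}}$ makes $\Tr[R_{S_{j,1}}\mathcal C_1(\rho_A)]=\Tr[R_{A_j}\rho_A]$ independent of the resource, so $\mathcal C_1$ cannot be constant. Your explicit check of that identity is a useful addition. Strictly, with $U=V_PV_Q$ the $V_P$ conjugation acts first in $U^\dagger(\cdot)U$, but $P_AP_S$ and $Q_AQ_S$ commute for $m=1$, so the order is immaterial and both orders give $+R_A\otimes I_S$.
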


Indeed, with $R=iPQ$, each sector obeys
\[
U_{P,Q;j}^{(1)\dagger}
\left(
I_{A_j}\otimes R_{S_{j,1}}
\right)
U_{P,Q;j}^{(1)}
=
R_{A_j}\otimes I_{S_{j,1}} .
\]
Thus an observable of the signal reproduces an input expectation value
independently of the resource state, so the signal channel cannot be
constant. The obstruction is therefore architectural: it persists for
the sector-wise encoder for every $k$, although the unrestricted
single-output task is possible for $k\ge2$. The complete Clifford
derivation is given in the Supplemental Material
~\cite{supplemental}.

We next remove the sector-wise Pauli
assumptions. The only structural restriction retained is that the
encoder is unitary on $AS$ and acts trivially on the common key
$\mathcal N$.

\begin{proposition}[Architecture-independent all-output constraint]
\label{prop:all_clone_entropy_main}
Let $m\ge2$ and let $\ket{\mathcal R}_{S\mathcal N}$ be pure.
Allow an arbitrary unitary encoder on $AS$ acting trivially on
$\mathcal N$. If the complete $k$-qubit input is exactly recoverable
from $S_i\mathcal N$ for every $i=1,\ldots,m$, then
\begin{equation}
S(\rho_{\mathcal N})
=
(m-1)k+S(A)_\Omega
\ge
(m-1)k .
\label{eq:main_all_clone_entropy}
\end{equation}
Moreover, exact all-output recovery already implies perfect
individual concealment,
\[
\mathcal C_i(\rho_A)
=
\frac{I_{S_i}}{2^k},
\qquad
i=1,\ldots,m ,
\]
for every input state $\rho_A$.
\end{proposition}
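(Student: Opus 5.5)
The plan is to work entirely with the pure encoded Choi state $\ket{\Omega}_{\widetilde AAS\mathcal N}$ and to use only two facts. The first is the recovery--decoupling criterion, Eq.~\eqref{eq:main_recovery_decoupling}, which gives $I(\widetilde A:AS_{\bar i})_\Omega=0$ for every $i$. The second is the product structure $\Omega_{\widetilde A\mathcal N}=I_{\widetilde A}/2^k\otimes\rho_{\mathcal N}$, which holds because the encoder acts trivially on $\mathcal N$. Since the encoder is otherwise arbitrary, no Pauli structure is used.

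\emph{Step 1 (local entropies of one authorized set).} For fixed $i$, exact recovery gives $I(\widetilde A:S_i\mathcal N)=2k$, and $I(\widetilde A:\mathcal N)=0$. The chain rule then gives
\[
I(\widetilde A:S_i|\mathcal N)=2k .
\]
Write this as $S(S_i|\mathcal N)-S(S_i|\widetilde A\mathcal N)$. Since $S_i$ has $k$ qubits, $S(S_i|\mathcal N)\le k$ and $-S(S_i|\widetilde A\mathcal N)\le k$, so both inequalities are tight. In particular
\[
k=S(S_i|\mathcal N)\le S(S_i)\le k,
\]
so $\Omega_{S_i}=I_{S_i}/2^k$, $I(S_i:\mathcal N)=0$, and $S(S_i\mathcal N)=S(\mathcal N)+k$. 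Purity of $\Omega$ then yields
\[
S(AS_{\bar i})=S(\widetilde AS_i\mathcal N)=S(\mathcal N),\qquad S(AS)=S(\widetilde A\mathcal N)=k+S(\mathcal N).
\]

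\emph{Step 2 (concealment).} Because $m\ge2$, each $S_i$ lies inside $AS_{\bar j}$ for any $j\ne i$. Monotonicity of mutual information applied to $I(\widetilde A:AS_{\bar j})=0$ gives $I(\widetilde A:S_i)=0$, so $\Omega_{\widetilde AS_i}=I/2^k\otimes\Omega_{S_i}$. By the Choi correspondence, $\mathcal C_i$ is then the constant channel with output $\Omega_{S_i}$, and Step 1 identifies this output as $I_{S_i}/2^k$. This proves the second assertion.

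\emph{Step 3 (the entropy identity), which I expect to be the main obstacle.} By purity, $S(\mathcal N)=S(\widetilde AAS)$ and $S(A)_\Omega=S(\widetilde AS\mathcal N)$. So Eq.~\eqref{eq:main_all_clone_entropy} is equivalent to $S(AS_{\bar i})-S(A)=(m-1)k$. This says that $S_{\bar i}$ is maximally mixed and uncorrelated with $A$. I would establish this by peeling off one signal at a time. For $j\ne i$, the decoupling $I(\widetilde A:AS_{\bar j})=0$, restricted to the subsystem $AS_{\overline{ij}}S_i$, should be combined with the tight conditional relations of Step 1 written for the pair $(S_j,\mathcal N)$. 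The aim is to show that removing $S_j$ from $AS_{\bar i}$ lowers the entropy by exactly $k$, i.e. $S(AS_{\overline{\{i,j\}}})=S(AS_{\bar i})-k$. Iterating over the $m-1$ signals in $S_{\bar i}$ then gives $S(AS_{\bar i})=S(A)+(m-1)k$, and hence $S(\mathcal N)=(m-1)k+S(A)_\Omega\ge(m-1)k$.

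The delicate point is justifying each single-step drop of exactly $k$. The equalities in Step 1 control $S_j$ only jointly with $\mathcal N$, not jointly with $A$ and the other signals. My first attempt would be to apply the Step~1 saturation argument with $\widetilde A$ replaced by the enlarged reference $\widetilde AS_{\overline{\{i,j\}}}$ or $\widetilde A A$, and to check whether the tightness conditions persist. If they do not, I would fall back on the explicit structure from uniqueness of purification: $S_j\mathcal N\simeq \widetilde A'\otimes E_j$, where $\widetilde A'$ purifies $\widetilde A$ and $E_j$ purifies $AS_{\bar j}$. I would then compare these decompositions for two different values of $j$.
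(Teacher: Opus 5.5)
Your Steps~1 and~2 are correct and essentially coincide with the paper: saturating $I(\widetilde A:S_i|\mathcal N)_\Omega=2k$ is the paper's opening step, and concealment via $I(\widetilde A:S_i)_\Omega\le I(\widetilde A:AS_{\bar j})_\Omega=0$ together with $\Omega_{S_i}=I_{S_i}/2^k$ is its closing step. The gap is Step~3. It carries the main content of the proposition, the entropy identity, and you leave it as an aim with two tentative strategies rather than a proof. Your diagnosis of the obstacle is also mistaken: Step~1 already controls $S_j$ jointly with $A$ and all other signals. The two identities you derived, $S(AS)_\Omega=k+S(\mathcal N)$ and $S(AS_{\bar j})_\Omega=S(\mathcal N)$, hold for every $j$, and subtracting them gives
\[
S(S_j|AS_{\bar j})_\Omega
=
S(AS)_\Omega-S(AS_{\bar j})_\Omega
=
k,
\]
the maximal conditional entropy of a $2^k$-dimensional system. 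This is the missing idea. The paper turns it into $D\bigl(\Omega_{AS}\,\big\|\,\tfrac{I_{S_j}}{2^k}\otimes\Omega_{AS_{\bar j}}\bigr)=k-S(S_j|AS_{\bar j})_\Omega=0$, i.e.\ $\Omega_{AS}=\tfrac{I_{S_j}}{2^k}\otimes\Omega_{AS_{\bar j}}$ for every $j$. Combining these factorizations over the $m$ disjoint outputs yields $\Omega_{AS}=\Omega_A\otimes I_S/2^{mk}$, so $k+S(\mathcal N)=S(AS)_\Omega=S(A)_\Omega+mk$.

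Your peeling plan also closes at once from this. For any set $T$ of output labels with $j\notin T$, write $S_T=\bigotimes_{\ell\in T}S_\ell$. Then $AS_T\subseteq AS_{\bar j}$, and strong subadditivity gives
\[
k
\ \ge\
S(S_j|AS_T)_\Omega
\ \ge\
S(S_j|AS_{\bar j})_\Omega
=
k .
\]
So each signal deleted from $AS_{\bar i}$ lowers the entropy by exactly $k$; your single step is the case $T=\overline{\{i,j\}}$. Iterating gives $S(AS_{\bar i})_\Omega=S(A)_\Omega+(m-1)k$, which is the identity you wanted.

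Neither fallback is needed, and the first would not work as stated. Step~1 hinges on $I(\widetilde A:\mathcal N)_\Omega=0$, which has no analogue for the enlarged references. For example, your Steps~1 and~2 give $I(\widetilde AS_\ell:\mathcal N)_\Omega=2k$ for every single signal $S_\ell$.
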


This result is independent of the encoding architecture. Recovery from
the alternative authorized subsystems enforces the decoupling condition
in Eq.~\eqref{eq:main_recovery_decoupling}; for $m\ge2$, this also
decouples the reference from each individual $S_i$ and forces $S_i$
to be maximally mixed. The resulting entropy constraint is related to
share-size bounds in quantum secret sharing
~\cite{Gottesman2000,ImaiEtAl2005}, but here it constrains the common
key because the same $k$-qubit state must be recoverable from every
$S_i\mathcal N$. The proof is given in the Supplemental Material
~\cite{supplemental}.

Since the resource is pure,
$S(\rho_{\mathcal N})=S(\rho_S)$ is the entanglement entropy across
the signal--key cut $S:\mathcal N$. Thus
Eq.~\eqref{eq:main_all_clone_entropy} gives
\[
0
\le
mk-S(\rho_S)
=
k-S(A)_\Omega
\le
k .
\]
Any pure resource supporting exact all-output recovery therefore
contains at least $(m-1)k$ ebits across $S:\mathcal N$.

The same argument gives a stronger constraint on the full spectrum of
the signal marginal.

\begin{corollary}[Architecture-independent spectral constraint]
\label{cor:main_spectral_constraint}
Under the assumptions of
Proposition~\ref{prop:all_clone_entropy_main}, there exists a
$k$-qubit density operator $\sigma$ such that
\[
\rho_S
\simeq
\sigma
\otimes
\frac{I_{2^{(m-1)k}}}{2^{(m-1)k}},
\]
where $\simeq$ denotes unitary equivalence. Equivalently, every
distinct eigenvalue of $\rho_S$ has multiplicity divisible by
$2^{(m-1)k}$.
\end{corollary}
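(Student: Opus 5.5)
We work with the encoded Choi state $\ket{\Omega}_{\widetilde AAS\mathcal N}$ and aim to show that $\rho_S$ is unitarily equivalent to $\rho_{\widetilde A A}$-type data tensored with a maximally mixed block. Since the encoder $U$ acts only on $AS$, we have $\Omega_{AS}=U(\tfrac{I_A}{2^k}\otimes\rho_S)U^\dagger$. Consequently the spectrum of $\Omega_{AS}$ equals the spectrum of $\tfrac{I_A}{2^k}\otimes\rho_S$. By purity of $\Omega$, this spectrum also equals the nonzero spectrum of $\Omega_{\widetilde A\mathcal N}=\tfrac{I_{\widetilde A}}{2^k}\otimes\rho_{\mathcal N}$. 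This is consistent, though not new information, since $\rho_S$ and $\rho_{\mathcal N}$ are isospectral. The useful input must therefore come from the recovery conditions.

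\textbf{Step 1: decoupling of the reference from each signal.} By Eq.~\eqref{eq:main_recovery_decoupling}, recovery from $S_1\mathcal N$ gives $\Omega_{\widetilde A A S_{\bar 1}}=\tfrac{I_{\widetilde A}}{2^k}\otimes\Omega_{AS_{\bar 1}}$. As in the proof of Proposition~\ref{prop:all_clone_entropy_main}, recovery from $S_2\mathcal N$ means $S_2\mathcal N$ holds a purification of $\widetilde A$ by a decoding isometry. Because $S_1\subset S_{\bar 2}$, the system $S_1$ is decoupled from $\widetilde A$ and is in fact maximally mixed. We will use this stronger form: recovery from $S_i\mathcal N$ means there is an isometry on $S_i\mathcal N$ producing $\ket{\Phi^+}_{\widetilde A A'}\otimes\ket{\xi}_{E_iAS_{\bar i}}$.

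\textbf{Step 2: factoring out $(m-1)k$ qubits of maximal mixedness.} We use the structure from recovery via $S_m\mathcal N$. The state on $AS_1\cdots S_{m-1}$ equals $\xi_{AS_{\bar m}}$, which is independent of $\widetilde A$. Each $S_i$ with $i<m$ must nevertheless be maximally mixed, and the key task is to strengthen this to joint statements. The natural route is entropic. By Proposition~\ref{prop:all_clone_entropy_main}, $S(\rho_S)=S(\rho_{\mathcal N})=(m-1)k+S(A)_\Omega$. Next, $S(\Omega_{AS})=k+S(\rho_S)$, and since $\Omega_{AS}$ is purified by $\widetilde A\mathcal N$, this is consistent. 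The crucial observation is that $\Omega_{AS}$ is unitarily equivalent to $\tfrac{I_A}{2^k}\otimes\rho_S$, while by recovery from $S_m\mathcal N$ we can write $\Omega_{AS}=\Omega_{AS_{\bar m}S_m}$. Tracing out and using decoupling, we expect to show $\Omega_{AS_{\bar m}}$ has the form $\sigma'\otimes I/2^{(m-1)k}$. More precisely, the equality case of the entropy identity in Proposition~\ref{prop:all_clone_entropy_main} should force $\Omega_{AS_{\bar m}}\cong \sigma_A\otimes I_{S_{\bar m}}/2^{(m-1)k}$, with $\sigma=\Omega_A$. Equality conditions of subadditivity here yield exact tensor product structure, not just entropies.

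\textbf{Step 3: transferring to $\rho_S$.} Apply the decoding isometry $W$ on $S_m\mathcal N$. The global pure state becomes $\ket{\Phi^+}_{\widetilde AA'}\otimes\ket{\xi}_{EAS_{\bar m}}$. Then $\rho_{\widetilde A\mathcal N}$-side data, namely the spectrum of $\Omega_{AS}$, equals the spectrum of $\Omega_{\widetilde AE}$ complement, i.e.\ the spectrum of $\tfrac{I}{2^k}\otimes\xi_{AS_{\bar m}}$. Combining this with Step 2 gives that the spectrum of $\tfrac{I_A}{2^k}\otimes\rho_S$ equals the spectrum of $\tfrac{I}{2^k}\otimes\sigma\otimes\tfrac{I}{2^{(m-1)k}}$. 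Cancelling the common $\tfrac{I}{2^k}$ factor multiset-wise, by comparing multiplicities, yields $\rho_S\simeq\sigma\otimes I/2^{(m-1)k}$. The divisibility statement on multiplicities follows immediately.

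The main obstacle is Step 2, namely upgrading the marginal statements (each $S_i$ maximally mixed, $\widetilde A$ decoupled) to the joint factorization of $\xi_{AS_{\bar m}}$. We would try first to show that $S_{\bar m}$ together with $\widetilde A$ is maximally mixed. This uses that recovery from each $S_i\mathcal N$, $i<m$, makes $\widetilde A$ and $S_i$ jointly purifiable by $S_{\bar i}$, iterating via strong subadditivity equality (a Markov chain structure) to get $\xi_{AS_{\bar m}}=\xi_A\otimes I/2^{(m-1)k}$ up to a unitary on $S_{\bar m}$.
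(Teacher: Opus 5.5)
Your closing step, cancelling the common $\frac{I}{2^k}$ factor by comparing multiplicities, is exactly how the paper finishes. The gap is that the proposal never establishes the identity this step needs, and the bridge you use for it in Step~3 is incorrect.

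Step~2 is left as an expectation. In Step~3 you identify the spectrum of $\Omega_{AS}$ with that of $\frac{I}{2^k}\otimes\xi_{AS_{\bar m}}$, calling it ``the complement of $\widetilde AE$''. After the decoding isometry $W:S_m\mathcal N\to A'E$, however, the complement of $\widetilde AE$ is $A'AS_{\bar m}$, not $AS_{\bar m}S_m$. The factor $\frac{I}{2^k}$ you wrote is the marginal of the decoded register $A'$, which is built from $S_m\mathcal N$ and is not $S_m$. The true complement of $AS$ is $\widetilde A\mathcal N$, and that only returns the trivial isospectrality of $\rho_S$ and $\rho_{\mathcal N}$ you already noted.

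Asserting that $\Omega_{AS}$ and $\frac{I}{2^k}\otimes\Omega_{AS_{\bar m}}$ are isospectral amounts to asserting that $S_m$ splits off as a maximally mixed tensor factor, which is what has to be proved. Likewise, the marginal facts you collect (each $S_i$ maximally mixed, $\widetilde A$ decoupled from $AS_{\bar i}$) do not yield joint product structure. Recovery from $S_i\mathcal N$ only decouples $\widetilde A$ from $AS_{\bar i}$. What you need is that $S_i$ itself is uncorrelated with $AS_{\bar i}$, and that requires an entropy count.

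The missing ingredient is the exact factorization $\Omega_{AS}=\Omega_A\otimes\frac{I_S}{2^{mk}}$ on all of $AS$, including $S_m$. The paper obtains it inside the proof of Proposition~\ref{prop:all_clone_entropy_main}:
saturation of $I(\widetilde A:S_i|\mathcal N)_\Omega=2k$ gives $S(S_i\mathcal N)=S(\mathcal N)+k$;
decoupling gives $S(AS_{\bar i})=S(\mathcal N)$;
purity gives $S(AS)=k+S(\mathcal N)$.
Hence $S(S_i|AS_{\bar i})_\Omega=k$ is maximal, which forces $\Omega_{AS}=\frac{I_{S_i}}{2^k}\otimes\Omega_{AS_{\bar i}}$ for each $i$. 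Iterating over $i$ gives the full product.

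If you take the statement of Proposition~\ref{prop:all_clone_entropy_main} as given, you can reach the same factorization directly from quantities you already wrote down. First, $S(AS)_\Omega=k+S(\rho_S)=mk+S(A)_\Omega$. Second, concealment gives $\Omega_{S_i}=\mathcal C_i(I_A/2^k)=I_{S_i}/2^k$. So $S(AS_1\cdots S_m)_\Omega=S(A)_\Omega+\sum_iS(S_i)_\Omega$ saturates iterated subadditivity, which forces $\Omega_{AS}=\Omega_A\otimes\bigotimes_i\Omega_{S_i}$; no Markov-chain argument is needed. Then $\frac{I_A}{2^k}\otimes\rho_S\simeq\Omega_A\otimes\frac{I_S}{2^{mk}}$, and your multiplicity comparison yields $\rho_S\simeq\sigma\otimes I_{2^{(m-1)k}}/2^{(m-1)k}$ with $\sigma$ isospectral to $\Omega_A$.
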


The spectral condition goes beyond the entropy bound by constraining how
the eigenvalues of the resource marginal are degenerate, rather than
only their total entropy. It is a necessary consequence of exact
all-output recovery, but we do not claim that it is sufficient for an
arbitrary encoder.

The fixed two-Pauli theorem sharpens these architecture-independent
constraints. For even $m$, the allowed entanglement deficit vanishes
and $\rho_S$ must be maximally mixed. For odd $m$, a deficit is
possible only when the surviving signal correlations belong to the
encoder-selected commuting algebra in
Eq.~\eqref{eq:main_exact_pure_resource}. For graph states, these two
features become complementary binary data: the cut rank determines the
signal--key entanglement, while the complete cut kernel determines
whether a rank deficiency is compatible with exact recovery.

\section{Exact characterization of graph-state resources}
\label{sec:graph_resources}

Graph states provide a natural stabilizer framework for quantum secret
sharing and access-structure problems
~\cite{MarkhamSanders2008,Sarvepalli2012,MarinMarkhamPerdrix2013},
where one typically asks which subsets can reconstruct an encoded
secret. Here the authorized subsystems are fixed in advance by the
encrypted-cloning architecture. Our question is instead which
graph-state resources support exact recovery from every prescribed
$S_i\mathcal N$ under a fixed encoder.

We now specialize the general pure-resource criterion to graph states,
while retaining unrestricted CPTP recovery maps. Throughout this
section, $m\ge2$, the encoder $U_{P,Q}^{(m,k)}$ is fixed, and
$S:\mathcal N$ denotes an oriented balanced bipartition of the
$2mk$ resource qubits. The signal register is further decomposed into
the $k$ prescribed sectors
$\mathsf S_1,\ldots,\mathsf S_k$, each containing $m$ qubits. All
binary vectors and matrices are defined over $\mathbb F_2$.

\subsection{Cut rank and encoder-compatible kernel structure}
\label{subsec:main_graph_kernel}

Let $\ket{G}_{S\mathcal N}$ be a graph state on $2mk$ qubits. With
the vertices ordered according to the signal--noise bipartition, write
\[
\Gamma=
\begin{pmatrix}
A_S & B_{S\mathcal N}\\
B_{S\mathcal N}^{\mathsf T} & A_{\mathcal N}
\end{pmatrix},
\]
where $B_{S\mathcal N}$ is the $mk\times mk$ cut matrix. The
graph-state stabilizer generators are
\[
K_v=X_v\prod_{u\in N(v)}Z_u
\]
~\cite{Gottesman1997,hein2004}. For
$x\in\mathbb F_2^{mk}$ supported on $S$, define
$K(x)=\prod_{v\in S}K_v^{x_v}$. Up to Pauli phase,
\[
K(x)\big|_S
\doteq
X^xZ^{A_Sx},
\qquad
K(x)\big|_{\mathcal N}
\doteq
Z^{B_{S\mathcal N}^{\mathsf T}x}.
\]
where $\ker(B_{S\mathcal N}^{\mathsf T})
=\{x\in\mathbb F_2^\nu:B_{S\mathcal N}^{\mathsf T}x=0\}$
denotes the binary kernel of the cut matrix. Writing
$K_S(x)=K(x)\big|_S$ for these signal-only stabilizers, the reduced
signal state is
\[
\rho_S
=
\frac{1}{2^{mk}}
\sum_{x\in\ker(B_{S\mathcal N}^{\mathsf T})}
K_S(x).
\]
The complete cut kernel therefore specifies exactly the signal-side
Pauli correlations responsible for any departure of $\rho_S$ from
maximal mixing.

The same cut matrix determines the entanglement across
$S:\mathcal N$. If
$r=\rank_{\mathbb F_2}B_{S\mathcal N}$, then
\[
\operatorname{SchmidtRank}_{S:\mathcal N}(\ket G)=2^r,
\qquad
S(\rho_S)=r,
\]
which is the standard graph-state cut-rank relation
~\cite{hein2004,hein2006entanglement,FattalEtAl2004}.
The same binary cut-rank function underlies the graph-theoretic notion
of rank width~\cite{oum2005rankwidth}. Since
\[
\dim\ker(B_{S\mathcal N}^{\mathsf T})=mk-r,
\]
full cut rank is equivalent to
$\rho_S=I_S/2^{mk}$ and therefore yields a valid resource by
Corollary~\ref{cor:maximal_resource_main}. Conversely,
Proposition~\ref{prop:all_clone_entropy_main} gives the
architecture-independent necessary condition $\rank_{\mathbb F_2}B_{S\mathcal N}
\ge
(m-1)k.$

For a rank-deficient cut, however, the rank alone does not determine
validity. The reduced-state expansion above and
Theorem~\ref{thm:exact_general_resource_main} show that one must also
identify which Pauli correlations are represented by the complete cut
kernel.

For
$x\in\ker(B_{S\mathcal N}^{\mathsf T})$, define
\[
\widetilde K(x)
=
U_{P,Q}^{(m,k)}
\left(
I_A\otimes K_S(x)
\right)
U_{P,Q}^{(m,k)\dagger}.
\]

\begin{definition}[Exceptional kernel vector]
\label{def:main_exceptional_kernel_vector}
A vector
$x\in\ker(B_{S\mathcal N}^{\mathsf T})$ is
\emph{exceptional} for $U_{P,Q}^{(m,k)}$ if
\[
\operatorname{supp}\widetilde K(x)\subseteq A.
\]
The set of all exceptional vectors, including $x=0$, is denoted by
$\mathcal E_{P,Q}^{(m,k)}(A_S)$.
\end{definition}

A kernel stabilizer already acts trivially on $\mathcal N$.
Exceptionality means that the encoder also removes its complete signal
support, leaving an operator only on the input register $A$. These are
precisely the graph-state Pauli components belonging to the exceptional
algebra allowed by
Eq.~\eqref{eq:main_exact_pure_resource}. Since distinct Pauli
operators are linearly independent, the general pure-resource
criterion reduces to
$\ker(B_{S\mathcal N}^{\mathsf T})
\subseteq
\mathcal E_{P,Q}^{(m,k)}(A_S).$ The graph-state classification therefore reduces to determining the
exceptional subspace explicitly.

Let $R=iPQ$ denote, up to sign, the Pauli operator complementary to $P$ and $Q$.
For even $m$, no nonidentity signal-only Pauli can lose all signal
support under the encoder. For odd $m$, the unique exceptional
direction within each active sector is $R^{\otimes m}$. Introduce the
sector-indicator map
\begin{equation}
F:\mathbb F_2^k\longrightarrow\mathbb F_2^{mk},
\qquad
Fc=
(c_1\mathbf 1_m,\ldots,c_k\mathbf 1_m).
\label{eq:main_sector_indicator}
\end{equation}
Thus every exceptional vector for odd $m$ must have the form $x=Fc$.
In the standard graph-state Pauli frame,
\begin{equation}
\mathcal E_{P,Q}^{(m,k)}(A_S)
=
\left\{
\begin{array}{@{}ll@{}}
\{0\},
&
m\ \mathrm{even},
\\[1.5mm]

\left\{
Fc:
\substack{
(A_S+I_{mk})Fc=0,\\
\mathbf 1_k^{\mathsf T}c=0
}
\right\},
&
\substack{
m\ \mathrm{odd},\\
PQ\in\{XZ,ZX\},
}
\\[2mm]

\{Fc:A_SFc=0\},
&
\substack{
m\ \mathrm{odd},\\
PQ\in\{YZ,ZY\},
}
\\[1.5mm]

\{0\},
&
\substack{
m\ \mathrm{odd},\\
PQ\in\{XY,YX\}.
}
\end{array}
\right.
\label{eq:main_exceptional_subspace}
\end{equation}
The complete binary-symplectic derivation is given in the Supplemental
Material~\cite{supplemental}.

The three odd-$m$ branches admit a direct stabilizer interpretation.
For $XZ/ZX$, the exceptional axis is $R=\pm Y$. Since
$K_S(Fc)\doteq X^{Fc}Z^{A_SFc},$ an active sector can be exceptional only if
$(A_S+I_{mk})Fc=0.$ For a simple graph,
$(Fc)^{\mathsf T}A_S(Fc)=0$. Combining this identity with the
condition above and using odd $m$ gives
$\mathbf 1_k^{\mathsf T}c=0.$

Thus every nonzero $Y$-type exceptional stabilizer occupies an even
number of sectors. This explains the parity restriction appearing in
Eq.~\eqref{eq:main_exceptional_subspace}.

For $YZ/ZY$, the exceptional axis is $R=\pm X$, and the condition
reduces to $A_SFc=0,$ with no additional sector-parity constraint. For $XY/YX$, the
exceptional axis is $R=\pm Z$. Every nonzero graph stabilizer
$K_S(x)\doteq X^xZ^{A_Sx}$ nevertheless contains a nontrivial
$X$ component, so no nonzero $Z$-type exceptional vector exists.
The exceptional subspace is therefore trivial in this branch.

The encoder classes in
Eq.~\eqref{eq:main_exceptional_subspace} should be understood relative
to a fixed Pauli frame of the resource. Any two ordered pairs of
distinct Pauli axes are related by single-qubit Clifford conjugation.
A simultaneous Clifford rotation of the input and signal systems
therefore maps one encoder frame and its compatible resource
representatives into another without changing the entanglement across
$S:\mathcal N$. The distinctions above consequently characterize the
compatibility of a fixed graph-state representative with a fixed
encoder frame; they do not imply an absolute inequivalence of the
encoder families under unrestricted basis changes

\subsection{Exact graph-state resource criterion}
\label{subsec:main_exact_graph}

The preceding reduction turns
Theorem~\ref{thm:exact_general_resource_main} into an exact binary
criterion for graph states. The cut kernel specifies the signal-side
Pauli correlations present in $\rho_S$, while the fixed encoder
determines which of those correlations are compatible with exact
recovery.

\begin{theorem}[Exact graph-state resource criterion]
\label{thm:graph_exact_main}
Let $m\ge2$, and fix an oriented balanced cut $S:\mathcal N$, a
sector decomposition of $S$, and a sector-wise two-Pauli encoder
$U_{P,Q}^{(m,k)}$. A graph-state resource
$\ket{G}_{S\mathcal N}$ permits exact recovery of the complete
$k$-qubit input from every $S_i\mathcal N$, using otherwise
unrestricted CPTP recovery maps, if and only if
\begin{equation}
\ker\!\left(B_{S\mathcal N}^{\mathsf T}\right)
\subseteq
\mathcal E_{P,Q}^{(m,k)}(A_S).
\label{eq:main_exact_graph_condition}
\end{equation}
Whenever Eq.~\eqref{eq:main_exact_graph_condition} holds, every
individual signal $S_i$ is perfectly concealed. If the inclusion
fails, exact recovery is impossible for at least one
$S_i\mathcal N$, independently of the recovery map.
\end{theorem}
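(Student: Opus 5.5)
The plan is to reduce the statement to Theorem~\ref{thm:exact_general_resource_main}, using the stabilizer expansion of the graph-state marginal,
\[
\rho_S=\frac{1}{2^{mk}}\sum_{x\in\ker(B_{S\mathcal N}^{\mathsf T})}K_S(x),
\]
with $K_S(x)\doteq X^xZ^{A_Sx}$. The map $x\mapsto K_S(x)$ is injective, because the $X$-part of $K_S(x)$ is $X^x$. The operators $K_S(x)$ are therefore distinct Pauli operators up to phase, so they are linearly independent in the Hilbert--Schmidt sense. The criterion of Theorem~\ref{thm:exact_general_resource_main} says that every nonidentity Pauli component of $\rho_S$ must be proportional to some $T(c)$, and none is allowed for even $m$. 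By linear independence, this holds if and only if every kernel stabilizer $K_S(x)$ is, up to phase, equal to $T(c)$ for some $c$, with $c=0$ exactly when $x=0$. The graph-state condition is then obtained by identifying this set of kernel vectors with $\mathcal E_{P,Q}^{(m,k)}(A_S)$.

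The second step identifies the allowed Pauli directions with exceptional vectors. I would use the conjugation identity stated after Theorem~\ref{thm:exact_general_resource_main}: for odd $m$, $U^\dagger(I_A\otimes T(c))U=\sigma_m^{|c|}R_A(c)\otimes I_S$. Since the encoder is a product of Clifford exponentials whose generators either commute or anticommute with any Pauli, the same identity holds for $U(\cdot)U^\dagger$ up to sign. Hence any $x$ with $K_S(x)\propto T(c)$ is exceptional in the sense of Definition~\ref{def:main_exceptional_kernel_vector}.

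For the converse, I need that if $\widetilde K(x)$ has support only in $A$, then $K_S(x)\propto T(c)$. I would invoke the binary-symplectic fact quoted in the text: for even $m$ no nonidentity signal-only Pauli loses all signal support, and for odd $m$ the only such direction in each sector is $R^{\otimes m}$. Applying this sector by sector is legitimate because $U$ is a tensor product over sectors and a Pauli factorizes across sectors; it gives $K_S(x)\propto\prod_j T_j^{c_j}$. This establishes
\[
\ker(B_{S\mathcal N}^{\mathsf T})\subseteq\mathcal E_{P,Q}^{(m,k)}(A_S)
\iff
\rho_S\ \text{has the form of Eq.~\eqref{eq:main_exact_pure_resource}}.
\]
The coefficients $\alpha_c$ are real because $\rho_S$ is Hermitian and the $T(c)$ are Hermitian. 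Positivity is automatic, since $\rho_S$ is a genuine marginal.

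Finally, the equivalence with exact recovery, together with its failure for some $S_i\mathcal N$ under every CPTP decoder, transfers directly from the if-and-only-if in Theorem~\ref{thm:exact_general_resource_main}. Perfect individual concealment likewise comes from the concealment clause of that theorem, or alternatively from Proposition~\ref{prop:all_clone_entropy_main}. The main obstacle is the sector-wise no-cancellation step. One must check that a Pauli on a sector that is neither identity nor $R^{\otimes m}$ keeps nontrivial signal support after conjugation by $e^{-i\pi P_{A}P^{\otimes m}/4}e^{-i\pi Q_AQ^{\otimes m}/4}$, and that the parity of $m$ enters exactly as claimed. I would rely on the Supplemental derivation for this. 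As a sanity check I would use symplectic bookkeeping: each exponential multiplies an anticommuting Pauli by $P_AP^{\otimes m}$ or by $Q_AQ^{\otimes m}$, so cancelling the signal part requires the signal Pauli to equal a product of $P^{\otimes m}$ and $Q^{\otimes m}$ that anticommutes appropriately. That leaves only $R^{\otimes m}$, and the commutation phases for $R^{\otimes m}$ against $P^{\otimes m}$ and $Q^{\otimes m}$ are $(-1)^m$, which forces $m$ odd.
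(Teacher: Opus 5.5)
Your proposal is correct and follows essentially the paper's proof. The paper also expands $\rho_S=2^{-\nu}\sum_{x\in\ker B_{S\mathcal N}^{\mathsf T}}K_S(x)$ and notes that distinct kernel vectors give distinct signal Paulis with nonzero coefficients. It then identifies the components allowed by Theorem~\ref{thm:exact_general_resource_main} with exceptional vectors via the sector-wise Pauli-action lemma, and inherits the equivalence, decoder independence and concealment from that theorem. The only cosmetic difference is that the paper cites its explicit binary description of $\mathcal E_{P,Q}^{(m,k)}(A_S)$ for this identification, whereas you argue directly from Definition~\ref{def:main_exceptional_kernel_vector}; the content is the same.
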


Theorem~\ref{thm:graph_exact_main} is an immediate specialization of
Theorem~\ref{thm:exact_general_resource_main}. For a graph state, the
Pauli components of the reduced signal state are precisely the
signal-side stabilizers indexed by
$\ker(B_{S\mathcal N}^{\mathsf T})$. The general pure-resource
criterion allows exactly those components belonging to the
encoder-compatible exceptional subspace. Uniqueness of the Pauli
expansion therefore gives
Eq.~\eqref{eq:main_exact_graph_condition}. Decoder independence and
perfect individual concealment are inherited from the general
theorem. A direct stabilizer derivation is given in the Supplemental
Material~\cite{supplemental}.

Exact reversibility also gives the coherent-information and
quantum-capacity saturation discussed in Sec.~\ref{sec:general_framework},
consistently with the standard quantum-capacity framework
~\cite{Lloyd1997,Devetak2005,DevetakShor2005}. Importantly, exact
recovery does not require maximal signal--noise entanglement. A valid
graph state may satisfy
\[
S(\rho_S)
=
\rank_{\mathbb F_2}B_{S\mathcal N}
<
mk,
\]
provided that every direction in the resulting cut kernel belongs to
the exceptional subspace. Conversely, failure of
Eq.~\eqref{eq:main_exact_graph_condition} implies that the maximally
entangled Choi input does not attain coherent information $k$ for at
least one authorized channel. This statement alone does not imply a
strict upper bound on that channel's regularized quantum capacity.

For odd $m$, let $\mathcal E_Y(A_S)$ and $\mathcal E_X(A_S)$ denote
the exceptional spaces associated, respectively, with the $XZ/ZX$
and $YZ/ZY$ classes in
Eq.~\eqref{eq:main_exceptional_subspace}. We use
\[
\mathcal K
\equiv
\ker(B_{S\mathcal N}^{\mathsf T}),
\qquad
r
\equiv
\rank_{\mathbb F_2}B_{S\mathcal N}.
\]

\begin{corollary}[Parity- and encoder-dependent graph-state classes]
\label{cor:main_graph_parity_classes}
For a fixed cut, sector decomposition, and Pauli frame,
\[
\ket G\ \mathrm{valid}
\quad\Longleftrightarrow\quad
\begin{cases}
\mathcal K=\{0\},
&
m\ \mathrm{even},
\\[1.5mm]
\mathcal K\subseteq\mathcal E_Y(A_S),
&
\substack{
m\ \mathrm{odd},\\
(P,Q)\in\{(X,Z),(Z,X)\},
}
\\[1.5mm]
\mathcal K\subseteq\mathcal E_X(A_S),
&
\substack{
m\ \mathrm{odd},\\
(P,Q)\in\{(Y,Z),(Z,Y)\},
}
\\[1.5mm]
\mathcal K=\{0\},
&
\substack{
m\ \mathrm{odd},\\
(P,Q)\in\{(X,Y),(Y,X)\}.
}
\end{cases}
\]
Consequently, every valid realization satisfies
\[
\begin{array}{rcll}
r&=&mk,
&
m\ \mathrm{even}\ \text{or}\ 
m\ \mathrm{odd},\ XY/YX,
\\[1mm]
r&\ge&mk-k+1,
&
m\ \mathrm{odd},\ XZ/ZX,
\\[1mm]
r&\ge&(m-1)k,
&
m\ \mathrm{odd},\ YZ/ZY.
\end{array}
\]
For the two exceptional odd-$m$ branches, these rank bounds are
necessary but not sufficient; validity is determined by the complete
kernel inclusions above.
\end{corollary}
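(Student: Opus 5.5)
The equivalences follow by substituting the explicit exceptional subspace of Eq.~\eqref{eq:main_exceptional_subspace} into the inclusion of Theorem~\ref{thm:graph_exact_main}. For $m$ even, and for $m$ odd with $(P,Q)\in\{(X,Y),(Y,X)\}$, the exceptional subspace is $\{0\}$, so the criterion becomes $\mathcal K=\{0\}$. For the other two odd branches the criterion becomes $\mathcal K\subseteq\mathcal E_Y(A_S)$ and $\mathcal K\subseteq\mathcal E_X(A_S)$, respectively. This gives the case list directly.

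The rank statements come from $\dim\mathcal K=mk-r$ together with upper bounds on $\dim\mathcal E$.

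In the trivial branches, $\mathcal K=\{0\}$ gives $r=mk$ at once.

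In the exceptional branches, both $\mathcal E_Y(A_S)$ and $\mathcal E_X(A_S)$ lie inside the image of the sector-indicator map $F$ of Eq.~\eqref{eq:main_sector_indicator}. Since $F$ is injective (every $m\ge2$ column block has a nonzero entry), $\mathcal E\cong\{c\in\F^k:\dots\}$.
\begin{itemize}
\item For $YZ/ZY$ we get $\dim\mathcal E_X\le k$. Hence $mk-r\le k$, that is, $r\ge(m-1)k$. This is consistent with the architecture-independent bound from Proposition~\ref{prop:all_clone_entropy_main}.
\item For $XZ/ZX$, every $Fc\in\mathcal E_Y$ satisfies $\mathbf 1_k^{\mathsf T}c=0$. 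So $\mathcal E_Y$ injects into the hyperplane $\{c:\mathbf 1_k^{\mathsf T}c=0\}$, which has dimension $k-1$. Hence $mk-r\le k-1$, that is, $r\ge mk-k+1$.
\end{itemize}
The only point to check carefully is that the parity constraint is really part of the definition of $\mathcal E_Y$. It is, as stated in Eq.~\eqref{eq:main_exceptional_subspace} and justified by the identity $(Fc)^{\mathsf T}A_SFc=0$ for simple graphs combined with $m$ odd.

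For the final remark, that the rank bounds are not sufficient, it is enough to note that the bound concerns only $\dim\mathcal K$, whereas validity requires the inclusion of $\mathcal K$ in $\mathcal E$. To make this concrete, I would give a small counterexample: $m=3$, $k=1$, $YZ$ branch, with a cut of rank $2$ whose kernel vector is not of the form $F\mathbf 1$, or is of that form but has $A_SF\mathbf 1\ne0$. For example, take a graph where one signal qubit has no neighbours in $\mathcal N$ and the other two signal qubits are each joined to their own distinct key vertex. The kernel is then spanned by a weight-one vector, which is not in the image of $F$, so the graph is invalid even though $r=(m-1)k$.

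The main obstacle, if any, is producing an explicit non-sufficiency example for the $XZ/ZX$ branch as well. This can be done analogously using a weight-one kernel vector with $k\ge2$.
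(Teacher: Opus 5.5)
Your proposal is correct and follows the paper's own argument: substitute the exceptional subspaces into the kernel inclusion of Theorem~\ref{thm:graph_exact_main}, then apply rank--nullity with $\dim\mathcal E_Y(A_S)\le k-1$ and $\dim\mathcal E_X(A_S)\le k$, both obtained from the injectivity of $F$. Your explicit non-sufficiency examples (a weight-one kernel vector outside $\operatorname{im}F$) go slightly beyond the paper, which only remarks that kernel vectors must also satisfy the compatibility condition. You are also right that the $XZ/ZX$ example needs $k\ge2$: for $k=1$ that bound already forces full rank, and full rank is sufficient.
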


The rank consequences follow directly from rank--nullity and the
dimensions of the exceptional spaces. In the $XZ/ZX$ branch, every
exceptional vector is sector constant and has even sector parity, so
\[
\dim\mathcal E_Y(A_S)\le k-1.
\]
In the $YZ/ZY$ branch,
\[
\dim\mathcal E_X(A_S)\le k.
\]
Since $\dim\mathcal K=mk-r$, these bounds give the stated rank
conditions. For even $m$ and for odd $m$ in the $XY/YX$ branch, the
exceptional space is trivial and full cut rank is therefore necessary
and sufficient.

Table~\ref{tab:main_graph_classes} summarizes the classification.
The third column gives the exact necessary-and-sufficient kernel
condition, whereas the fourth records only its consequence for the
cut rank. This distinction is essential in the exceptional odd-$m$
branches: two graph-state realizations with the same cut rank can have
different validity because their complete kernels can occupy different
directions relative to the exceptional subspace.

\begin{table*}[t]
\caption{
Parity- and encoder-dependent graph-state resource classes for a fixed
cut, sector decomposition, and Pauli frame. The kernel conditions are
exact. In the two exceptional odd-$m$ branches, the rank bounds are
necessary consequences and are not sufficient by themselves.}
\label{tab:main_graph_classes}
\centering
\begin{ruledtabular}
\begin{tabular}{c c c c}
$m$
&
Encoder class
&
Exact validity condition
&
Rank consequence
\\
\hline
even
&
any $P\neq Q$
&
$\ker B_{S\mathcal N}^{\mathsf T}=\{0\}$
&
$\rank B_{S\mathcal N}=mk$
\\
odd
&
$XZ/ZX$
&
$\ker B_{S\mathcal N}^{\mathsf T}
\subseteq\mathcal E_Y(A_S)$
&
$\rank B_{S\mathcal N}\ge mk-k+1$
\\
odd
&
$YZ/ZY$
&
$\ker B_{S\mathcal N}^{\mathsf T}
\subseteq\mathcal E_X(A_S)$
&
$\rank B_{S\mathcal N}\ge(m-1)k$
\\
odd
&
$XY/YX$
&
$\ker B_{S\mathcal N}^{\mathsf T}=\{0\}$
&
$\rank B_{S\mathcal N}=mk$
\end{tabular}
\end{ruledtabular}
\end{table*}

Within the standard graph-state Pauli frame, the $YZ/ZY$ branch is the
only class whose exceptional subspace permits the maximal nullity
$k$. If
\[
\dim\mathcal K=k,
\qquad
\mathcal K\subseteq\mathcal E_X(A_S),
\]
then
\[
r
=
S(\rho_S)
=
(m-1)k.
\]
Such a realization saturates the architecture-independent
signal--noise entanglement lower bound of
Proposition~\ref{prop:all_clone_entropy_main}. By contrast, within the
same fixed Pauli frame, the $XZ/ZX$ branch obeys
\[
r\ge(m-1)k+1
\]
and therefore cannot attain this floor.

These statements compare encoder classes relative to a fixed resource
frame. Ordered pairs of distinct Pauli axes are related by
single-qubit Clifford conjugations, and a simultaneous Clifford
rotation of the input and signal systems maps one encoder frame and
its compatible resource representatives into another without changing
the entanglement across $S:\mathcal N$. The distinctions in
Corollary~\ref{cor:main_graph_parity_classes} and
Table~\ref{tab:main_graph_classes} therefore express compatibility of
a fixed graph-state representative with a fixed encoder frame, rather
than an absolute inequivalence of the encoder families under
unrestricted basis changes.

For a single input qubit, the classification simplifies further.

\begin{corollary}[Single-input-qubit classification]
\label{cor:main_graph_k1}
Let $k=1$ and $m\ge2$. For the $XZ/ZX$ and $XY/YX$ encoder classes,
\[
\ket G\ \mathrm{valid}
\quad\Longleftrightarrow\quad
\rank_{\mathbb F_2}B_{S\mathcal N}=m.
\]
The same full-rank condition holds for $YZ/ZY$ when $m$ is even.
For odd $m$ and $YZ/ZY$,
\[
\begin{aligned}
\ket G\ \mathrm{valid}
\quad\Longleftrightarrow\quad&
\mathcal K=\{0\}
\\
&\text{or}\quad
\left[
\begin{array}{c}
\mathcal K=\operatorname{span}\{\mathbf1_m\},\\
A_S\mathbf1_m=0
\end{array}
\right].
\end{aligned}
\]
\end{corollary}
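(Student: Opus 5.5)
The plan is to specialize Corollary~\ref{cor:main_graph_parity_classes} to $k=1$, where $\mathbb F_2^k=\mathbb F_2$ and the sector-indicator map $F$ of Eq.~\eqref{eq:main_sector_indicator} has image $\{0,\mathbf 1_m\}$. By Theorem~\ref{thm:graph_exact_main}, validity is equivalent to $\mathcal K\subseteq\mathcal E_{P,Q}^{(m,1)}(A_S)$. It therefore suffices to compute the exceptional subspace for $k=1$ in each branch of Eq.~\eqref{eq:main_exceptional_subspace}. For every branch we will show that this subspace is either $\{0\}$ or $\{0,\mathbf 1_m\}$, and then translate the inclusion into a condition on the cut rank or the kernel.

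For even $m$, and for odd $m$ in the $XY/YX$ class, Eq.~\eqref{eq:main_exceptional_subspace} already gives $\mathcal E=\{0\}$. For odd $m$ in the $XZ/ZX$ class, the exceptional vectors are $Fc$ with $\mathbf 1_k^{\mathsf T}c=0$. For $k=1$ this parity constraint reads $c=0$, so again $\mathcal E_Y(A_S)=\{0\}$. This matches the rank bound $r\ge mk-k+1=m$ of Corollary~\ref{cor:main_graph_parity_classes}. In all these cases validity is equivalent to $\mathcal K=\{0\}$. Since $\dim\mathcal K=m-r$ by rank--nullity, this is equivalent to $\rank_{\mathbb F_2}B_{S\mathcal N}=m$. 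This proves the first display and the even-$m$ statement for $YZ/ZY$.

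For odd $m$ in the $YZ/ZY$ class, $\mathcal E_X(A_S)=\{Fc:A_SFc=0\}$. For $k=1$ this equals $\{0,\mathbf 1_m\}$ if $A_S\mathbf 1_m=0$, and $\{0\}$ otherwise. The only subspaces of $\{0,\mathbf 1_m\}$ are $\{0\}$ and $\operatorname{span}\{\mathbf 1_m\}$. Hence $\mathcal K\subseteq\mathcal E_X(A_S)$ holds exactly when $\mathcal K=\{0\}$, or when $\mathcal K=\operatorname{span}\{\mathbf 1_m\}$ with the additional requirement $A_S\mathbf 1_m=0$ that makes $\mathbf 1_m$ exceptional.

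The only step needing care is the converse of this last case: we must check that $\mathcal K=\operatorname{span}\{\mathbf 1_m\}$ forces $\mathbf 1_m\in\mathcal E_X(A_S)$ only through the condition $A_S\mathbf 1_m=0$. This follows because $\mathbf 1_m$ is automatically sector-constant and lies in $\ker B^{\mathsf T}_{S\mathcal N}$ by assumption, so the defining condition of $\mathcal E_X$ reduces to $A_S\mathbf 1_m=0$. Beyond this, the argument is routine linear algebra over $\mathbb F_2$ and inherits decoder independence from Theorem~\ref{thm:graph_exact_main}.
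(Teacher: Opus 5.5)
Your proposal is correct and follows the paper's proof essentially step for step. You specialize the exceptional subspace to $k=1$, where $\operatorname{im}F=\operatorname{span}\{\mathbf 1_m\}$, and note that the even-sector-parity constraint $\mathbf 1_1^{\mathsf T}c=0$ removes the only candidate in the $XZ/ZX$ branch. In the odd-$m$ $YZ/ZY$ branch you then use that $\mathcal K$ must be one of the two subspaces of $\operatorname{span}\{\mathbf 1_m\}$, with $\mathbf 1_m$ exceptional exactly when $A_S\mathbf 1_m=0$. Your final ``care'' paragraph is harmless but unnecessary, since the preceding paragraph already gives both directions of the equivalence.
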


For $k=1$, the even-sector-parity condition eliminates every nonzero
$XZ/ZX$ exceptional direction, while the $XY/YX$ exceptional space
is trivial. In the odd-$m$ $YZ/ZY$ branch, the sector-indicator space
contains only the nonzero direction $\mathbf1_m$. Hence the only
rank-deficient valid possibility is
\[
\mathcal K
=
\operatorname{span}\{\mathbf1_m\},
\qquad
A_S\mathbf1_m=0,
\]
for which
\[
\rank_{\mathbb F_2}B_{S\mathcal N}=m-1.
\]

The graph-state criterion therefore separates three pieces of
structure. The cut rank determines the amount of signal--noise
entanglement, the complete cut kernel specifies the signal-side
stabilizer correlations responsible for any rank deficiency, and the
fixed encoder frame determines whether those correlations are
compatible with exact recovery. Full rank alone characterizes the
even-$m$ and $XY/YX$ branches, whereas the exceptional odd-$m$
branches require the complete kernel structure.

Eq.~\eqref{eq:main_exact_graph_condition} also provides the fixed-
realization certification test. Given a graph, an oriented cut, and a
sector assignment, one constructs $B_{S\mathcal N}$, computes
$\ker(B_{S\mathcal N}^{\mathsf T})$ over $\mathbb F_2$, and tests its
inclusion in the appropriate exceptional subspace. We next promote
this fixed-realization criterion to graph-level certification and show
how the same binary data yield explicit recovery maps.

\section{Certification and constructive decoding}
\label{sec:certification}

Theorem~\ref{thm:graph_exact_main} decides validity for a fixed
oriented cut and sector decomposition. For a physical graph, however,
these structures need not be specified in advance. The graph-level
question is whether the graph admits at least one realization satisfying
the exact cut-kernel criterion. We first formulate this existential
certification problem and then show how a certified realization yields
an explicit Clifford decoder.

\subsection{Graph-state encrypted-cloning certification}
\label{subsec:gsecc}

Let $G=(V,E)$ be a graph on $2\nu$ vertices, with $\nu=mk$. An
$(m,k)$ realization consists of an oriented balanced partition
$V=S\sqcup\mathcal N$, with $|S|=|\mathcal N|=\nu$, together with a
partition
\[
\Pi_S=\{\mathsf S_1,\ldots,\mathsf S_k\}
\]
of the signal vertices into $k$ sectors of size $m$. We denote the
complete realization by
\[
\mathfrak R=(S:\mathcal N,\Pi_S).
\]

For a fixed encoder $U_{P,Q}^{(m,k)}$, define

\begin{equation}
\begin{aligned}
\operatorname{GSECC}_{P,Q}(G;m,k)
&=\textsc{valid}
\\[-1mm]
&\Longleftrightarrow\
\exists\,\mathfrak R:\ \
\ker B_{S\mathcal N}^{\mathsf T}
\subseteq
\mathcal E_{P,Q}^{(m,k)}(A_S).
\end{aligned}
\label{eq:main_gsecc_definition}
\end{equation}

If no such realization exists, we write
$\operatorname{GSECC}_{P,Q}(G;m,k)=\textsc{invalid}$.
Thus GSECC is the existential graph-level extension of
Theorem~\ref{thm:graph_exact_main}. Importantly, failure of one cut or
one sector assignment does not invalidate the graph; invalidity means
that no admissible realization satisfies the exact kernel criterion.

For a prescribed oriented cut $S:\mathcal N$, extract
$A_S$ and $B\equiv B_{S\mathcal N}$ from the adjacency matrix and let
\[
W=
\begin{pmatrix}
w_1&\cdots&w_\eta
\end{pmatrix}
\in\mathbb F_2^{\nu\times\eta},
\qquad
\operatorname{col}(W)=\ker B^{\mathsf T},
\]
where
\[
\eta=\dim\ker B^{\mathsf T}.
\]
For even $m$, and for odd $m$ in the $XY/YX$ class, the exceptional
space is trivial. The cut is therefore admissible exactly when
\[
\eta=0
\quad\Longleftrightarrow\quad
\rank_{\mathbb F_2}B=\nu,
\]
and the sector decomposition is irrelevant.

For the exceptional odd-$m$ branches, the sector assignment can be
determined without enumerating all partitions of $S$. Associate with
each signal vertex $v$ the row signature
\[
\sigma(v)=W_{v,:}\in\mathbb F_2^\eta.
\]
Every vector in $\ker B^{\mathsf T}$ is constant on a proposed sector
if and only if all vertices in that sector have the same row signature.
Hence a sector decomposition into blocks of size $m$ exists precisely
when every row-signature class has cardinality divisible by $m$.
This condition is independent of the chosen kernel basis, because a
basis change right-multiplies $W$ by an invertible binary matrix and
therefore preserves equality of row signatures.

The encoder-dependent conditions are then tested directly on the
kernel-basis matrix:
\[
(A_S+I_\nu)W=0,
\qquad
(P,Q)\in\{(X,Z),(Z,X)\},
\]
or
\[
A_SW=0,
\qquad
(P,Q)\in\{(Y,Z),(Z,Y)\}.
\]
For $XZ/ZX$, the even-sector-parity condition follows automatically
from the first relation for odd $m$, as shown in
Sec.~\ref{subsec:main_graph_kernel}. If the appropriate matrix
condition holds and every row-signature class has size divisible by
$m$, splitting each class into blocks of size $m$ produces a valid
sector decomposition. These tests are exactly the fixed-cut
implementation of Eq.~\eqref{eq:main_gsecc_definition}; no search over
sector partitions or additional stabilizer enumeration is required.

Computing $B$, a kernel basis $W$, the encoder-dependent matrix
condition, and the row-signature classes requires polynomial-time
binary linear algebra. Gaussian elimination gives an
$O(\nu^3)$ upper bound for a fixed oriented cut. The graph-level
problem still requires a search over the $\binom{2\nu}{\nu}$ oriented balanced cuts, so a direct exhaustive implementation has
the upper bound $O\!\left[
\binom{2\nu}{\nu}\nu^3
\right].$

This is an exhaustive-search bound for the certification procedure,
not a complexity classification of GSECC. Graph automorphisms can
further reduce the number of inequivalent cuts that need to be tested.

A successful GSECC test therefore supplies both the oriented cut and,
when needed, an encoder-compatible sector decomposition. These are
precisely the structural data required for recovery. We next show how
any certified realization can be converted into an explicit Clifford
decoder.

\subsection{Constructive Clifford decoding for every certified resource}
\label{subsec:constructive_decoding}

A certified realization satisfies
Theorem~\ref{thm:graph_exact_main} and therefore admits exact recovery
from every $S_i\mathcal N$. For graph-state resources this recovery can
be constructed entirely within the Clifford formalism.

Define the encoding isometry
\[
V_G\ket{\psi}_A
=
U_{P,Q}^{(m,k)}
\bigl(
\ket{\psi}_A\otimes\ket{G}_{S\mathcal N}
\bigr).
\]
Since $\ket G$ is a stabilizer state and
$U_{P,Q}^{(m,k)}$ is Clifford, the image of $V_G$ is a
$2^k$-dimensional stabilizer code in $AS\mathcal N$
~\cite{Gottesman1997,DehaeneDeMoor2003}.

For recovery from $S_i\mathcal N$, define
\[
\mathcal R_i=S_i\mathcal N,
\qquad
\mathcal E_i=AS_{\bar i},
\]
as the authorized and erased subsystems, respectively.
Theorem~\ref{thm:graph_exact_main} gives
\[
I(\widetilde A:\mathcal E_i)_\Omega=0,
\qquad
i=1,\ldots,m,
\]
so $\mathcal E_i$ is an exactly correctable erasure region for the
stabilizer code defined by $V_G$.

Explicit recovery operations also play a central role in graph-state
secret-sharing constructions~\cite{MarinMarkhamPerdrix2013}. Here the
decoder follows instead from exact correctability of the
encoder-defined erasure region. The stabilizer cleaning lemma implies
that representatives of the encoded input Pauli operators can be
chosen with support entirely inside the authorized subsystem.

\begin{proposition}[Constructive Clifford decoder]
\label{prop:constructive_decoder_main}
Let a graph-state realization satisfy
Eq.~\eqref{eq:main_exact_graph_condition}. Then, for every
$i=1,\ldots,m$, there exist Pauli representatives
$\overline X_j^{(i)}$ and $\overline Z_j^{(i)}$, supported entirely on
$\mathcal R_i=S_i\mathcal N$, such that
\[
\overline X_j^{(i)}V_G
=
V_GX_{A_j},
\qquad
\overline Z_j^{(i)}V_G
=
V_GZ_{A_j},
\qquad
j=1,\ldots,k.
\]
These representatives obey the canonical Pauli symplectic relations.
Consequently, there exists a Clifford unitary $C_i$ and a
factorization
\[
\mathcal H_{\mathcal R_i}
\simeq
\mathcal H_{A_i'}\otimes\mathcal H_{\mathcal J_i},
\qquad
\dim A_i'=2^k,
\]
such that
\[
\begin{aligned}
C_i\overline X_j^{(i)}C_i^\dagger
&=
X_{A_j'}\otimes I_{\mathcal J_i},
\\
C_i\overline Z_j^{(i)}C_i^\dagger
&=
Z_{A_j'}\otimes I_{\mathcal J_i}.
\end{aligned}
\]
The CPTP map $\mathcal D_i(\rho)
=
\Tr_{\mathcal J_i}
\!\left(
C_i\rho C_i^\dagger
\right)$ therefore satisfies
$\mathcal D_i\circ\mathcal Q_i
=
\operatorname{id}_A.$

\end{proposition}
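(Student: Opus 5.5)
The approach is to combine the decoupling criterion with standard stabilizer-code facts. By Theorem~\ref{thm:graph_exact_main}, a certified realization satisfies $I(\widetilde A:\mathcal E_i)_\Omega=0$ for every $i$, so the erasure of $\mathcal E_i=AS_{\bar i}$ is exactly correctable for the code $\mathcal C=\operatorname{im}V_G$. I would first record the stabilizer description of $\mathcal C$. Its stabilizer group $\mathcal S$ is generated by $U K_v U^\dagger$ for all graph generators $K_v$ of $\ket G$. Its logical operators are $L_j^X=UX_{A_j}U^\dagger$ and $L_j^Z=UZ_{A_j}U^\dagger$, which satisfy $L_j^X V_G=V_GX_{A_j}$ and $L_j^Z V_G=V_GZ_{A_j}$. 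Because $U$ is Clifford, all these operators are Pauli, and the $L$'s obey the canonical commutation relations of $X_{A_j},Z_{A_j}$.

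Next I would invoke the Knill--Laflamme condition in stabilizer form. Exact correctability of the erasure of $\mathcal E_i$ means that no Pauli operator supported on $\mathcal E_i$ lies in $N(\mathcal S)\setminus\langle\mathcal S, iI\rangle$. Equivalently, every Pauli $E$ on $\mathcal E_i$ satisfies $\Pi E\Pi\propto\Pi$, where $\Pi$ is the code projector. This follows from the decoupling condition, because a logical operator supported on $\mathcal E_i$ would have a nonzero $\widetilde A$-correlated expectation in $\Omega_{\widetilde A\mathcal E_i}$.

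The cleaning lemma then gives the representatives. For each logical Pauli $L$, the group $\mathcal S|_{\mathcal E_i}$ of restrictions, modulo the elements of $\mathcal S$ supported inside $\mathcal R_i$, must act on $\mathcal E_i$ with full symplectic span relative to the restricted logicals. Concretely, the map from $\mathcal S$ to Paulis on $\mathcal E_i$ (restriction) has image whose symplectic complement within the $\mathcal E_i$-Paulis consists of restrictions of elements of $N(\mathcal S)$ supported on $\mathcal E_i$. By correctability, these are all stabilizers, up to phase. Linear algebra over $\mathbb F_2$ then shows that $L|_{\mathcal E_i}$, which commutes with every element of the complement, lies in the image. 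Hence some $s\in\mathcal S$ satisfies $s|_{\mathcal E_i}=L|_{\mathcal E_i}$ up to phase, and $\overline L^{(i)}=Ls$ is supported on $\mathcal R_i$ and acts identically on $\mathcal C$. Since multiplying by stabilizers does not change commutation with other logicals modulo $\mathcal S$, and since the restricted operators must commute exactly as Paulis, the cleaned $\overline X_j^{(i)},\overline Z_j^{(i)}$ satisfy the canonical relations. I expect this step, making the $\mathbb F_2$ symplectic-complement argument precise together with the phase bookkeeping, to be the main obstacle. I would try a dimension count with $\mathcal S\cap\mathcal P(\mathcal R_i)$ and $\mathcal S\cap\mathcal P(\mathcal E_i)$, following Bravyi--Terhal.

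Finally, I would use the symplectic Gram--Schmidt theorem. The $2k$ Pauli operators on $\mathcal R_i$ form canonical pairs, so the Clifford group on $\mathcal R_i$ acts transitively on such tuples, and this yields $C_i$ with the stated images $X_{A_j'}\otimes I$ and $Z_{A_j'}\otimes I$, with $\mathcal J_i$ the remaining qubits. To check recovery, I would take any code state $V_G\ket\psi$ purified with a reference. The logical Paulis restricted to $\mathcal R_i$ act as $X_{A_j'},Z_{A_j'}$ after $C_i$, so the reduced state on $A_i'$ reproduces all Pauli expectations of $\ket\psi$, including correlations with the reference. The state is therefore $\ket\psi$, and $\mathcal D_i\circ\mathcal Q_i=\id_A$.
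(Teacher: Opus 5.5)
Your proposal is correct and follows the same route as the paper: the decoupling condition gives exact correctability of the erasure of $\mathcal E_i=AS_{\bar i}$, the cleaning lemma moves each encoded input Pauli onto $\mathcal R_i=S_i\mathcal N$ by multiplication with code stabilizers, and symplectic Gaussian elimination then yields $C_i$ and the decoder $\mathcal D_i$. The paper simply cites Bravyi--Terhal and the standard Clifford-synthesis references for these steps. Your symplectic-complement sketch of the cleaning step and your Pauli-expectation check of recovery fill in those details. The phase bookkeeping you flagged is harmless, because each cleaned operator is a product of commuting Hermitian Paulis.
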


The construction follows from the stabilizer cleaning lemma
~\cite{BravyiTerhal2009}. Exact correctability of $\mathcal E_i$
allows every encoded input Pauli to be multiplied by code stabilizers
until all support on $\mathcal E_i$ is removed. In binary symplectic
form, let $\ell_{j,\mu}$, with $\mu\in\{X,Z\}$, represent an encoded
input Pauli, let the rows of $H_G$ generate the stabilizer of the image
code, and let $\Pi_{\mathcal E_i}$ restrict a Pauli vector to the
erased subsystem. A cleaned representative is obtained by solving
\[
\Pi_{\mathcal E_i}
\left[
\ell_{j,\mu}
+
c_{j,\mu}^{(i)}H_G
\right]
=0
\qquad
\text{over }\mathbb F_2.
\]
Eq.~\eqref{eq:main_exact_graph_condition} guarantees solvability for
every $j$ and $\mu$. Symplectic Gaussian elimination then maps the
resulting $2k$ authorized representatives to standard Pauli generators
and synthesizes $C_i$
~\cite{DehaeneDeMoor2003,aaronson2004improved}. The complete cleaning
argument and Clifford-synthesis procedure are given in the Supplemental
Material~\cite{supplemental}.

Proposition~\ref{prop:constructive_decoder_main} applies to every
valid branch in Table~\ref{tab:main_graph_classes}, including the
rank-deficient exceptional odd-$m$ branches. Thus reduced
signal--noise entanglement does not imply approximate or probabilistic
recovery: whenever the complete cut kernel is encoder compatible, the
authorized channels remain exactly reversible, with the saturation
properties already stated in
Eq.~\eqref{eq:main_capacity_saturation}.

\subsection{Closed-form Bell reduction for the full-rank branch}
\label{subsec:explicit_full_rank_decoder}

The preceding construction yields a Clifford decoder for every
certified graph-state resource. For a full-rank cut, the graph
structure permits a stronger closed-form statement: the resource can
be reduced to the canonical maximally entangled state by a Clifford
acting only on $\mathcal N$.

For a symmetric binary matrix $D$ with zero diagonal, define
\[
CZ_D
=
\prod_{a<b}CZ_{ab}^{D_{ab}},
\]
and for an invertible binary matrix $M$ define
\[
L_M\ket{x}
=
\ket{Mx},
\]
with arithmetic over $\mathbb F_2$. The map $L_M$ is a reversible
linear Clifford transformation
~\cite{DehaeneDeMoor2003,aaronson2004improved}.

\begin{observation}[Explicit noise-side Bell reduction]
\label{obs:explicit_bell_reduction_main}
Let the adjacency matrix of the graph resource relative to a balanced
cut $S:\mathcal N$ be
\[
\Gamma=
\begin{pmatrix}
A_S & B\\
B^{\mathsf T} & A_{\mathcal N}
\end{pmatrix},
\]
and suppose that $B$ is invertible over $\mathbb F_2$. Order the noise
vertices according to the chosen ordering of the signal vertices.
Then the noise-side Clifford
\begin{equation}
U_{\mathcal N}
=
CZ_{A_S}^{(\mathcal N)}
L_{B^{-\mathsf T}}
H_{\mathcal N}^{\otimes\nu}
CZ_{A_{\mathcal N}}^{(\mathcal N)}
\label{eq:main_explicit_bell_reduction}
\end{equation}
satisfies
\[
(I_S\otimes U_{\mathcal N})
\ket{G}_{S\mathcal N}
=
\ket{\Phi_{2^\nu}}_{S\mathcal N},
\]
where
\[
\ket{\Phi_{2^\nu}}
=
2^{-\nu/2}
\sum_{x\in\mathbb F_2^\nu}
\ket{x}_S\ket{x}_{\mathcal N}.
\]
\end{observation}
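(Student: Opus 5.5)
The plan is to verify the identity by direct computation in the computational basis, applying the factors of Eq.~\eqref{eq:main_explicit_bell_reduction} from right to left. Start from $\ket G=CZ_\Gamma\ket{+}^{\otimes 2\nu}$ and split $CZ_\Gamma=CZ_{A_S}^{(S)}\,CZ_{B}^{(S\mathcal N)}\,CZ_{A_{\mathcal N}}^{(\mathcal N)}$. All controlled-$Z$ gates are diagonal, so these three factors commute with one another. The signal-internal factor $CZ_{A_S}^{(S)}$ acts only on $S$, so it also commutes with everything in $U_{\mathcal N}$ and can be carried along unchanged.

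\textbf{Step 1: strip the noise-internal edges.} The rightmost factor $CZ_{A_{\mathcal N}}^{(\mathcal N)}$ of $U_{\mathcal N}$ cancels the noise-internal edges, because $CZ^2=I$. Writing $q_D(x)=\sum_{a<b}D_{ab}x_ax_b$, what remains is
\[
2^{-\nu}\sum_{x,y\in\mathbb F_2^\nu}(-1)^{q_{A_S}(x)+x^{\mathsf T}By}\ket{x}_S\ket{y}_{\mathcal N}.
\]

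\textbf{Step 2: Hadamards on $\mathcal N$.} Next apply $H^{\otimes\nu}_{\mathcal N}$. For fixed $x$, the sum $2^{-\nu/2}\sum_y(-1)^{(B^{\mathsf T}x)\cdot y}\ket y$ is exactly $H^{\otimes\nu}\ket{B^{\mathsf T}x}$. Since $H^2=I$, the state becomes
\[
2^{-\nu/2}\sum_x(-1)^{q_{A_S}(x)}\ket x_S\ket{B^{\mathsf T}x}_{\mathcal N}.
\]

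\textbf{Step 3: the linear reversible map.} Here invertibility of $B$ is used: $B^{\mathsf T}$ is invertible with inverse $B^{-\mathsf T}$. Applying $L_{B^{-\mathsf T}}$ sends $\ket{B^{\mathsf T}x}$ to $\ket x$, giving
\[
2^{-\nu/2}\sum_x(-1)^{q_{A_S}(x)}\ket x_S\ket x_{\mathcal N}.
\]
The ordering convention that matches noise vertices with signal vertices is exactly what makes the noise label equal $x$ in the same coordinates.

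\textbf{Step 4: cancel the residual phase.} The remaining phase $(-1)^{q_{A_S}(x)}$ depends only on $x$, which is now also written on $\mathcal N$ in matching order. The leftmost factor $CZ_{A_S}^{(\mathcal N)}$ contributes the same phase $(-1)^{q_{A_S}(x)}$, so the two cancel. The result is $\ket{\Phi_{2^\nu}}_{S\mathcal N}$.

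The only delicate point is bookkeeping rather than a conceptual obstacle. One must get the operator ordering right (rightmost acts first). One must also check that the Hadamard transform yields $\ket{B^{\mathsf T}x}$ and not $\ket{Bx}$, which is why $B^{-\mathsf T}$ rather than $B^{-1}$ appears. Finally, the noise ordering has to be matched to the signal ordering so that the $CZ_{A_S}$ phase computed on $\mathcal N$ coincides with the one computed on $S$.
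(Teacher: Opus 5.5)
Your proposal is correct and follows essentially the same route as the paper's proof. In both, you expand $\ket G$ in the computational basis, cancel the noise-internal phase with $CZ_{A_{\mathcal N}}$, use the Hadamard layer to turn the cut phase $(-1)^{x^{\mathsf T}By}$ into $\ket{B^{\mathsf T}x}_{\mathcal N}$, undo $B^{\mathsf T}$ with $L_{B^{-\mathsf T}}$, and cancel the residual phase $(-1)^{q_{A_S}(x)}$ with $CZ_{A_S}^{(\mathcal N)}$. The only cosmetic difference is in the Hadamard step: you evaluate it via $H^2=I$, whereas the paper uses the explicit character sum $\sum_n(-1)^{n^{\mathsf T}(B^{\mathsf T}s+y)}=2^\nu\delta_{y,B^{\mathsf T}s}$.
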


The reduction follows directly from the graph-state amplitude
representation. First,
$CZ_{A_{\mathcal N}}^{(\mathcal N)}$ removes the quadratic phase from
noise-internal edges. The Hadamard layer converts the cut phase
$(-1)^{s^{\mathsf T}Bn}$ into the computational-basis correlation
$\ket{B^{\mathsf T}s}_{\mathcal N}$, after which
$L_{B^{-\mathsf T}}$ maps the noise register to $\ket{s}_{\mathcal N}$.
Finally, $CZ_{A_S}^{(\mathcal N)}$ reproduces the signal-internal
quadratic phase on the correlated noise basis state, cancelling the
remaining signal phase. A complete derivation is given in the
Supplemental Material~\cite{supplemental}.

Observation~\ref{obs:explicit_bell_reduction_main} shows that every
full-cut-rank graph resource is not only maximally entangled across
$S:\mathcal N$, but is related to the canonical Bell resource by an
explicit Clifford acting solely on the noise subsystem. This realizes
constructively the general noise-side-unitary equivalence of
purifications and agrees with the canonical Bell-pair decomposition of
bipartite stabilizer entanglement~\cite{FattalEtAl2004}.

Because $U_{\mathcal N}$ acts only on $\mathcal N$, it commutes with
the sector-wise encoder. Recovery in the full-rank branch can therefore
be written as
\[
\mathcal D_i
=
\mathcal D_i^{\rm Bell}
\circ
\operatorname{Ad}_{I_{S_i}\otimes U_{\mathcal N}},
\qquad
\operatorname{Ad}_V(X)=VXV^\dagger,
\]
where $\mathcal D_i^{\rm Bell}$ is an exact decoder for the canonical
Bell resource. Thus
Eq.~\eqref{eq:main_explicit_bell_reduction} gives a closed-form
canonicalization of the resource; it is not, by itself, the complete
recovery map. The latter may be chosen Clifford by
Proposition~\ref{prop:constructive_decoder_main}.

The same transformation also identifies when this particular
canonicalization is local on the individual noise qubits.

\begin{observation}[Locality of the Bell reduction]
\label{obs:main_decoder_locality}
The Clifford $U_{\mathcal N}$ in
Eq.~\eqref{eq:main_explicit_bell_reduction} is a product of
single-qubit Clifford operations, up to a permutation of the noise
qubits, if and only if
\[
A_S=A_{\mathcal N}=0,
\qquad
B\ \text{is a permutation matrix}.
\]
Equivalently, relative to the chosen cut, the graph is a perfect
matching between $S$ and $\mathcal N$.
\end{observation}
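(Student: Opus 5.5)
The plan is to analyse the four factors of $U_{\mathcal N}=CZ_{A_S}^{(\mathcal N)}L_{B^{-\mathsf T}}H^{\otimes\nu}_{\mathcal N}CZ_{A_{\mathcal N}}^{(\mathcal N)}$ separately, but to phrase the criterion in terms of the Clifford group modulo local Cliffords and permutations. The cleanest tool is the binary symplectic representation. A Clifford unitary on $\nu$ qubits is, up to Pauli phases, a symplectic matrix $M\in Sp(2\nu,\mathbb F_2)$. It is a product of single-qubit Cliffords times a qubit permutation if and only if $M$, viewed as a $\nu\times\nu$ array of $2\times 2$ blocks, has exactly one nonzero (necessarily invertible) block in each block row and block column.

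For the factors I will use the standard forms: $CZ_D\mapsto\begin{pmatrix}I&0\\ D&I\end{pmatrix}$ (acting on $(x|z)$ column vectors), $H^{\otimes\nu}\mapsto\begin{pmatrix}0&I\\ I&0\end{pmatrix}$, and $L_M\mapsto\begin{pmatrix}M&0\\0&M^{-\mathsf T}\end{pmatrix}$. Multiplying these gives
\[
M_{U_{\mathcal N}}=
\begin{pmatrix}B^{-\mathsf T}A_{\mathcal N}& B^{-\mathsf T}\\ A_SB^{-\mathsf T}A_{\mathcal N}+B & A_SB^{-\mathsf T}\end{pmatrix},
\]
which I will recheck carefully for ordering conventions. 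Reindexing this $2\nu\times2\nu$ matrix into $2\times2$ blocks per qubit $(a,b)$, the $(a,b)$ block collects the $(a,b)$ entries of the four $\nu\times\nu$ submatrices.

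For \emph{sufficiency}, assume $A_S=A_{\mathcal N}=0$ and $B$ is a permutation matrix $\Pi$. Then $CZ$ factors are the identity, and $L_{\Pi^{-\mathsf T}}=L_\Pi$ is a qubit permutation. So $U_{\mathcal N}$ is $H^{\otimes\nu}$ followed by a permutation, which settles this direction directly. The graph-theoretic restatement is immediate: a graph with $A_S=A_{\mathcal N}=0$ and $B$ a permutation matrix is exactly a perfect matching across $S:\mathcal N$.

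For \emph{necessity}, assume $M_{U_{\mathcal N}}$ has the monomial-block form with permutation $\pi$. The upper-right submatrix $B^{-\mathsf T}$ is invertible, so every block row contains a nonzero entry in its $(x,z)$ position. Since each block row has only one nonzero block, the support of $B^{-\mathsf T}$ lies in the pattern of $\pi$. An invertible matrix whose support lies in a permutation pattern is that permutation matrix, so $B^{-\mathsf T}=\Pi_\pi$ and hence $B$ is a permutation matrix.

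Next, the upper-left submatrix $\Pi_\pi A_{\mathcal N}$ must also be supported on the pattern of $\pi$. This forces $A_{\mathcal N}$ to be diagonal, hence $A_{\mathcal N}=0$ because it has zero diagonal. Similarly, the lower-right submatrix $A_S\Pi_\pi$ must be supported on the pattern of $\pi$, which forces $A_S=0$.

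The main obstacle I anticipate is bookkeeping rather than conceptual difficulty. I must fix one consistent convention for symplectic representatives (row versus column action, and the order of composition in $U_{\mathcal N}$). I must also justify the characterisation of local-Clifford-times-permutation symplectics. For that I will argue that such a unitary maps each weight-one Pauli to a weight-one Pauli, so its symplectic matrix sends each qubit's two-dimensional subspace onto another qubit's. Conversely, any symplectic map with this block-monomial property factors as a permutation times a direct sum of $Sp(2,\mathbb F_2)$ blocks, each realised by a single-qubit Clifford. Since Pauli phases do not affect locality, working modulo phases is harmless.
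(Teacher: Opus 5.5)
Your proposal is correct and follows essentially the same route as the paper. You compute the same symplectic matrix of $U_{\mathcal N}$, force $B^{-\mathsf T}$ (hence $B$) to be a permutation, and then use the zero diagonals of $A_{\mathcal N}$ and $A_S$ to make both vanish, with sufficiency immediate. The only difference is cosmetic: you state locality as a block-monomial condition on the symplectic matrix, whereas the paper argues directly from the images of the single-qubit Paulis $Z_j$ and $X_j$, which are the columns of that same matrix.
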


The proof follows from the binary symplectic action of
Eq.~\eqref{eq:main_explicit_bell_reduction} on the single-qubit Pauli
generators and is given in the Supplemental
Material~\cite{supplemental}. This statement concerns locality of the
specific noise-side canonicalization above; it does not characterize
the locality of all possible recovery maps on $S_i\mathcal N$.

Finally, the transformation in
Eq.~\eqref{eq:main_explicit_bell_reduction} can be synthesized
efficiently. Gaussian-elimination-based CNOT synthesis implements
$L_{B^{-\mathsf T}}$ using $O(\nu^2)$ elementary gates
~\cite{PatelMarkovHayes2008}, while each controlled-$Z$ layer contains
at most $O(\nu^2)$ gates. Under unrestricted connectivity, the
complete noise-side Bell reduction therefore has an
$O(\nu^2)$ Clifford-gate upper bound. Detailed synthesis procedures
and representative gate counts are given in the Supplemental
Material~\cite{supplemental}.

\section{Resource landscape and representative families}
\label{sec:resource_landscape}

The preceding results provide an exact graph-state resource criterion
together with constructive certification and recovery. We now use this
framework to organize representative multipartite resources according
to three distinct properties: the amount of entanglement across the
signal--noise cut, the correlation structure responsible for that
entanglement, and its compatibility with the prescribed encoder.

\subsection{Full-rank benchmark and scalable graph-state families}
\label{subsec:scalable_graph_families}

The simplest sufficient condition is full cut rank. If a graph on
$2\nu$ vertices admits a balanced cut satisfying $\rank_{\mathbb F_2}B_{S\mathcal N}=\nu,$ then $\rho_S=\frac{I_S}{2^\nu},$ and the resource is valid for every factorization
$\nu=mk$ with $m\ge2$, independently of the sector decomposition and
of the two-Pauli encoder class.

Several standard graph families admit such cuts. A perfect matching
provides the canonical limiting case: choosing one endpoint of each
edge as signal and the other as noise makes
$B_{S\mathcal N}$ a permutation matrix. The associated graph state is
locally Clifford equivalent to $\nu$ Bell pairs, and
Observation~\ref{obs:explicit_bell_reduction_main} reduces it to the
canonical Bell resource using only local noise-side operations, up to
a permutation.

Full cut rank is not restricted to resources that resemble independent
Bell pairs. Suitable balanced cuts are also invertible for even paths,
even cycles, and rectangular cluster graphs
$G_{2a\times b}^{\mathrm{cl}}$ with $ab=\nu$. Linear and higher-
dimensional cluster states therefore provide scalable multipartite
resources despite their nontrivial graph structure
~\cite{briegel2001persistent,raussendorf2001one,
raussendorf2003measurement,hein2004}. Explicit certifying cuts and
their cut matrices are given in the Supplemental
Material~\cite{supplemental}.

In these examples maximal signal--noise entanglement coexists with
nontrivial internal graph structure: the cut matrix is invertible even
when the adjacency blocks $A_S$ and $A_{\mathcal N}$ are nonzero.
Thus full-rank validity is a property of the operational bipartition,
not of a particular Bell-pair geometry.

\begin{figure*}[t]
\centering

\captionsetup[subfigure]{justification=centering,singlelinecheck=false}

\subcaptionbox{
    Path / linear cluster
    \label{fig:gcl}
}[0.22\textwidth]{%
    \centering
    \includegraphics[
        width=\linewidth,
        trim=8 8 8 8,
        clip
    ]{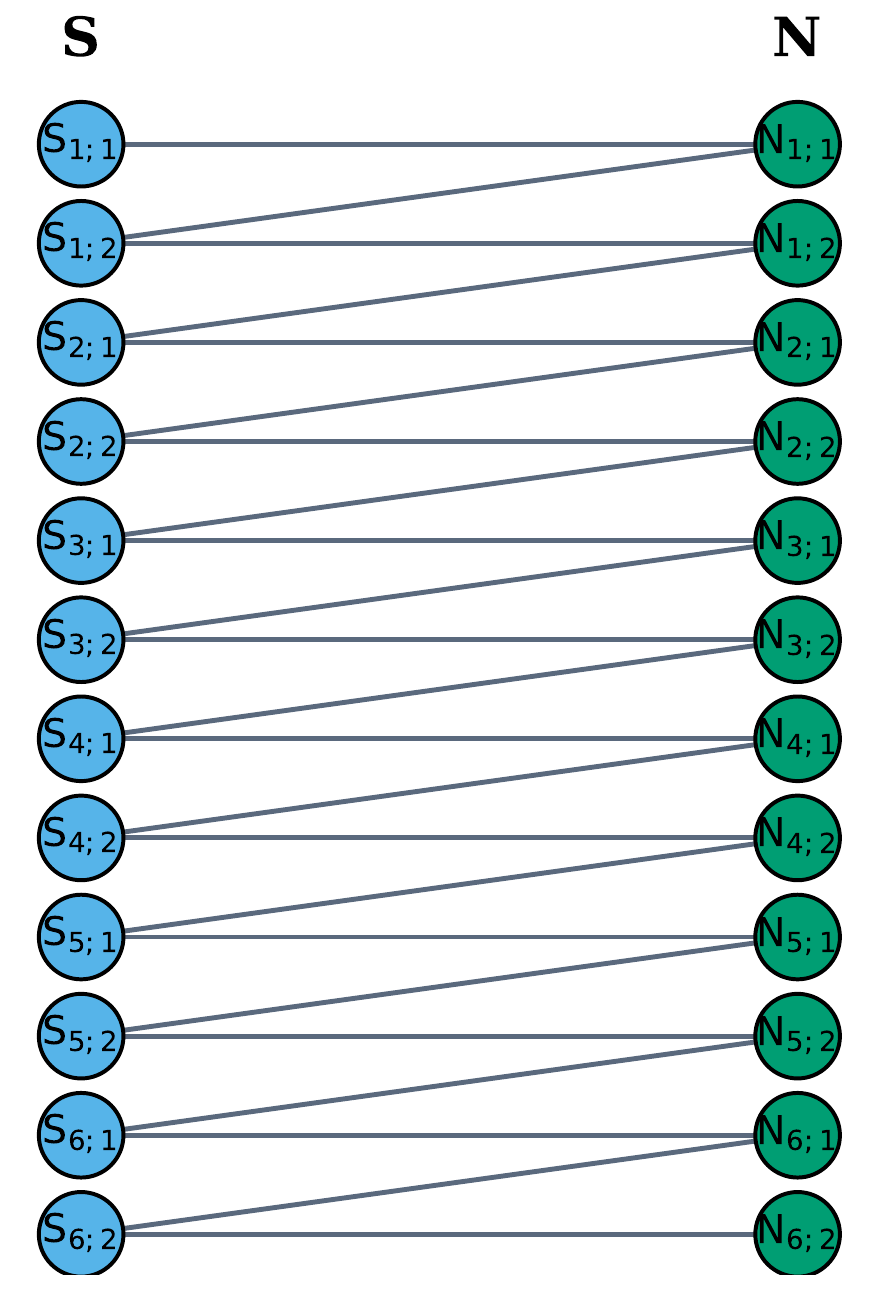}%
}
\hfill
\subcaptionbox{
    Fixed $G(24,0.35)$ sample
    \label{fig:sparse}
}[0.22\textwidth]{%
    \centering
    \includegraphics[
        width=\linewidth,
        trim=8 8 8 8,
        clip
    ]{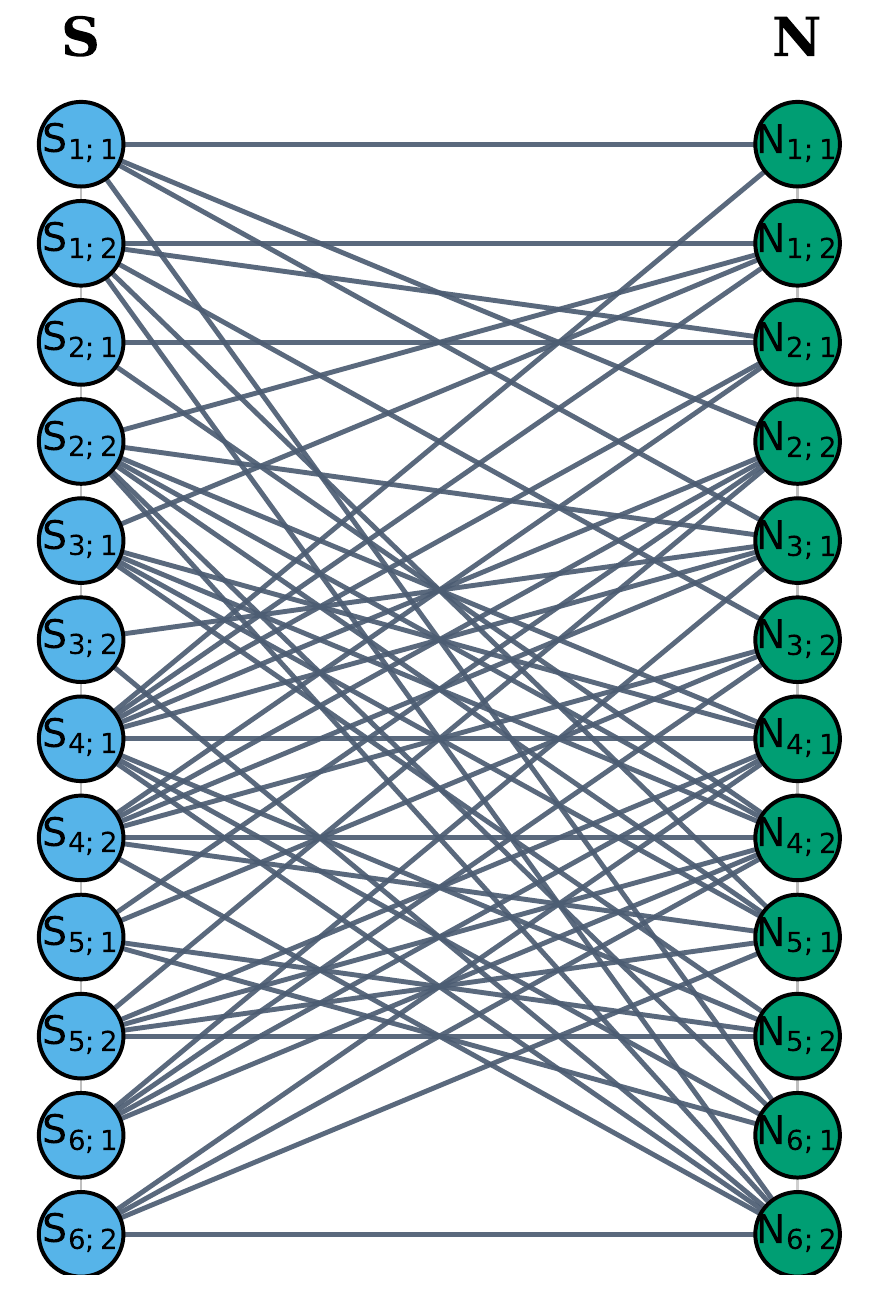}%
}
\hfill
\subcaptionbox{
    Cycle $C_{24}$
    \label{fig:cycle}
}[0.22\textwidth]{%
    \centering
    \includegraphics[
        width=\linewidth,
        trim=8 8 8 8,
        clip
    ]{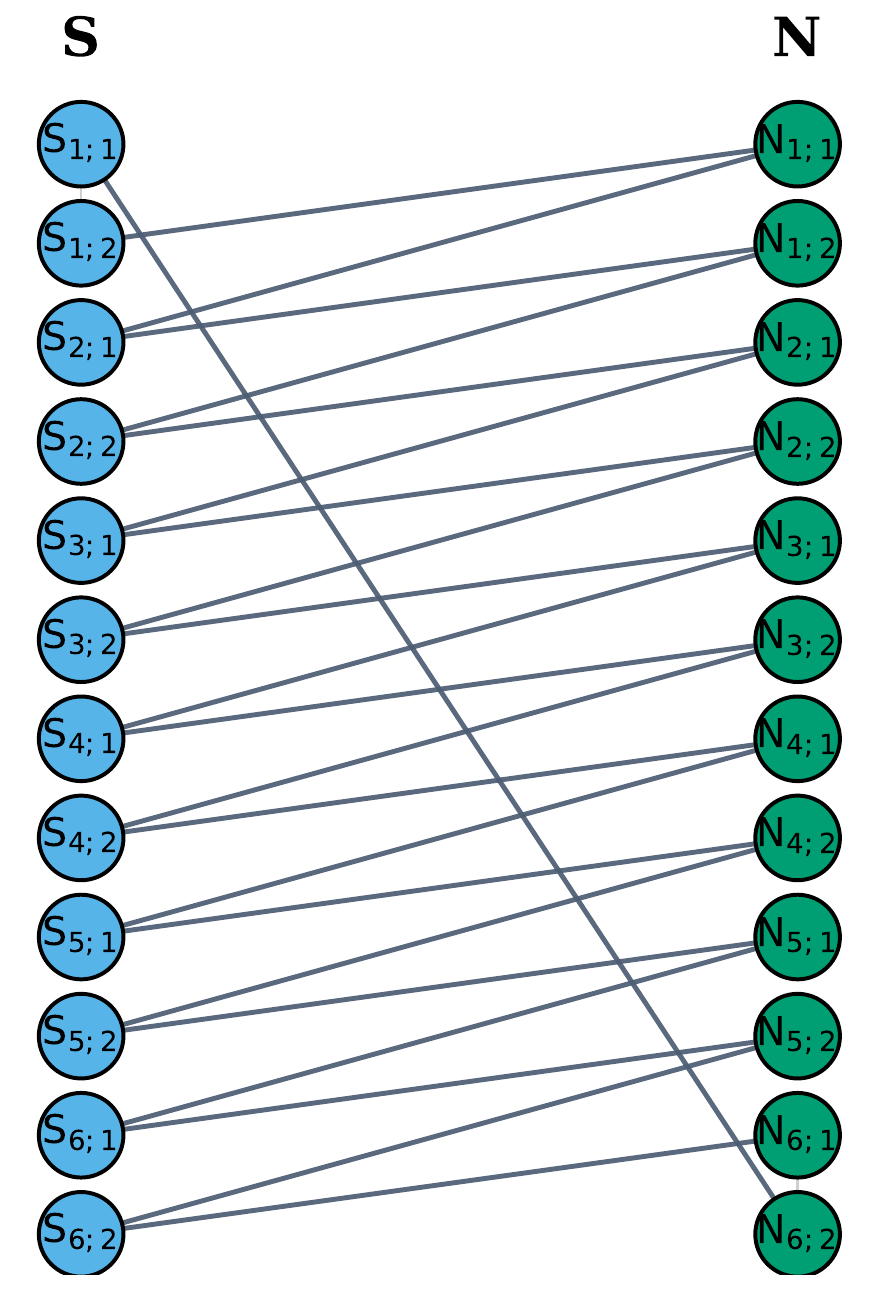}%
}

\vspace{1.2mm}

\subcaptionbox{
    Complete graph $K_{24}$
    \label{fig:complete}
}[0.31\textwidth]{%
    \centering
    \includegraphics[
        width=\linewidth,
        trim=8 8 8 8,
        clip
    ]{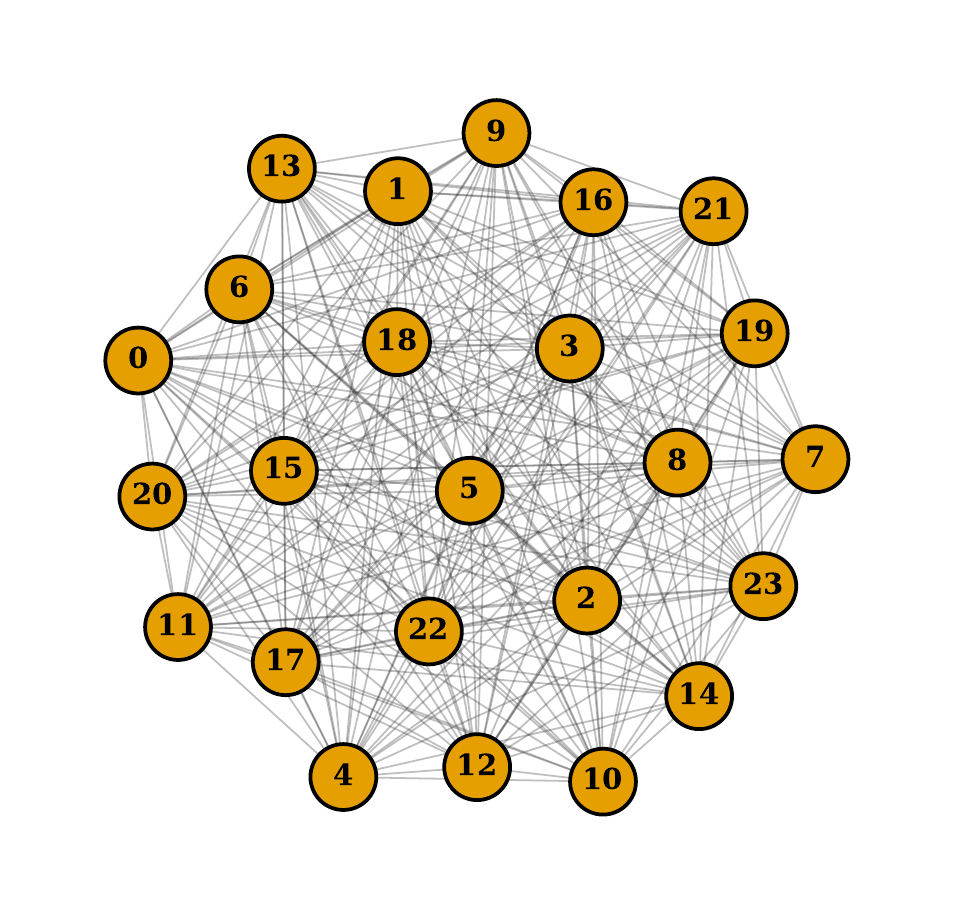}%
}
\hfill
\subcaptionbox{
    Fixed $G(24,0.9)$ sample
    \label{fig:dense}
}[0.31\textwidth]{%
    \centering
    \includegraphics[
        width=\linewidth,
        trim=8 8 8 8,
        clip
    ]{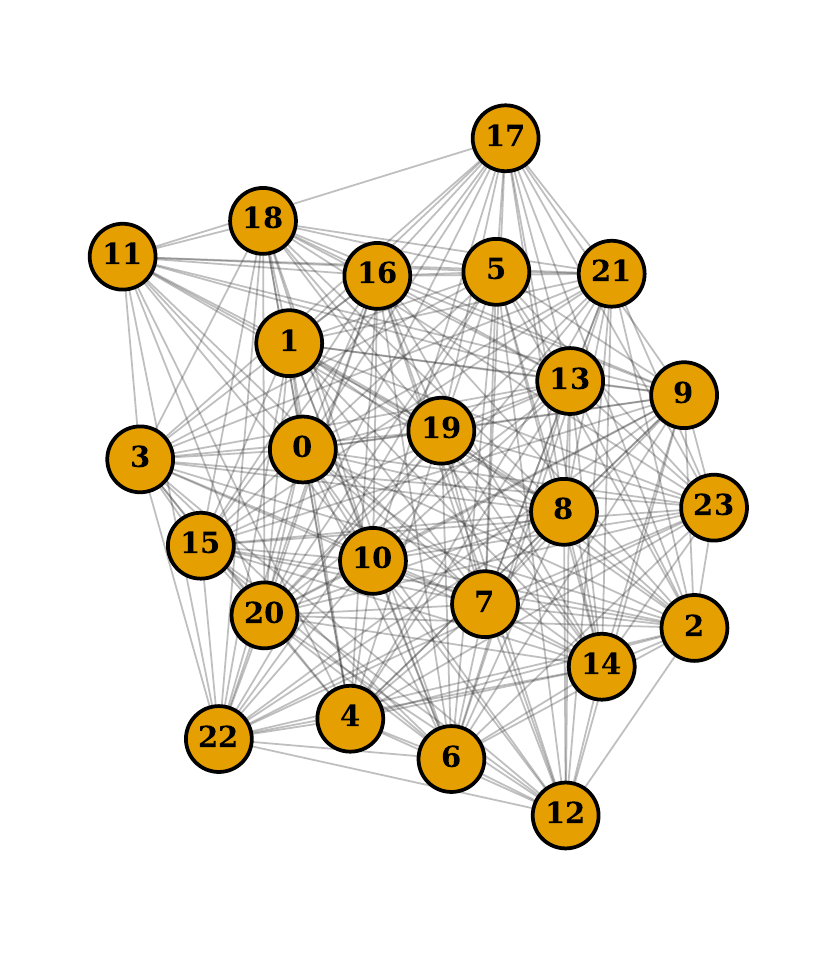}%
}
\hfill
\subcaptionbox{
    Star / GHZ graph
    \label{fig:star}
}[0.31\textwidth]{%
    \centering
    \includegraphics[
        width=\linewidth,
        trim=8 8 8 8,
        clip
    ]{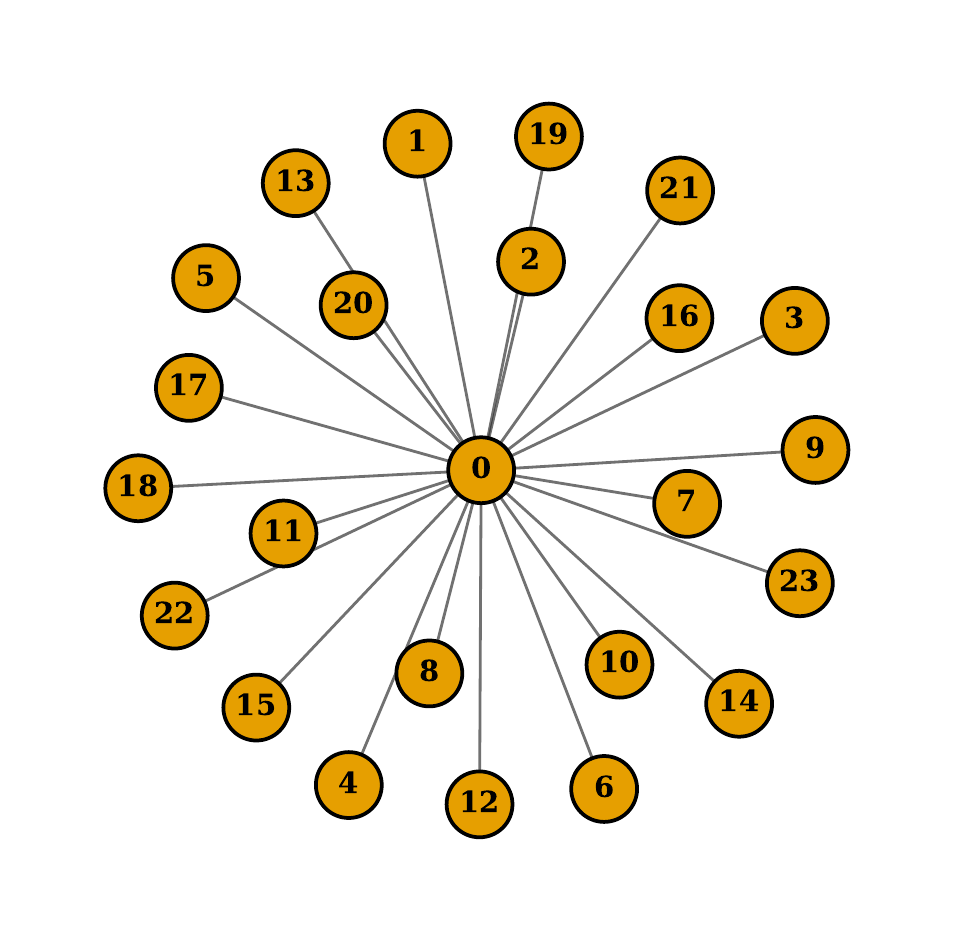}%
}

\caption{\footnotesize
Representative $24$-qubit graph states for the
$(m,k)=(2,6)$ task.
\textbf{Top row:} certified resources:
(a) a path / linear cluster,
(b) a fixed sparse Erd\H{o}s--R\'enyi sample
$G(24,0.35)$~\cite{erdos1959,gilbert1959},
and (c) the cycle $C_{24}$.
Each admits a balanced cut with
$\rank_{\mathbb F_2} B_{S\mathcal N}=12$.
\textbf{Bottom row:} excluded resources:
(d) the complete graph $K_{24}$,
(e) a fixed dense Erd\H{o}s--R\'enyi sample
$G(24,0.9)$~\cite{erdos1959,gilbert1959},
and (f) the star/GHZ graph.
For each excluded graph, the maximum cut rank over all balanced
signal--noise bipartitions is strictly smaller than $12$.
The Erd\H{o}s--R\'enyi examples are fixed sampled realizations and
should not be interpreted as evidence for an edge-density threshold;
their explicit adjacency data are given in the Supplemental
Material~\cite{supplemental}.
}
\label{fig:main_graph_certification}

\end{figure*}
\subsection{Exact rank-deficient resources and encoder dependence}
\label{subsec:resource_encoder_dependence}

The more distinctive consequence of
Theorem~\ref{thm:graph_exact_main} appears when the cut is
rank deficient. In the exceptional odd-$m$ branches, exact recovery
can survive even when
\[
\rank_{\mathbb F_2}B_{S\mathcal N}<mk,
\]
provided that every missing cut direction belongs to the exceptional
subspace selected by the fixed encoder.

The smallest example already exhibits both exact rank-deficient
recovery and encoder dependence. Let $m=3$, $k=1$, and choose
\[
A_S=A_{\mathcal N}=0,
\qquad
B=
\begin{pmatrix}
1&0&0\\
0&1&0\\
1&1&0
\end{pmatrix}.
\]
Then
\[
\rank_{\mathbb F_2}B=2,
\qquad
\ker(B^{\mathsf T})
=
\operatorname{span}\{(1,1,1)^{\mathsf T}\}.
\]

\begin{observation}[Rank-deficient exact recovery and encoder dependence]
\label{obs:main_rank_deficient_encoder_dependence}
For the fixed graph-state representative and cut above, the resource
is valid for the $YZ/ZY$ encoder class and invalid for the $XZ/ZX$
class. In the valid branch, every prescribed subsystem
$S_i\mathcal N$ recovers the complete input qubit exactly, although
the resource contains only two ebits across the signal--noise cut.
\end{observation}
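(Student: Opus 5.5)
The plan is to verify the inclusion in Eq.~\eqref{eq:main_exact_graph_condition} directly for both encoder classes, using the explicit exceptional subspaces of Eq.~\eqref{eq:main_exceptional_subspace}. Validity and invalidity then follow from Theorem~\ref{thm:graph_exact_main}, or equivalently from the $k=1$ specialization in Corollary~\ref{cor:main_graph_k1}. Here $m=3$ is odd and $k=1$, so the sector-indicator map $F$ has image $\operatorname{span}\{\mathbf 1_3\}$.

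The first step is to confirm the kernel. $B^{\mathsf T}$ has rows $(1,0,1)$, $(0,1,1)$ and $(0,0,0)$. A vector $x$ lies in $\ker B^{\mathsf T}$ iff $x_1+x_3=0$ and $x_2+x_3=0$, which gives $\mathcal K=\operatorname{span}\{\mathbf 1_3\}$ and $r=2$. By the cut-rank relation, $S(\rho_S)=2$, which accounts for the ``two ebits'' in the statement.

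For the $YZ/ZY$ class, the exceptional space is $\{Fc: A_SFc=0\}$. Since $A_S=0$, the condition $A_S\mathbf 1_3=0$ holds trivially, so $\mathcal E_X(A_S)=\operatorname{span}\{\mathbf 1_3\}=\mathcal K$. The second alternative of Corollary~\ref{cor:main_graph_k1} is therefore satisfied, and the resource is valid. As a sanity check, $K_S(\mathbf 1_3)\doteq X^{\otimes 3}$, and $X=\pm R$ for $R=iYZ$, consistent with the general odd-$m$ form of Theorem~\ref{thm:exact_general_resource_main}. Exact recovery of the input qubit from every $S_i\mathcal N$, $i=1,2,3$, follows, and an explicit Clifford decoder is supplied by Proposition~\ref{prop:constructive_decoder_main}.

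For the $XZ/ZX$ class, I will show that $\mathbf 1_3\notin\mathcal E_Y(A_S)$, so the inclusion fails. There are two independent reasons. First, the parity condition $\mathbf 1_k^{\mathsf T}c=0$ with $k=1$ forces $c=0$. Second, $(A_S+I)\mathbf 1_3=\mathbf 1_3\neq 0$. Operationally, $K_S(\mathbf 1_3)=X^{\otimes 3}$ is not the encoder's exceptional direction $Y^{\otimes3}$. Theorem~\ref{thm:graph_exact_main} then says that exact recovery fails for at least one $S_i\mathcal N$, regardless of the recovery map.

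There is no substantive obstacle in this argument. The only point needing care is phase and sign bookkeeping in $K_S(\mathbf 1_3)\doteq X^{\mathbf 1}Z^{A_S\mathbf 1}$ and in the identification of $R$. This does not affect validity, because membership in the exceptional subspaces is a support condition and is insensitive to phases.
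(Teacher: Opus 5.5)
Your proposal is correct and follows essentially the paper's argument. It identifies the unique kernel direction $\mathbf 1_3$ with $K_S(\mathbf 1_3)\doteq X^{\otimes 3}$, notes that this direction lies in $\mathcal E_X(A_S)$ because $A_S=0$ but violates both $(A_S+I)\mathbf 1_3=0$ and the sector-parity condition defining $\mathcal E_Y(A_S)$, and concludes via Theorem~\ref{thm:graph_exact_main} with $S(\rho_S)=\rank B=2$; routing the valid case through Corollary~\ref{cor:main_graph_k1} rather than Theorem~\ref{thm:graph_exact_main} directly is only a cosmetic difference.
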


The mechanism follows directly from the unique nonzero kernel vector
$x=(1,1,1)^{\mathsf T}$. Since $A_S=0$,
\[
K_S(x)\doteq X^{\otimes3}.
\]
For the $YZ/ZY$ class, $R=\pm X$, so $X^{\otimes3}$ is precisely the
exceptional complete-sector direction. Hence
\[
\ker(B^{\mathsf T})
\subseteq
\mathcal E_{Y,Z}^{(3,1)}(A_S)
\]
and Theorem~\ref{thm:graph_exact_main} gives exact recovery from every
$S_i\mathcal N$.

For the $XZ/ZX$ class, by contrast, the exceptional direction is
$Y^{\otimes3}$. The same kernel vector is then nonexceptional, so the
exact kernel inclusion fails. The physical graph, the bipartition,
the cut rank, and the complete kernel are unchanged; only the fixed
encoder frame has changed.

This comparison is relative to the fixed graph-state Pauli frame.
As discussed in Sec.~\ref{subsec:main_graph_kernel}, simultaneous
single-qubit Clifford rotations of the input and signal systems map
between Pauli-axis encoder classes and compatible resource
representatives without changing the signal--noise entanglement.
Observation~\ref{obs:main_rank_deficient_encoder_dependence} therefore
demonstrates compatibility of a fixed graph-state representative with
different fixed encoder frames, rather than an absolute inequivalence
of the encoder families.

The example also saturates the architecture-independent lower bound:
\[
S(\rho_S)
=
\rank_{\mathbb F_2}B
=
2
=
(m-1)k.
\]
Thus exact three-output recovery is achieved with the minimum
signal--noise entanglement allowed by
Proposition~\ref{prop:all_clone_entropy_main}. The missing cut ebit
corresponds exactly to an exceptional kernel direction and therefore
does not obstruct complementary decoupling.

This construction should also be distinguished from counting the
transformed source register as an additional encrypted output in the
original encrypted-cloning protocol~\cite{yamaguchi2026encrypted}.
Here the three designated outputs are the prescribed signal registers
$S_1,S_2,S_3$, each recovery receives the complete common key
$\mathcal N$, and validity is established directly from the initial
resource marginal. The example therefore shows that the
architecture-independent entanglement floor is attainable within the
fixed recovery architecture considered here.

The example separates the three structural quantities entering the
classification. The cut rank determines the signal--noise
entanglement, the cut kernel identifies the residual signal-side
stabilizer correlations, and the fixed encoder determines whether
those correlations are compatible with exact recovery. A connected
representative and the explicit stabilizer verification are given in
the Supplemental Material~\cite{supplemental}.

\subsection{Representative finite-size certification for full rank }
\label{subsec:finite_graph_landscape}

We next illustrate the certification procedure on representative
$24$-qubit graph states with $\nu=12,
(m,k)=(2,6).$ Because $m$ is even,
Corollary~\ref{cor:main_graph_parity_classes} reduces validity to the
existence of a balanced signal--noise cut with
\[
\rank_{\mathbb F_2}B_{S\mathcal N}=12.
\]
Thus this example isolates the full-rank branch of the classification:
sector assignments and Pauli-frame distinctions play no role once an
invertible cut matrix is found.

Fig~\ref{fig:main_graph_certification} shows representative
certified and excluded graph states. The examples are chosen to
illustrate that full cut rank can occur in graph states with very
different connectivity, while highly connected graphs need not admit
a valid balanced cut. Table~\ref{tab:main_24q_resources} summarizes
the maximal balanced-cut rank and, for certified resources, the
structure of the corresponding noise-side Bell reduction. Explicit
certifying cuts, cut matrices, balanced-cut search data, and circuit
syntheses are given in the Supplemental Material~\cite{supplemental}
and the accompanying implementation~\cite{Roy_Gupta2026Github}.

For every certified full-rank cut, the same binary data
$(A_S,B,A_{\mathcal N})$ that establish validity also determine the
noise-side Clifford
$U_{\mathcal N}$ of
Eq.~\eqref{eq:main_explicit_bell_reduction}. Composing this
canonicalization with an exact decoder for the Bell resource gives
the authorized recovery map. Certification and constructive recovery
are therefore obtained from the same graph data.

The certified examples all possess maximal Schmidt rank
$2^{12}$ across an appropriate balanced cut and therefore support the
same exact $(2,6)$ recovery task. Their recovery structure is
nevertheless different. For the perfect matching, the noise-side Bell
reduction is a product of single-qubit Clifford operations up to a
permutation. The path, cycle, rectangular cluster, and sampled sparse
graph instead require collective Clifford processing.
Observation~\ref{obs:main_decoder_locality} makes this distinction
exact: the canonicalization of
Eq.~\eqref{eq:main_explicit_bell_reduction} is local, up to a
permutation, if and only if the chosen graph realization is a perfect
matching across $S:\mathcal N$.

The excluded examples fail for different structural reasons. For the
complete graph $K_{24}$, every balanced cut has
\[
B_{S\mathcal N}=J_{12},
\qquad
\rank_{\mathbb F_2}B_{S\mathcal N}=1.
\]
The star graph likewise has balanced-cut rank one. Its graph state lies
in the local-Clifford orbit of the GHZ state, and local complementation
relates the star and complete graphs
~\cite{greenberger1989going,hein2004,vandennest2004graphical}.
Both therefore fail the even-$m$ full-rank criterion.

The two Erd\H{o}s--R\'enyi entries serve only as finite sampled
examples~\cite{erdos1959,gilbert1959}. The chosen
$G(24,0.35)$ realization admits a full-rank balanced cut, whereas the
chosen $G(24,0.9)$ realization has maximum balanced-cut rank $11$.
No monotonic relation between edge density and encrypted-cloning
validity is inferred from these two samples. Rather, they illustrate
that neither edge density nor connectivity alone determines the
independent binary correlations available across the operational cut.

Finally, the full-rank canonicalization used in this subsection is
distinct from recovery for the exceptional rank-deficient resources
discussed in Sec.~\ref{subsec:resource_encoder_dependence}. If
\[
\rank_{\mathbb F_2}B_{S\mathcal N}<\nu,
\]
the Schmidt rank across $S:\mathcal N$ is smaller than $2^\nu$, so no
unitary acting only on $\mathcal N$ can convert the resource into
$\ket{\Phi_{2^\nu}}$. Exact recovery can nevertheless remain possible
when the complete cut kernel is encoder compatible. In that regime,
recovery is provided by the general authorized Clifford construction
of Proposition~\ref{prop:constructive_decoder_main}, rather than by a
noise-side Bell canonicalization.

\begin{table*}[t]
\caption{\footnotesize
Representative $24$-qubit graph-state resources for
$(m,k)=(2,6)$. The reported rank is maximized over balanced
$12|12$ cuts. For every certified resource, the displayed operation
is the noise-side canonicalization of
Eq.~\eqref{eq:main_explicit_bell_reduction}; the complete authorized
decoder is obtained by composing it with the Bell-resource decoder.
``Local'' refers only to locality of this noise-side canonicalization,
up to permutation of the noise qubits. Explicit matrices and circuit
syntheses are given in the Supplemental Material.}
\label{tab:main_24q_resources}
\centering
\begin{ruledtabular}
\begin{tabular}{lcccc}
Graph resource
&
$\displaystyle
\max_{S:\mathcal N}
\rank_{\mathbb F_2}B_{S\mathcal N}$
&
Status
&
Noise-side reduction
&
Decoder type
\\
\hline

Perfect matching (locally Clifford equivalent to $\nu$ Bell pairs)
&
$12$
&
Valid
&
$H^{\otimes12}$
&
Local
\\

Path / linear cluster
&
$12$
&
Valid
&
$L_{M_P}H^{\otimes12}$
&
Collective
\\

Cycle $C_{24}$
&
$12$
&
Valid
&
$CZ_{E_L}H^{\otimes12}CZ_{E_R}$
&
Collective
\\

$G_{2\times12}^{\mathrm{cl}}$
&
$12$
&
Valid
&
$CZ_{P_{12}}H^{\otimes12}CZ_{P_{12}}$
&
Collective
\\

Fixed $G(24,0.35)$ sample
~\cite{erdos1959,gilbert1959}
&
$12$
&
Valid
&
$\begin{aligned}
CZ_{A_S^{\rm ER}}
L_{M_{\rm ER}}
H^{\otimes12}
CZ_{A_{\mathcal N}^{\rm ER}}
\end{aligned}$
&
Collective
\\

Fixed $G(24,0.9)$ sample
~\cite{erdos1959,gilbert1959}
&
$11$
&
Invalid
&
---
&
---
\\

Complete graph $K_{24}$
&
$1$
&
Invalid
&
---
&
---
\\

Star / GHZ graph
&
$1$
&
Invalid
&
---
&
---
\end{tabular}
\end{ruledtabular}
\end{table*}
\subsection{Non-graph resources and no-go constraints}
\label{subsec:beyond_graph_resources}

Theorem~\ref{thm:exact_general_resource_main} applies to arbitrary
pure resources and is not restricted to graph or stabilizer states.
To illustrate the distinction between multipartite entanglement and
the correlation structure required for encrypted recovery, we consider
Dicke states, a canonical permutation-symmetric family
~\cite{dicke1954,MorenoParisio2018,MunizziSchnitzer2024}.

For an $n$-qubit system, the Dicke state with $r$ excitations is
\[
\ket{D_n^{(r)}}
=
\binom nr^{-1/2}
\sum_{\substack{z\in\{0,1\}^{n}\\ |z|=r}}
\ket z.
\]
For the encrypted-cloning resource we set
\[
n=2\nu=2mk.
\]
Permutation symmetry makes every balanced $\nu|\nu$ bipartition
equivalent. The Schmidt decomposition across such a cut implies
\[
\rank\rho_S\le\nu+1,
\qquad
S(\rho_S)\le\log_2(\nu+1).
\]
Thus even genuinely multipartite-entangled members of the Dicke
family have balanced signal marginals supported on only a small
subspace of the full $2^\nu$-dimensional Hilbert space.

For the prescribed sector-wise two-Pauli architecture, the exclusion
is exact.

\begin{observation}[Dicke-state obstruction]
\label{obs:main_dicke_obstruction}
For every $m\ge2$, $k\ge1$, and
$r=0,\ldots,2mk$, the Dicke state
$\ket{D_{2mk}^{(r)}}$ fails the exact pure-resource criterion of
Theorem~\ref{thm:exact_general_resource_main} for every sector-wise
two-Pauli encoder $U_{P,Q}^{(m,k)}$.
\end{observation}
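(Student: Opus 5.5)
Our plan is to show directly that the balanced signal marginal of a Dicke state can never have the form required by Eq.~\eqref{eq:main_exact_pure_resource}. In fact we do not need the detailed Pauli structure in the odd-$m$ branch. It suffices to use the necessary rank condition that every admissible $\rho_S$ must satisfy.

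The first step is to compute $\rho_S$ for $\ket{D_{2\nu}^{(r)}}$ across any balanced cut; by permutation symmetry, all such cuts are equivalent. The Schmidt decomposition reads
\[
\ket{D_{2\nu}^{(r)}}=\sum_{s}\lambda_s\ket{D_\nu^{(s)}}_S\ket{D_\nu^{(r-s)}}_{\mathcal N},
\qquad
\lambda_s^2=\binom{\nu}{s}\binom{\nu}{r-s}\Big/\binom{2\nu}{r}.
\]
Hence $\rho_S$ is diagonal in the symmetric Dicke basis of $S$, and
\[
\rank\rho_S\le\nu+1,
\]
which the paper has already noted.

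The second step compares this with Eq.~\eqref{eq:main_exact_pure_resource}. For even $m$ the theorem demands $\rho_S=I_S/2^\nu$, which has rank $2^\nu$. For odd $m$ we argue as follows. The operators $T(c)$ are commuting Paulis forming a group of order $2^k$. An operator $\frac{1}{2^\nu}\sum_c\alpha_cT(c)$ is a combination of projectors onto the joint eigenspaces of $T_1,\ldots,T_k$. Each joint eigenspace has dimension $2^{\nu-k}=2^{(m-1)k}$, since the $T_j$ are independent commuting Paulis. Therefore any nonzero admissible $\rho_S$ has rank at least $2^{(m-1)k}$; Corollary~\ref{cor:main_spectral_constraint} gives the same conclusion for any encoder. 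Validity would thus require
\[
2^{(m-1)k}\le\nu+1=mk+1 .
\]

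The third step, and the main obstacle, is that this inequality does not always fail. It fails whenever $2^{(m-1)k}>mk+1$, but small cases survive it. For $m=2$ the requirement becomes $2^\nu$ against $\nu+1$, which fails for all $\nu=2k\ge2$, so even $m$ is done. For odd $m\ge3$, $(m-1)k\ge2$, and the inequality $2^{(m-1)k}\le mk+1$ holds only for $(m,k)=(3,1)$: there $4\le4$, while $(3,2)$ gives $16>7$ and $(5,1)$ gives $16>6$.

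So the case $m=3$, $k=1$, i.e.\ $\ket{D_6^{(r)}}$ with $\nu=3$, needs the finer structure. Here we would compare directly. An admissible $\rho_S$ must be $\frac18(I+\alpha R^{\otimes3})$ with $|\alpha|\le1$. By contrast, the Dicke marginal is supported on the symmetric subspace with Dicke-basis weights $\lambda_s^2$.
\begin{itemize}
\item For $r=0$ or $r=6$ the marginal is pure, of rank one, which is already excluded.
\item Otherwise, if $\alpha\neq\pm1$ the admissible state has full rank $8>4$, so it cannot match a marginal of rank at most $4$.
\item If $\alpha=\pm1$, the admissible state is the normalized projector onto an eigenspace of $R^{\otimes3}$. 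That projector has rank $4$ and flat spectrum $1/4$. The Dicke weights $\lambda_s^2=\binom3s\binom3{r-s}/\binom6r$ are never all equal to $1/4$; for $r=3$, for instance, they are $(1,9,9,1)/20$.
\end{itemize}
A flat spectrum is forced in every case, so the mismatch of spectra excludes all $r$. This finite check completes the exclusion. Since nothing in the argument depends on the choice of $(P,Q)$, the obstruction holds for every sector-wise two-Pauli encoder, and in particular for $W=D^{(1)}$.
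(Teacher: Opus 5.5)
Your proof is correct, but in the odd-$m$ branch it takes a different route from the paper. The even-$m$ case is the same in both: full rank $2^{mk}$ is required, while $\rank\rho_S\le mk+1$.

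For odd $m$ the paper argues locally. Every admissible marginal must satisfy $\rho_{S_{j,a}S_{j,b}}=I_4/4$ for any two qubits in the same sector, because partial traces annihilate every $T(c)$ with $c\neq0$ (using $\Tr R=0$ and $m\ge3$). The Dicke two-qubit marginal is instead a pure product state for $r=0,n$, and otherwise carries a nonzero $\ket{01}\!\bra{10}+\ket{10}\!\bra{01}$ coherence. That argument is uniform in $m$, $k$ and $r$ and needs no case analysis.

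You argue globally and spectrally. The joint eigenspaces of $T_1,\ldots,T_k$ force every eigenvalue multiplicity of an admissible $\rho_S$ to be a multiple of $2^{(m-1)k}$. Comparing this with the Dicke rank bound leaves only $(m,k)=(3,1)$. You handle that case correctly: the only rank-$4$ Dicke marginal ($r=3$) has spectrum $(1,9,9,1)/20$, not the flat $1/4$ required when $\alpha=\pm1$.

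What your route buys is a clear view of where the obstruction comes from. The condition $2^{(m-1)k}>mk+1$ is exactly the paper's entropy condition $(m-1)k>\log_2(mk+1)$. Through Corollary~\ref{cor:main_spectral_constraint} it already excludes Dicke resources for any unitary encoder on $AS$. So in your proof the specific two-Pauli structure enters only through the full-rank requirement for even $m$ and the single residual case $(3,1)$. What the paper's route buys is uniformity, and a structural diagnosis that does not rely on the Dicke marginal having small rank.

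Two small slips should be fixed. First, the even case should read ``for every even $m$, $2^{\nu}>\nu+1$ since $\nu=mk\ge2$'', rather than being phrased for $m=2$ with $\nu=2k$. Second, ``a flat spectrum is forced in every case'' is inaccurate. Flatness is forced only when $\alpha=\pm1$; the case $|\alpha|<1$ is excluded by rank $8$, as your own bullets already show.
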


For even $m$, the obstruction is already visible from the rank of the
signal marginal. Theorem~\ref{thm:exact_general_resource_main}
requires
\[
\rho_S=\frac{I_S}{2^{mk}},
\]
whereas a balanced Dicke marginal satisfies
\[
\rank\rho_S\le mk+1<2^{mk}.
\]

For odd $m$, the obstruction is structural. Theorem~
\ref{thm:exact_general_resource_main} permits nonidentity signal
components only through products of complete-sector operators
$R_{\mathsf S_j}^{\otimes m}$. If one traces out even a single qubit
from an active sector, that contribution vanishes because
$\Tr R=0$. Consequently, every reduced state supported strictly
within a single sector is maximally mixed. In particular, for any
two distinct qubits $S_{j,a}$ and $S_{j,b}$ in the same sector,
\[
\rho_{S_{j,a}S_{j,b}}
=
\frac{I_4}{4}.
\]

Dicke states violate this condition for every excitation number. At
$r=0$ and $r=n$, the two-qubit marginal is a pure product state,
whereas for $0<r<n$ it retains excitation-number correlations and a
nonzero
$\ket{01}\!\bra{10}+\ket{10}\!\bra{01}$ coherence. The explicit
two-qubit reduced-state calculation is given in the Supplemental
Material~\cite{supplemental}. Hence the entire Dicke family is
excluded for the fixed two-Pauli architecture, independently of the
entropy estimate.

A logically independent obstruction follows from
Proposition~\ref{prop:all_clone_entropy_main}, which assumes only an
arbitrary unitary encoder on $AS$ acting trivially on the common key
$\mathcal N$. Exact recovery from every $S_i\mathcal N$ requires
\[
S(\rho_S)\ge(m-1)k.
\]
Combining this with the Dicke entropy bound gives
\[
\begin{gathered}
(m-1)k>\log_2(mk+1)
\\[1mm]
\Longrightarrow\quad
\text{Dicke resource impossible}.
\end{gathered}
\]
This implication is architecture independent within the class of
unitary encoders considered here. It is only a necessary obstruction:
when the inequality is not violated, the entropy bound alone does not
establish the existence of a successful encoder. By contrast,
Observation~\ref{obs:main_dicke_obstruction} excludes the entire
Dicke family for the prescribed sector-wise two-Pauli architecture.

The single-output boundary shows the same distinction. For $m=1$ and
$k=1$, perfect concealment together with exact recovery would require
an $\operatorname{AME}(4,2)$ encoded Choi state, which does not exist
~\cite{HiguchiSudbery2000}. For $k>1$, suitable
$\operatorname{AME}(4,2^k)$ states do exist, so the unrestricted
single-output task is possible in principle
~\cite{HelwigEtAl2012}. Dicke resources nevertheless fail because
their balanced signal marginal satisfies
\[
\rank\rho_S\le k+1<2^k,
\]
and therefore cannot be maximally mixed.

The $W$ state is the single-excitation Dicke state, $\ket{W_{2\nu}}
=
\ket{D_{2\nu}^{(1)}}.$ Across every balanced cut it has Schmidt rank two and exactly one ebit
of signal--noise entanglement. Observation~
\ref{obs:main_dicke_obstruction} therefore excludes it for every
$m\ge2$ within the fixed two-Pauli architecture. Independently,
Proposition~\ref{prop:all_clone_entropy_main} excludes it for an
arbitrary unitary encoder whenever
\[
(m-1)k>1.
\]
The only point with $m\ge2$ not excluded by this entropy bound alone is
$(m,k)=(2,1),$ which remains excluded by the exact fixed-encoder criterion.

The Dicke family therefore exhibits two distinct failure mechanisms.
The first is structural: its signal correlations are incompatible with
the operator algebra permitted by the prescribed two-Pauli encoder.
The second is architecture independent: over a broad parameter range,
its signal--noise entanglement is insufficient for exact all-output
recovery.

Together with the complete and star/GHZ graph-state examples, these
results show that multipartite entanglement or graph connectivity
alone does not determine usefulness for encrypted cloning. For graph
states, the relevant correlation structure is captured exactly by the
cut kernel; for arbitrary pure resources, it is characterized by
Theorem~\ref{thm:exact_general_resource_main}.

\section{Discussion and outlook}
\label{sec:conclusion}

We have solved the 
resource state problem for multiparty encrypted
quantum cloning under the fixed sector-wise two-Pauli architecture:
given the prescribed encoder and authorized subsystems
$S_i\mathcal N$, which pure multipartite resources permit exact recovery
of the complete $k$-qubit input? The answer is decoder independent and
reveals a distinction between the \emph{amount} of signal--noise
entanglement and the \emph{structure} of the correlations that remain
when this entanglement is reduced. For even $m$, exact all-output
recovery requires maximal signal--noise entanglement. For odd $m$,
nonmaximally entangled resources are possible, but only when their
surviving signal correlations lie in the commuting algebra selected by
the encoder. Exact recovery from every $S_i\mathcal N$ also implies
perfect concealment of every individual signal $S_i$. This concealment is strictly individual: joint collections of signal
outputs may retain additional input observables and, for some
admissible resources, may determine the complete input state.

This perspective also places our results within the broader development of
encrypted cloning. The original proposal introduced key-assisted
alternative recovery pathways~\cite{yamaguchi2026encrypted} and was later
demonstrated experimentally~\cite{yamaguchi2026experimental}. Subsequent
work extended the construction to higher dimensions~\cite{ceara2026qudit},
connected encrypted cloning with absolutely maximally entangled states
and quantum secret sharing~\cite{lim2026ame}, analyzed information leakage
from unauthorized subsystems~\cite{gianini2026leak,gianini2026full,bai2026}, and developed more general secret-sharing-based access structures
~\cite{gianini2026access}. Our focus is complementary: we keep the
recovery architecture fixed and ask which multipartite pure resources can
realize it exactly.

For graph states, the resource criterion becomes particularly
transparent. Bipartite entanglement is determined by the binary cut
rank~\cite{hein2004,hein2006entanglement,FattalEtAl2004}, but cut rank
alone is not sufficient to determine encrypted-cloning capability. The
complete kernel
$\ker B_{S\mathcal N}^{\mathsf T}$ specifies the signal correlations
that survive across a rank-deficient cut, and exact recovery holds
precisely when those directions belong to the encoder-compatible
exceptional subspace. Thus graph resources with the same cut entropy can
have different operational behavior. In a fixed graph-state Pauli
frame, even the same graph and bipartition can be compatible with one
encoder class and incompatible with another.

The rank-deficient branch makes this separation explicit. Our
three-output example achieves exact recovery with
$S(\rho_S)=(m-1)k$, saturating the architecture-independent
signal--noise entanglement lower bound, while the same graph
realization fails after changing the fixed encoder class. The missing
cut ebit is therefore not an uncontrolled loss of correlation: it is
carried by an exceptional kernel direction that does not obstruct
complementary decoupling. Away from the maximally entangled regime, the
relevant operational object is consequently the
\emph{resource--encoder pair}, rather than the resource state or its
entanglement entropy alone.

The architecture-independent results delimit how far this conclusion
extends beyond the two-Pauli construction. For an arbitrary unitary
encoder on $AS$ acting trivially on the common key $\mathcal N$, exact
all-output recovery imposes
$S(\rho_S)\ge(m-1)k$, a corresponding spectral degeneracy constraint,
and perfect individual concealment. These conditions are necessary,
not sufficient: the fixed-encoder theorem supplies the additional
correlation structure required for exact recovery. The $m=1$ boundary
shows the same distinction from another direction. Unrestricted
single-output recovery is characterized by an
$\operatorname{AME}(4,2^k)$ Choi structure, whereas the prescribed
sector-wise encoder remains obstructed
~\cite{HiguchiSudbery2000,HelwigEtAl2012,lim2026ame}.

The graph-state theory is also constructive. \textsc{GSECC} converts
the exact kernel criterion into a certification procedure and, for
every certified realization, the stabilizer structure yields an
explicit Clifford recovery. For full-rank cuts, the graph data further
determine a Clifford acting only on $\mathcal N$ that maps the resource
to the canonical maximally entangled form. The representative graph
families and the Dicke/$W$ exclusions illustrate why neither genuine
multipartite entanglement nor connectivity alone characterizes
usefulness for encrypted recovery
~\cite{dicke1954,dur2000three,hein2004}.

Several extensions are immediate. The central open problem is an exact
resource characterization for arbitrary encoders with $m\ge2$: the
present entropy and spectral constraints identify necessary structure,
but not general sufficiency. Recovery--decoupling methods provide a
natural starting point~\cite{SchumacherNielsen1996}. Mixed resources,
approximate recovery, and bounded information leakage would lead to
robust versions of the present criteria, directly relevant to noisy
implementations~\cite{yamaguchi2026experimental}. Extending the
stabilizer analysis to qudits should clarify which features of the
present parity structure are specifically binary
~\cite{ceara2026qudit}. More general authorized families would connect
the resource--encoder classification to quantum secret sharing and
access-structure constructions
~\cite{hillery1999quantum,CleveGottesmanLo1999,Gottesman2000,
ImaiEtAl2005,gianini2026access}.

More broadly, the same inverse question arises in many distributed
quantum-information settings: given a prescribed collection of recovery
subsystems and a fixed encoding architecture, which shared states can
actually realize the required recovery pattern? This question is
naturally connected to distributed quantum storage and networked quantum
information~\cite{kimble2008quantum,wehner2018quantum,cuomo2020towards},
including the encrypted multicloud scenario proposed for the original
protocol~\cite{yamaguchi2026encrypted}. A full treatment of such
applications would require additional cryptographic and complexity
analysis, but the present results provide a concrete resource-level
starting point.

In summary, perfect multiparty encrypted cloning depends not only on the
amount of signal--noise entanglement, but also on how the surviving
correlations align with the encoder. For the fixed two-Pauli
architecture, arbitrary pure resources admit an exact classification,
which for graph states reduces to a constructive cut-kernel criterion.
Rank-deficient examples show that maximal entanglement is not always
necessary and identify the resource--encoder pair as the relevant
operational object.

\section{Code Availability}

The implementation of \textsc{GSECC} is available in the companion
GitHub repository~\cite{Roy_Gupta2026Github}.

\section{Acknowledgments}

The authors thank Koji Yamaguchi and Achim Kempf for insightful
discussions on encrypted quantum cloning. P.~R. acknowledges financial
support from SNBNCBS, Kolkata. S.~G. acknowledges support from the
Anusandhan National Research Foundation (ANRF) under the PM ECRG Grant
No.~ANRF/ECRG/2025/004000/PMS.

%


\onecolumngrid

\setcounter{section}{0}
\setcounter{subsection}{0}
\setcounter{equation}{0}
\setcounter{figure}{0}
\setcounter{table}{0}

\renewcommand{\thesection}{S\arabic{section}}
\renewcommand{\thesubsection}{\thesection.\arabic{subsection}}
\renewcommand{\theequation}{S\arabic{equation}}
\renewcommand{\thefigure}{S\arabic{figure}}
\renewcommand{\thetable}{S\arabic{table}}

\setcounter{lemma}{0}
\setcounter{proposition}{0}
\setcounter{corollary}{0}

\renewcommand{\thelemma}{S\arabic{lemma}}
\renewcommand{\theproposition}{S\arabic{proposition}}
\renewcommand{\thecorollary}{S\arabic{corollary}}

\begin{center}
{\Large\bf Supplemental Material}\\[2mm]
\end{center}

\vspace{2mm}

This Supplemental Material provides the technical derivations,
constructive procedures, and explicit certificates supporting the main
text. Section~S1 develops the general resource conditions and
architecture-independent constraints, Sec.~S2 establishes the exact
graph-state characterization, Sec.~S3 develops certification and
constructive recovery, and Sec.~S4 presents representative resource
families, rank-deficient exact resources, and non-graph no-go results.

\section*{Contents}

\begin{small}
\noindent
\hyperref[sec:supp_general_framework]
{\textbf{S1. General framework and resource conditions}}
\begin{itemize}

    \item[] \hyperref[subsec:supp_general_preliminaries]
    {S1.A\quad Technical preliminaries}

    \item[] \hyperref[subsec:supp_proof_general_resource]
    {S1.B\quad Proof of Theorem 1}

    \item[] \hyperref[subsec:supp_joint_ciphertext]
    {S1.C\quad Guaranteed odd-$m$ joint-signal moments}

    \item[] \hyperref[subsec:supp_general_corollaries]
    {S1.D\quad Proofs of Corollaries 1 and 2}

    \item[] \hyperref[subsec:supp_proof_single_clone_global]
    {S1.E\quad Proof of Proposition 1}

    \item[] \hyperref[subsec:supp_proof_single_clone_sectorwise]
    {S1.F\quad Proof of Proposition 2}

    \item[] \hyperref[subsec:supp_proof_all_output_entropy]
    {S1.G\quad Proof of Proposition 3 and Corollary 3}

\end{itemize}

\noindent
\hyperref[sec:supp_graph_resources]
{\textbf{S2. Exact characterization of graph-state resources}}
\begin{itemize}

    \item[] \hyperref[subsec:supp_graph_cut_algebra]
    {S2.A\quad Graph-state cut algebra}

    \item[] \hyperref[subsec:supp_exceptional_kernel]
    {S2.B\quad Exceptional kernel structure}

    \item[] \hyperref[subsec:supp_proof_graph_exact]
    {S2.C\quad Proof of Theorem 2}

    \item[] \hyperref[subsec:supp_proof_graph_classes]
    {S2.D\quad Proof of Corollary 4}

    \item[] \hyperref[subsec:supp_proof_k1_classification]
    {S2.E\quad Proof of Corollary 5}

\end{itemize}

\noindent
\hyperref[sec:supp_certification]
{\textbf{S3. Certification and constructive decoding}}
\begin{itemize}

    \item[] \hyperref[subsec:supp_gsecc]
    {S3.A\quad Exact \textsc{GSECC} implementation}

    \item[] \hyperref[subsec:supp_constructive_decoder]
    {S3.B\quad Proof of Proposition 4: constructive Clifford recovery}

    \item[] \hyperref[subsec:supp_explicit_bell_reduction]
    {S3.C\quad Proof of Observation 1: noise-side Bell reduction}

    \item[] \hyperref[subsec:supp_bell_reduction_locality]
    {S3.D\quad Proof of Observation 2: locality of the Bell reduction}

    \item[] \hyperref[subsec:supp_clifford_synthesis]
    {S3.E\quad Clifford synthesis and circuit cost}

\end{itemize}

\noindent
\hyperref[sec:supp_resource_landscape]
{\textbf{S4. Resource landscape and representative families}}
\begin{itemize}

    \item[] \hyperref[subsec:supp_scalable_families]
    {S4.A\quad Scalable full-rank graph families}

     \item[] \hyperref[subsec:supp_rank_deficient_examples]
    {S4.B\quad Verification of Observation 3: rank-deficient recovery
    and encoder dependence} 

    \item[] \hyperref[subsec:supp_24q_examples]
    {S4.C\quad Representative 24-qubit certificates}

    \item[] \hyperref[subsec:supp_dicke_resources]
    {S4.D\quad Proof of Observation 4: Dicke and $W$ resources}

\end{itemize}
\end{small}

\section{General framework and resource conditions}
\label{sec:supp_general_framework}

This section provides the derivations underlying the general framework
of the main text. We first establish the exact recovery--decoupling
criterion and the Pauli action of the sector-wise two-Pauli encoder.
We then prove the exact pure-resource characterization, its maximally
entangled consequences, the single-output boundary results, and the
architecture-independent entropy and spectral constraints.

Throughout,
$\ket{\Omega}_{\widetilde AAS\mathcal N}$ denotes the encoded Choi
state introduced in the main text. Since the encoder leaves the
complete key register $\mathcal N$ untouched,
\begin{equation}
\Omega_{\widetilde A\mathcal N}
=
\frac{I_{\widetilde A}}{2^k}
\otimes
\rho_{\mathcal N},
\qquad
\Omega_{\widetilde A}
=
\frac{I_{\widetilde A}}{2^k}.
\end{equation}

\subsection{Technical preliminaries}
\label{subsec:supp_general_preliminaries}

We begin by establishing the exact information–disturbance relation for the authorized subsystem$S_i\mathcal N$.

\begin{lemma}[Recovery--decoupling equivalence]
\label{lem:supp_recovery_decoupling}
For the pure encoded Choi state, exact recovery of the complete
$k$-qubit input from $S_i\mathcal N$ is equivalent to
\begin{equation}
I(\widetilde A:AS_{\bar i})_\Omega=0,
\label{eq:supp_decoupling_mi}
\end{equation}
or, equivalently,
\begin{equation}
\Omega_{\widetilde AAS_{\bar i}}
=
\frac{I_{\widetilde A}}{2^k}
\otimes
\Omega_{AS_{\bar i}}.
\label{eq:supp_decoupling_product}
\end{equation}
\end{lemma}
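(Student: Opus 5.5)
The plan is to prove the two directions separately, using only the purity of $\ket{\Omega}_{\widetilde AAS\mathcal N}$ and Uhlmann's theorem (uniqueness of purifications up to an isometry on the purifying system). Throughout I use the bipartition of the four registers into the reference $\widetilde A$, the authorized system $\mathcal R_i=S_i\mathcal N$, and the erased system $\mathcal E_i=AS_{\bar i}$. Note also that $\Omega_{\widetilde A}=I/2^k$ holds automatically. Hence Eq.~\eqref{eq:supp_decoupling_mi} and Eq.~\eqref{eq:supp_decoupling_product} are equivalent: the mutual information vanishes if and only if the marginal factorizes.

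\emph{Decoupling implies recovery.} Assume $\Omega_{\widetilde A\mathcal E_i}=\frac{I}{2^k}\otimes\Omega_{\mathcal E_i}$. Purify $\Omega_{\mathcal E_i}$ by some $\ket{\chi}_{\mathcal E_i\mathcal J}$. Then $\ket{\Phi^+}_{\widetilde AA'}\otimes\ket{\chi}_{\mathcal E_i\mathcal J}$ is a purification of $\Omega_{\widetilde A\mathcal E_i}$ with purifying system $A'\mathcal J$, where $A'\simeq A$. The state $\ket\Omega$ is another purification of the same marginal, with purifying system $\mathcal R_i$. By Uhlmann's theorem there is an isometry $W:\mathcal R_i\to A'\mathcal J$ mapping one purification to the other; choosing $\mathcal J$ large enough makes an isometry in this direction available. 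Define the decoder $\mathcal D_i(\rho)=\Tr_{\mathcal J\mathcal E_i}[W\rho W^\dagger]$, acting on $\mathcal R_i$. It maps the Choi state to $\Phi^+_{\widetilde AA'}$. Since the Choi state of $\mathcal D_i\circ\mathcal Q_i$ equals the maximally entangled state, this channel is the identity, by the Choi--Jamiołkowski isomorphism.

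\emph{Recovery implies decoupling.} Let $\mathcal D_i\circ\mathcal Q_i=\id_A$. Take a Stinespring isometry $V_i:\mathcal R_i\to A'F$ for $\mathcal D_i$. Acting on $\ket\Omega$, it produces a pure state on $\widetilde AA'F\mathcal E_i$ whose $\widetilde AA'$ marginal is $\Phi^+$. Because $\Phi^+$ is pure, the global state must factor as $\ket{\Phi^+}_{\widetilde AA'}\otimes\ket{\xi}_{F\mathcal E_i}$. Tracing out $A'F$ gives $\Omega_{\widetilde A\mathcal E_i}=\frac{I}{2^k}\otimes\xi_{\mathcal E_i}$, since $V_i$ does not touch $\widetilde A\mathcal E_i$. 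This establishes Eq.~\eqref{eq:supp_decoupling_product}.

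The main obstacle I expect is the dimension bookkeeping in the Uhlmann step. Uhlmann's theorem only guarantees an isometry from the smaller purifying system into the larger one. I will handle this by choosing $\dim\mathcal J\ge\dim\mathcal R_i$, for instance by taking $\mathcal J\simeq\mathcal R_i\otimes\mathcal E_i$, so that $W$ runs from $\mathcal R_i$ into $A'\mathcal J$ and $\mathcal D_i$ is genuinely CPTP. The remaining checks are routine: that trace preservation holds, and that equality of Choi states gives equality of channels, including on correlated inputs with an external reference.
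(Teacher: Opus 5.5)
Your proof is correct. For decoupling $\Rightarrow$ recovery you follow the paper's argument. Both use uniqueness of purifications to obtain an isometry on $S_i\mathcal N$ that extracts a maximally entangled partner of $\widetilde A$, and then discard the remainder. You make explicit what the paper leaves implicit: the comparison purification $\ket{\Phi^+}_{\widetilde AA'}\otimes\ket{\chi}_{\mathcal E_i\mathcal J}$, and the dimension condition that fixes the direction of the Uhlmann isometry. There is one cosmetic slip: the decoder itself is $\mathcal D_i(\rho)=\Tr_{\mathcal J}[W\rho W^\dagger]$, and $\mathcal E_i$ is traced out only when you evaluate it on the Choi state.

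For recovery $\Rightarrow$ decoupling you take a genuinely different route. The paper argues entropically. Since $\mathcal D_i$ produces $\Phi^+_{\widetilde AA'}$, data processing gives $2k\le I(\widetilde A:S_i\mathcal N)_\Omega\le 2S(\widetilde A)=2k$. The pure-state identity $I(\widetilde A:S_i\mathcal N)_\Omega+I(\widetilde A:AS_{\bar i})_\Omega=2S(\widetilde A)$ then forces $I(\widetilde A:AS_{\bar i})_\Omega=0$, and the product form follows from faithfulness of mutual information. You instead dilate $\mathcal D_i$ and use the fact that a pure state with a pure $\widetilde AA'$ marginal must factor. This delivers the product form \eqref{eq:supp_decoupling_product} directly, with \eqref{eq:supp_decoupling_mi} as a trivial consequence, and uses no entropy inequalities.

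Your argument is more elementary and structural. The paper's route has the advantage of producing two identities along the way: $I(\widetilde A:S_i\mathcal N)_\Omega=2k$, and the complementarity relation between the two mutual informations. The paper reuses these immediately for coherent-information saturation, and again at the start of the proofs of Propositions~1 and~3 of the main text.
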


\begin{proof}
Suppose that there exists a CPTP map
$\mathcal D_i:S_i\mathcal N\rightarrow A_i'$, with $A_i'\simeq A$,
such that
\[
\mathcal D_i\circ\mathcal Q_i=\id_A.
\]
Applying $\mathcal D_i$ to the authorized subsystem of the encoded
Choi state therefore produces a maximally entangled state between
$\widetilde A$ and $A_i'$. By the data-processing inequality for
quantum mutual information
~\hyperlink{SchumacherNielsen1996Supp}{\textcolor{blue}{[1]}},
\[
2k
\le
I(\widetilde A:S_i\mathcal N)_\Omega
\le
2S(\widetilde A)
=
2k.
\]
Hence,
\[
I(\widetilde A:S_i\mathcal N)_\Omega=2k.
\]

Because the encoded Choi state is pure under the tripartition
$\widetilde A:(S_i\mathcal N):(AS_{\bar i})$, the mutual informations
satisfy
\[
I(\widetilde A:S_i\mathcal N)_\Omega
+
I(\widetilde A:AS_{\bar i})_\Omega
=
2S(\widetilde A).
\]
Using $S(\widetilde A)=k$ together with
$I(\widetilde A:S_i\mathcal N)_\Omega=2k$ gives
Eq.~\eqref{eq:supp_decoupling_mi}. Vanishing mutual information is
equivalent to a product state, and therefore
Eq.~\eqref{eq:supp_decoupling_product} follows.

Conversely, suppose that Eq.~\eqref{eq:supp_decoupling_product} holds.
Then $\widetilde A$ is completely decoupled from the complement of
$S_i\mathcal N$. Since
$\Omega_{\widetilde A}
=
\frac{I_{\widetilde A}}{2^k},$ the subsystem $\widetilde A$ is maximally mixed. By uniqueness of
purification, there therefore exists an isometry acting only on
$S_i\mathcal N$ that extracts a $2^k$-dimensional maximally entangled
partner of $\widetilde A$. Tracing out the unused output of this
isometry gives a CPTP decoder $\mathcal D_i$ satisfying
\[
\mathcal D_i\circ\mathcal Q_i=\id_A.
\]
This proves the equivalence.
\end{proof}

Purity also gives
\[
I(\widetilde A\rangle S_i\mathcal N)_\Omega
=
k-I(\widetilde A:AS_{\bar i})_\Omega,
\]
so exact recovery is equivalent to coherent-information saturation at
$k$.

We next determine the Pauli support relevant to the fixed sector-wise
encoder. We use the standard stabilizer and binary-symplectic
representation of Clifford operations
~\hyperlink{Gottesman1997Supp,DehaeneDeMoor2003Supp}
{\textcolor{blue}{[2,3]}}.

For one input sector, write
\[
U_{P,Q}^{(m)}
=
e^{-i\pi P_AP_{\mathsf S}^{\otimes m}/4}
e^{-i\pi Q_AQ_{\mathsf S}^{\otimes m}/4},
\qquad
P\neq Q,
\]
and define, up to the conventional Pauli sign,
\[
R=iPQ.
\]

\begin{lemma}[Pauli action of the two-Pauli sector-wise encoder]
\label{lem:supp_encoder_pauli_action}
For a single input sector:
\begin{enumerate}
\item[(i)]
if $m$ is even, no nonidentity signal-only Pauli is mapped to an
operator with trivial signal support;

\item[(ii)]
if $m$ is odd, the unique nonidentity signal direction that can lose
all signal support is $R^{\otimes m}$;

\item[(iii)]
at any fixed signal position, the three nonidentity input Pauli
directions are mapped bijectively onto the three nonidentity
single-qubit Pauli directions.
\end{enumerate}
\end{lemma}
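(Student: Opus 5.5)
The plan is to compute the conjugation action of the single-sector encoder directly in the Heisenberg picture, one exponential factor at a time. For a Pauli generator $G$ with $G^2=I$ and $\theta=\pi/4$, the factor $e^{-i\theta G}$ leaves every Pauli $O$ that commutes with $G$ unchanged. It sends every Pauli that anticommutes with $G$ to $\pm iGO$, that is, up to phase it multiplies $O$ by $G$. Applying this to $G_P=P_AP_{\mathsf S}^{\otimes m}$ and $G_Q=Q_AQ_{\mathsf S}^{\otimes m}$ in sequence shows the following. The image of any Pauli $O=O_A\otimes\sigma$ equals, up to phase, $O\cdot G_Q^{\epsilon_1}G_P^{\epsilon_2}$ with $\epsilon_1,\epsilon_2\in\F$. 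The exponents are fixed by commutation parities: $\epsilon_1$ by that of $O$ with $G_Q$, and $\epsilon_2$ by that of the updated operator with $G_P$, with the order reversed for $U^\dagger\cdot U$. The commutation parity of $I_A\otimes\sigma$ with $P^{\otimes m}$ is the number of signal positions where $\sigma_\ell$ anticommutes with $P$, reduced mod $2$, and similarly for $Q$.

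For (i) and (ii), take a signal-only Pauli $I_A\otimes\sigma$ with $\sigma\neq I$. By the structure above, its image has signal part $\sigma\cdot(Q^{\otimes m})^{\epsilon_1}(P^{\otimes m})^{\epsilon_2}$ up to phase. This part is trivial only if $\sigma$ lies in the group generated by $P^{\otimes m}$ and $Q^{\otimes m}$. Up to phase, the only nonidentity candidates are therefore $P^{\otimes m}$, $Q^{\otimes m}$ and $R^{\otimes m}$, and I would test each one explicitly.

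For even $m$, all three candidates commute with both $G_P$ and $G_Q$. This holds because $m$ anticommuting positions give an even parity, and the input part is $I_A$. Hence each candidate is fixed by the encoder and keeps its signal support, which proves (i).

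For odd $m$, consider $\sigma=P^{\otimes m}$. It anticommutes with $G_Q$ and becomes $Q_A R^{\otimes m}$ up to phase. This operator then commutes with $G_P$, since the input factor and the signal factor each contribute one anticommutation. The signal support therefore survives, and the case $Q^{\otimes m}$ is symmetric. Now consider $\sigma=R^{\otimes m}$. It anticommutes with $G_Q$ and becomes $Q_AP^{\otimes m}$. That operator anticommutes with $G_P$ through its input factor alone, so it is multiplied by $G_P$ and ends up as $\pm R_A\otimes I_{\mathsf S}$. This shows that $R^{\otimes m}$ is the unique direction losing all signal support, which proves (ii). It also matches the identity for $T(c)$ quoted in the main text.

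For (iii), I would conjugate the three input Paulis $P_A$, $Q_A$, $R_A$ in the same way. For example, $P_A$ commutes with $G_P$ and anticommutes with $G_Q$, so it acquires a factor $Q^{\otimes m}$ on the signal, possibly with a further $G_P$ factor depending on the order convention. I would record the resulting single-qubit signal Pauli at a fixed position $\ell$. Because each of $P$, $Q$, $R$ picks up a different combination of $\{P,Q\}$ factors, the three images at position $\ell$ are distinct nonidentity Paulis, giving a bijection onto $\{X,Y,Z\}$.

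The main obstacle is bookkeeping rather than concept. One must track the order of the two factors and whether the statement concerns $U\cdot U^\dagger$ or $U^\dagger\cdot U$. One must also check, in (iii), that no image collapses to the identity at position $\ell$. I would settle both points by writing the $2\times2$ table of commutation parities once and reusing it.
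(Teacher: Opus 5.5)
Your proof is correct, but it organizes the computation differently from the paper. The paper fixes the canonical frame $(P,Q)=(X,Z)$. It computes the explicit image $(t,a+t\mathbf 1_m\,|\,p,b+p\mathbf 1_m)$ of a general signal Pauli $(0,a\,|\,0,b)$ and reads (i)--(ii) off the parity equations $p=mt$, $t=p$. It obtains (iii) by substituting the three input labels, and then transports everything to general $(P,Q)$ by a single-qubit Clifford that sends $Y$ to $R$. You instead stay in the general frame. You use the fact that the encoder only ever multiplies by $G_P$ or $G_Q$, so the signal part can change only within the group generated by $P^{\otimes m}$ and $Q^{\otimes m}$. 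That reduces (i)--(ii) to three candidates, each checked by commutation parity. Your route needs no frame transfer; the paper's route in return gives closed-form images of every signal and input Pauli.

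Your (iii) is still only a sketch. The claim that $P_A,Q_A,R_A$ acquire different factor combinations is asserted, not shown. Also, whether $P_A$ picks up the extra $G_P$ factor depends on the parity of $m$, not only on the ordering convention: under $U\cdot U^\dagger$ it does so exactly when $m$ is even. You can close this in one line. Write $\langle\cdot,\cdot\rangle$ for commutation parity. Then $\epsilon_1=\langle O_A,Q\rangle$ and $\epsilon_2=\langle O_A,P\rangle+(m+1)\epsilon_1$ are $\mathbb F_2$-linear in $O_A$, and both vanish only when $O_A\propto I$. So the three nonidentity inputs map bijectively onto the three nonzero pairs $(\epsilon_1,\epsilon_2)$, hence onto the three distinct nonidentity Paulis $Q^{\epsilon_1}P^{\epsilon_2}$ at every signal position.
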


\begin{proof}
It is sufficient to work first in the canonical
$(P,Q)=(X,Z)$ frame. Represent a signal-only Pauli by
\[
(0,a\,|\,0,b),
\qquad
a,b\in\mathbb F_2^m,
\]
where the first coordinate in each half refers to the input qubit.

For a Hermitian Pauli $M$ satisfying $M^2=I$, the Clifford rotation
$V_M=e^{-i\pi M/4}$ acts, up to phase, as
\[
V_MPV_M^\dagger
\doteq
\begin{cases}
P, & [P,M]=0,\\
MP, & \{P,M\}=0.
\end{cases}
\]
Define
\[
p=\mathbf 1_m^{\mathsf T}a,
\qquad
t=\mathbf 1_m^{\mathsf T}b+(m+1)p,
\]
with all arithmetic over $\mathbb F_2$. Successive conjugation by the
two Pauli rotations then gives
\begin{equation}
(0,a\,|\,0,b)
\longmapsto
(0,a\,|\,p,b+p\mathbf1_m)
\longmapsto
(t,a+t\mathbf1_m\,|\,p,b+p\mathbf1_m).
\label{eq:supp_signal_pauli_action}
\end{equation}

From Eq.~\eqref{eq:supp_signal_pauli_action}, trivial signal support
requires
\[
a=t\mathbf1_m,
\qquad
b=p\mathbf1_m.
\]
Taking the parities of these two relations gives
\begin{equation}
p=mt,
\qquad
t=p
\pmod 2.
\label{eq:supp_signal_parity_conditions}
\end{equation}
For even $m$, Eq.~\eqref{eq:supp_signal_parity_conditions} forces
$p=t=0$, and hence $a=b=0$. Thus no nonidentity signal-only Pauli can
lose all signal support. For odd $m$, there is one additional
solution,
$p=t=1,
\qquad
a=b=\mathbf1_m,$ which corresponds to the Pauli direction $Y^{\otimes m}$ in the
canonical frame.

We next consider the action on input Paulis. Represent an input Pauli
by $(u,0\,|\,v,0)$. Applying the same two Clifford transformations
gives
\begin{equation}
(u,0\,|\,v,0)
\longmapsto
(u,0\,|\,v+u,u\mathbf1_m)
\longmapsto
(u+t,t\mathbf1_m\,|\,v+u,u\mathbf1_m),
\qquad
t=v+(m+1)u.
\label{eq:supp_input_pauli_action}
\end{equation}
Substituting the three nonzero binary labels
$(u,v)=(1,0),(1,1),(0,1)$ into
Eq.~\eqref{eq:supp_input_pauli_action} shows that the three
nonidentity input Paulis restrict, at every signal position, to three
distinct nonidentity single-qubit Pauli directions. Since there are
exactly three such directions, the correspondence is bijective.

Finally, every ordered pair of distinct Pauli axes $(P,Q)$ is related
to the canonical pair $(X,Z)$ by a single-qubit Clifford conjugation.
Such a conjugation merely permutes $X,Y,Z$, while preserving whether a
given tensor factor is identity or nonidentity. It therefore preserves
the support statements established above and maps the canonical
exceptional direction $Y$ to
\[
R=iPQ.
\]
Hence, for odd $m$, the unique nonidentity signal direction that can
lose all signal support is $R^{\otimes m}$, completing the proof.
\end{proof}

Because the complete encoder factorizes across input sectors,
Lemma~\ref{lem:supp_encoder_pauli_action} implies that a signal-only
Pauli can lose all signal support only for odd $m$, with every active
sector proportional to $R^{\otimes m}$. Thus the complete exceptional
set is
\begin{align*}
T(c)
=
\prod_{j=1}^{k}
\left(
R_{\mathsf S_j}^{\otimes m}
\right)^{c_j},
\qquad
c\in\mathbb F_2^k. 
\end{align*}

\subsection{Proof of Theorem~1
of the main text}
\label{subsec:supp_proof_general_resource}

We now prove the exact fixed-encoder pure-resource criterion stated as
Theorem~1 in the main text. The
proof combines the recovery--decoupling criterion of
Lemma~\ref{lem:supp_recovery_decoupling} with the Pauli-support
structure established in Lemma~\ref{lem:supp_encoder_pauli_action}.

\begin{proof}
Let $\mathcal P_n$ denote the phase-free $n$-qubit Pauli basis. Expand
the signal marginal as
\[
\rho_S
=
\frac{1}{2^{mk}}
\sum_{\Sigma\in\mathcal P_{mk}}
r_\Sigma\,\Sigma,
\qquad
r_I=1.
\]
Using the Pauli expansion of the maximally entangled state,
\[
\Phi^+_{\widetilde A A}
=
\frac{1}{2^{2k}}
\sum_{\Lambda\in\mathcal P_k}
\Lambda_{\widetilde A}^{\mathsf T}\otimes\Lambda_A,
\]
and tracing out $\mathcal N$, the encoded Choi state becomes
\begin{equation}
\Omega_{\widetilde AAS}
=
\frac{1}{2^{2k+mk}}
\sum_{\Lambda\in\mathcal P_k}
\sum_{\Sigma\in\mathcal P_{mk}}
r_\Sigma\,
\Lambda_{\widetilde A}^{\mathsf T}
\otimes
U_{P,Q}^{(m,k)}
(\Lambda_A\otimes\Sigma_S)
U_{P,Q}^{(m,k)\dagger}.
\label{eq:supp_encoded_choi_pauli}
\end{equation}

We first prove \emph{necessity}. Fix an output $i$. By
Lemma~\ref{lem:supp_recovery_decoupling}, exact recovery from
$S_i\mathcal N$ is equivalent to
$\Omega_{\widetilde AAS_{\bar i}}
=
\frac{I_{\widetilde A}}{2^k}
\otimes
\Omega_{AS_{\bar i}}.$

Hence the complementary state can contain no net contribution
proportional to
$\Lambda_{\widetilde A}^{\mathsf T}$ with $\Lambda\neq I$.

We repeatedly use the Pauli partial-trace property
$\Tr_{S_i}M=0$ whenever $M$ has nonidentity support on at least one
qubit of $S_i$. Conversely, if
$M=M_{AS_{\bar i}}\otimes I_{S_i},$ then
\begin{align*}
   \Tr_{S_i}M
=
2^kM_{AS_{\bar i}}
\neq0. 
\end{align*}

Suppose that $r_\Sigma\neq0$ for some signal Pauli $\Sigma$ outside
the exceptional family identified after
Lemma~\ref{lem:supp_encoder_pauli_action}. Then
\[
E_\Sigma
:=
U_{P,Q}^{(m,k)}
(I_A\otimes\Sigma_S)
U_{P,Q}^{(m,k)\dagger}
\]
has nonidentity signal support. Choose an output $i$ on which this
support is nontrivial. Writing
$S_i=S_{1,i}\cdots S_{k,i}$, let
$M_{j,i}\in\{I,X,Y,Z\}$ denote the restriction of $E_\Sigma$ to
$S_{j,i}$.

For each sector $j$, choose an input Pauli $\lambda_j$ as follows.
If $M_{j,i}=I$, take $\lambda_j=I$. If $M_{j,i}\neq I$, then
Lemma~\ref{lem:supp_encoder_pauli_action}(iii) provides a nonidentity
input Pauli $\lambda_j$ whose encoded restriction on $S_{j,i}$ is
the same Pauli direction $M_{j,i}$. Define
$\Lambda=\bigotimes_{j=1}^k\lambda_j.$ Since $E_\Sigma$ is nontrivial on $S_i$, at least one
$M_{j,i}\neq I$, and hence $\Lambda\neq I$.

Using multiplicativity of unitary conjugation,
\begin{align}
&
U_{P,Q}^{(m,k)}
(\Lambda_A\otimes\Sigma_S)
U_{P,Q}^{(m,k)\dagger}
\nonumber\\
&\qquad=
U_{P,Q}^{(m,k)}
(\Lambda_A\otimes I_S)
U_{P,Q}^{(m,k)\dagger}
\,
U_{P,Q}^{(m,k)}
(I_A\otimes\Sigma_S)
U_{P,Q}^{(m,k)\dagger}.
\label{eq:supp_factor_encoded_paulis}
\end{align}
On every $S_{j,i}$ with $M_{j,i}\neq I$, the two factors have the
same Pauli restriction and therefore multiply to the identity, up to
phase. If $M_{j,i}=I$, we chose $\lambda_j=I$. Consequently,
\begin{equation}
U_{P,Q}^{(m,k)}
(\Lambda_A\otimes\Sigma_S)
U_{P,Q}^{(m,k)\dagger}
=
\pm M_{AS_{\bar i}}\otimes I_{S_i}
\label{eq:supp_identity_on_output}
\end{equation}
for some Pauli operator $M_{AS_{\bar i}}$. Therefore
\[
\Tr_{S_i}
\!\left[
U_{P,Q}^{(m,k)}
(\Lambda_A\otimes\Sigma_S)
U_{P,Q}^{(m,k)\dagger}
\right]
=
\pm2^kM_{AS_{\bar i}}
\neq0.
\]
This produces a nonzero term proportional to
$\Lambda_{\widetilde A}^{\mathsf T}
\otimes
M_{AS_{\bar i}},
\Lambda\neq I,$
in the complementary Choi state.

Distinct surviving terms cannot cancel. Clifford conjugation is
bijective on the phase-free Pauli basis, so distinct pairs
$(\Lambda,\Sigma)$ produce distinct encoded Pauli operators. If two
such operators are both identity on $S_i$, their restrictions to
$AS_{\bar i}$ must also remain distinct. The surviving terms are
therefore distinct Pauli-basis elements and hence Hilbert--Schmidt
orthogonal. This contradicts complementary decoupling. Thus every
nonzero coefficient $r_\Sigma$ must correspond to an exceptional
signal Pauli.

For even $m$, the exceptional family contains only the identity, so
\[
\rho_S
=
\frac{I_S}{2^{mk}}.
\]
For odd $m$, the exceptional operators are exactly
\[
T(c)
=
\prod_{j=1}^{k}
\left(
R_{\mathsf S_j}^{\otimes m}
\right)^{c_j},
\qquad
c\in\mathbb F_2^k,
\]
and hence
\begin{equation}
\rho_S
=
\frac{1}{2^{mk}}
\sum_{c\in\mathbb F_2^k}
\alpha_c\,T(c),
\qquad
\alpha_0=1,
\qquad
\alpha_c\in\mathbb R.
\label{eq:supp_odd_resource_form}
\end{equation}
Here $\alpha_0=1$ follows from normalization,
$\alpha_c\in\mathbb R$ from Hermiticity, and the remaining condition
is precisely $\rho_S\ge0$.

We next prove sufficiency. Suppose first that $m$ is odd and
$\rho_S$ has the form in Eq.~\eqref{eq:supp_odd_resource_form}.
For every exceptional operator,
\begin{equation}
U_{P,Q}^{(m,k)}
(I_A\otimes T(c))
U_{P,Q}^{(m,k)\dagger}
\doteq
R_A(c)\otimes I_S,
\qquad
R_A(c):=
\prod_{j=1}^k R_{A_j}^{c_j}.
\label{eq:supp_exceptional_to_input}
\end{equation}
Thus an exceptional signal component is mapped entirely to the input
system and carries no signal support.

Consider a term in Eq.~\eqref{eq:supp_encoded_choi_pauli} with
$\Lambda\neq I$, and choose a sector $j$ on which $\Lambda$ is
nonidentity. By Lemma~\ref{lem:supp_encoder_pauli_action}(iii), its
encoded image has nonidentity support on every signal position
$S_{j,1},\ldots,S_{j,m}$. Moreover,
\[
U_{P,Q}^{(m,k)}
(\Lambda_A\otimes T(c))
U_{P,Q}^{(m,k)\dagger}
\doteq
U_{P,Q}^{(m,k)}
(\Lambda_A\otimes I_S)
U_{P,Q}^{(m,k)\dagger}
\bigl(R_A(c)\otimes I_S\bigr).
\]
The second factor acts trivially on $S$ and therefore does not alter
the signal support of the first. Hence, for every output $i$,
\[
\Tr_{S_i}
\!\left[
U_{P,Q}^{(m,k)}
(\Lambda_A\otimes T(c))
U_{P,Q}^{(m,k)\dagger}
\right]
=0,
\qquad
\Lambda\neq I.
\]
After tracing any $S_i$, all terms carrying a nonidentity reference
Pauli therefore vanish, leaving
\[
\Omega_{\widetilde AAS_{\bar i}}
=
\frac{I_{\widetilde A}}{2^k}
\otimes
\Omega_{AS_{\bar i}}.
\]
Lemma~\ref{lem:supp_recovery_decoupling} then gives exact recovery
from $S_i\mathcal N$. Since $i$ was arbitrary, recovery holds for
every authorized subsystem. For even $m$, the same argument applies
with the only allowed signal component $\Sigma=I$.

Finally, we prove individual concealment. From
Eq.~\eqref{eq:supp_encoded_choi_pauli},
\[
\Omega_{\widetilde A S_i}
=
\Tr_{AS_{\bar i}}
\Omega_{\widetilde AAS}.
\]
If $\Lambda\neq I$, choose an active sector $j$. Its encoded image is
nonidentity on every $S_{j,\ell}$. Since $m\ge2$, for each fixed $i$
there exists some $\ell\neq i$, so the term has nonidentity support
on the discarded subsystem $S_{\bar i}$ and vanishes under the
partial trace.

If instead $\Lambda=I$ and $c\neq0$, then
Eq.~\eqref{eq:supp_exceptional_to_input} gives
\[
U_{P,Q}^{(m,k)}
(I_A\otimes T(c))
U_{P,Q}^{(m,k)\dagger}
\doteq
R_A(c)\otimes I_S,
\]
with $R_A(c)\neq I_A$. Since every nonidentity Pauli is traceless,
this contribution vanishes when $A$ is traced out. Thus only the term
$\Lambda=I$, $c=0$ survives, yielding
\[
\Omega_{\widetilde A S_i}
=
\frac{I_{\widetilde A}}{2^k}
\otimes
\frac{I_{S_i}}{2^k}.
\]
This is the Choi state of the completely depolarizing channel, so
\[
\mathcal C_i(\rho_A)
=
\frac{I_{S_i}}{2^k}
\]
for every input state $\rho_A$. Hence every individual signal
$S_i$ is perfectly concealed.
\end{proof}

The coherent-information identity above also shows that every
authorized channel is exactly reversible and satisfies
\[
Q^{(1)}(\mathcal Q_i)
=
C_Q(\mathcal Q_i)
=
k,
\]
consistent with the standard quantum-capacity framework
~\hyperlink{Lloyd1997Supp,Devetak2005Supp,DevetakShor2005Supp}
{\textcolor{blue}{[4,5,6]}}.
\subsection{Guaranteed odd-\texorpdfstring{$m$}{m} joint-signal moments}
\label{subsec:supp_joint_ciphertext}

We now derive the operator identity used in the main text to establish
that, for odd $m$, joint access to the complete signal register always
reveals a definite family of input moments. These moments give a
guaranteed component of the joint-signal information, although they
need not characterize the full joint-signal channel.

Consider first a single sector with odd $m$, and define
\[
\Pi_P=P_AP_{\mathsf S}^{\otimes m},
\qquad
\Pi_Q=Q_AQ_{\mathsf S}^{\otimes m},
\qquad
T=R_{\mathsf S}^{\otimes m},
\]
where $R=iPQ$. Since $m$ is odd, $\Pi_P$ and $\Pi_Q$ commute, while
each anticommutes with $I_A\otimes T$. For anticommuting Hermitian
Pauli operators $M$ and $N$, we use
\[
e^{i\pi M/4}Ne^{-i\pi M/4}=iMN.
\]
Successive conjugation by the two factors of the encoder then gives
\[
U_{P,Q}^{(m)\dagger}
\bigl(I_A\otimes T\bigr)
U_{P,Q}^{(m)}
=
(-1)^{(m-1)/2}
R_A\otimes I_{\mathsf S}.
\]

For the $k$-sector encoder, define
\[
T_j=R_{\mathsf S_j}^{\otimes m},
\qquad
T(c)=\prod_{j=1}^{k}T_j^{c_j},
\qquad
R_A(c)=\prod_{j=1}^{k}R_{A_j}^{c_j},
\]
for $c\in\mathbb F_2^k$, and let
\[
\sigma_m=(-1)^{(m-1)/2}.
\]
Because the encoder factorizes across sectors, the single-sector
identity applies independently to every active sector:
\begin{equation}
U_{P,Q}^{(m,k)\dagger}
\bigl(I_A\otimes T(c)\bigr)
U_{P,Q}^{(m,k)}
=
\sigma_m^{|c|}
R_A(c)\otimes I_S,
\label{eq:supp_joint_signal_conjugation}
\end{equation}
where $|c|$ denotes the Hamming weight of $c$.

Now consider the channel from the input to the complete signal
register,
\[
\mathcal C_S(\rho_A)
=
\Tr_{A\mathcal N}
\!\left[
\mathcal V_{\mathcal R}(\rho_A)
\right].
\]
Using Eq.~\eqref{eq:supp_joint_signal_conjugation}, the expectation
value of $T(c)$ is
\begin{equation}
\Tr\!\left[
T(c)\mathcal C_S(\rho_A)
\right]
=
\sigma_m^{|c|}
\Tr\!\left[
R_A(c)\rho_A
\right].
\label{eq:supp_joint_signal_moments}
\end{equation}

Equation~\eqref{eq:supp_joint_signal_moments} shows that joint access
to the complete signal register always reveals the commuting
$R$-basis moments of the input. Equivalently, these moments determine
the diagonal statistics of the input in the product $R$ basis. This
conclusion follows solely from the encoder identity in
Eq.~\eqref{eq:supp_joint_signal_conjugation} and is independent of the
nonidentity coefficients allowed in the resource marginal $\rho_S$.

The moments in Eq.~\eqref{eq:supp_joint_signal_moments} need not
exhaust the information contained in $\mathcal C_S(\rho_A)$.
Nonidentity components of $\rho_S$ can transfer additional, generally
noncommuting, input observables to the complete signal register. The
rank-deficient graph-state example in
Sec.~\ref{subsec:supp_rank_deficient_examples} provides an explicit
case in which the resulting joint-signal statistics determine the
complete input-qubit state, even though every individual signal
remains perfectly concealed. Such informational completeness concerns
state identification from joint measurement statistics and does not,
by itself, imply the existence of an exact CPTP recovery map acting on
the complete signal register.

\subsection{Proofs of Corollaries~1 and~2 of the main text}
\label{subsec:supp_general_corollaries}

We first prove Corollary~1 of the main text.

\begin{proof}
If
\[
\rho_S=\frac{I_S}{2^{mk}},
\]
then all nonidentity signal-Pauli coefficients vanish. The exact
criterion of Theorem~1 is therefore satisfied for every $m\ge2$.
For even $m$, Theorem~1 further shows that this condition is also
necessary.
\end{proof}

We next prove Corollary~2 of the main text.

\begin{proof}
Suppose the resource is maximally entangled across the
signal--noise cut $S:\mathcal N$. By uniqueness of purification,
there exists a unitary acting only on $\mathcal N$ that brings the
resource to the canonical form
\begin{equation}
\ket{\Phi_{2^\nu}}_{S\mathcal N}
=
\bigotimes_{\ell=1}^{\nu}
\ket{\phi^+}_{S_\ell N_\ell}.
\label{eq:supp_maxent_resource_factorization}
\end{equation}

Now let $\nu=m'k'$ with $m'\ge2$. Relabel the signal degrees of
freedom according to
\begin{equation}
\ell=(j-1)m'+i,
\qquad
j=1,\ldots,k',
\qquad
i=1,\ldots,m',
\label{eq:supp_signal_relabeling}
\end{equation}
and define $\mathsf S'_j=(S_{j,1},\ldots,S_{j,m'}),
\qquad
S'_i=S_{1,i}\cdots S_{k',i}.$

Equation~\eqref{eq:supp_signal_relabeling} merely regroups the same
signal degrees of freedom into $k'$ sectors of size $m'$. Since the
resource in Eq.~\eqref{eq:supp_maxent_resource_factorization} remains
maximally entangled across the same signal--noise cut, Corollary~1
implies that it supports the $(m',k')$ architecture under the
corresponding sector-wise encoder and recovery maps.

Thus the physical resource and its signal--noise entanglement remain
unchanged; only the grouping of the signal degrees of freedom and the
associated encoder--decoder pair are modified.

For example, when $\nu=12$, the allowed factorizations include $(m',k')=(2,6),(3,4),(4,3),(6,2),(12,1).$

\end{proof}

\subsection{Proof of Proposition~1 of the main text}
\label{subsec:supp_proof_single_clone_global}

We now prove the unrestricted single-output characterization stated as
Proposition~1 in the main text. Its connection with absolutely
maximally entangled states follows the standard AME framework
~\hyperlink{HelwigEtAl2012Supp}{\textcolor{blue}{[7]}}.

\begin{proof}
Set $d=2^k$. Perfect concealment gives
$I(\widetilde A:S)_\Omega=0,$ whereas exact recovery from $S\mathcal N$ gives
\begin{align*}
    I(\widetilde A:S\mathcal N)_\Omega=2k.
\end{align*}

By the chain rule,
\begin{align*}
    I(\widetilde A:S\mathcal N)_\Omega
=
I(\widetilde A:S)_\Omega
+
I(\widetilde A:\mathcal N|S)_\Omega,
\end{align*}
and hence
\begin{equation}
2k
=
I(\widetilde A:\mathcal N|S)_\Omega
\le
2S(\rho_{\mathcal N})
\le
2\log_2 d
=
2k.
\label{eq:supp_single_clone_entropy_bound}
\end{equation}
Thus equality holds throughout, so
$S(\rho_{\mathcal N})=\log_2 d,
\rho_{\mathcal N}=\frac{I_d}{d}.$ Since the resource is pure and
$\dim S=\dim\mathcal N=d$, $\rho_S=\frac{I_d}{d}.$

The Choi input marginal satisfies $\rho_A=I_A/d$. Therefore, before
encoding,
\begin{align*}
    \rho_{AS}
=
\frac{I_A}{d}\otimes\frac{I_S}{d}
=
\frac{I_{AS}}{d^2},
\end{align*}

and unitarity gives
$\Omega_{AS}=\frac{I_{AS}}{d^2}.$

Perfect concealment then implies
\[
\Omega_{\widetilde A S}
=
\frac{I_{\widetilde A}}{d}\otimes\Omega_S
=
\frac{I_{\widetilde A S}}{d^2},
\]
while recovery--decoupling gives
\[
\Omega_{\widetilde A A}
=
\frac{I_{\widetilde A}}{d}\otimes\Omega_A
=
\frac{I_{\widetilde A A}}{d^2}.
\]
Since the encoded Choi state is pure, the complementary two-party
marginals are also maximally mixed. Hence every two-party marginal is
$I/d^2$, and therefore
$\ket{\Omega}_{\widetilde AAS\mathcal N}
$ is an $\operatorname{AME}(4,d)$ state.

Conversely, if the encoded Choi state is $\operatorname{AME}(4,d)$,
then
\[
\Omega_{\widetilde A S}
=
\Omega_{\widetilde A A}
=
\frac{I}{d^2}.
\]
The first condition gives perfect concealment of $S$, while the second
is precisely complementary decoupling and therefore implies exact
recovery from $S\mathcal N$.

For $k=1$, this would require an $\operatorname{AME}(4,2)$ state,
which does not exist
~\hyperlink{HiguchiSudbery2000Supp}{\textcolor{blue}{[8]}}.

For $k\ge2$, $d=2^k$ is a prime-power dimension. Identify the
computational basis with $\mathbb F_d$, choose
$\lambda\in\mathbb F_d\setminus\{0,1\}$, and define
\[
U_\lambda
\ket a_A\ket s_S
=
\ket{a+s}_A
\ket{a+\lambda s}_S.
\]
The corresponding linear map has determinant
$\lambda-1\neq0$, so it is bijective over $\mathbb F_d^2$. Acting on
two maximally entangled pairs gives
\begin{equation}
\ket{\Omega_\lambda}
=
\frac1d
\sum_{a,s\in\mathbb F_d}
\ket a_{\widetilde A}
\ket{a+s}_A
\ket{a+\lambda s}_S
\ket s_{\mathcal N}.
\label{eq:supp_ame_finite_field_state}
\end{equation}
Any two of the four labels
$a, a+s, a+\lambda s, s$, determine $(a,s)$ uniquely; in particular,
$a+\lambda s-(a+s)=(\lambda-1)s$, and both
$\lambda$ and $\lambda-1$ are invertible. Hence every two-party
marginal equals $I/d^2$, so
$\ket{\Omega_\lambda}$ is an $\operatorname{AME}(4,d)$ state,
consistent with the finite-field constructions of
Ref.~\hyperlink{HelwigEtAl2012Supp}{\textcolor{blue}{[7]}}.

An authorized decoder may be chosen as
\begin{equation}
\ket u_S\ket n_{\mathcal N}
\longmapsto
\ket{u-\lambda n}_{O}
\ket{u+(1-\lambda)n}_{K}.
\label{eq:supp_single_clone_decoder}
\end{equation}
For $u=a+\lambda s$ and $n=s$,
\[
u-\lambda n=a,
\qquad
u+(1-\lambda)n=a+s.
\]
Therefore
\[
\ket{a+\lambda s}_S\ket s_{\mathcal N}
\longmapsto
\ket a_O\ket{a+s}_K.
\]
Applying this to Eq.~\eqref{eq:supp_ame_finite_field_state} and setting
$t=a+s$ gives
\[
\ket{\Omega_\lambda}
\longmapsto
\frac1d
\sum_{a,t\in\mathbb F_d}
\ket a_{\widetilde A}\ket t_A\ket a_O\ket t_K
=
\ket{\Phi_d}_{\widetilde A O}
\otimes
\ket{\Phi_d}_{AK}.
\]
Thus the decoder exactly recovers the complete unknown $k$-qubit
input, including arbitrary correlations with an external reference.
\end{proof}

This identifies the single-output boundary of the present
architecture. The AME condition characterizes the unrestricted task,
whereas the fixed sector-wise encoder imposes a stronger structural
constraint.

\subsection{Proof of Proposition~2 of the main text}
\label{subsec:supp_proof_single_clone_sectorwise}

We now prove the sector-wise single-output obstruction stated as
Proposition~2 in the main text.

\begin{proof}[Proof of Proposition~2]
Let $R=iPQ$. For each input sector $j$,
\begin{equation}
U_{P,Q;j}^{(1)\dagger}
\left(
I_{A_j}\otimes R_{S_{j,1}}
\right)
U_{P,Q;j}^{(1)}
=
R_{A_j}\otimes I_{S_{j,1}}.
\label{eq:supp_single_clone_sectorwise_identity}
\end{equation}
Taking the expectation value of
Eq.~\eqref{eq:supp_single_clone_sectorwise_identity} in the encoded
state gives, for every input state $\rho_A$ and every pure resource,
\begin{equation*}
\Tr\left[
R_{S_{j,1}}\mathcal C_1(\rho_A)
\right]
=
\Tr\left[
R_{A_j}\rho_A
\right].
\end{equation*}
Thus the signal observable $R_{S_{j,1}}$ directly reproduces the
corresponding input moment $\langle R_{A_j}\rangle$, independently of
the resource state. Since the right-hand side can be varied by
changing $\rho_A$, the output $\mathcal C_1(\rho_A)$ cannot be
independent of the input. Hence perfect concealment is impossible for the sector-wise encoder
$U_{P,Q}^{(1,k)}$ for every $k\ge1$, independently of the choice of
pure resource state.
\end{proof}

\subsection{Proof of Proposition~3 and Corollary~3 of the main text}
\label{subsec:supp_proof_all_output_entropy}

We finally remove the sector-wise
assumptions. Let the encoder be an arbitrary unitary on $AS$ acting
trivially on $\mathcal N$, and assume exact recovery from
$S_i\mathcal N$ for every $i=1,\ldots,m$.

We first prove the architecture-independent all-output constraint
stated as Proposition~3 in the main text.

\begin{proof}
Because the encoder acts trivially on $\mathcal N$, the reduced state
of $\widetilde A\mathcal N$ remains unchanged from its initial product
form. Hence
\[
I(\widetilde A:\mathcal N)_\Omega=0.
\]
Exact recovery of the complete $k$-qubit input from
$S_i\mathcal N$ implies
\[
I(\widetilde A:S_i\mathcal N)_\Omega=2k.
\]
Using the chain rule,
\[
I(\widetilde A:S_i\mathcal N)_\Omega
=
I(\widetilde A:\mathcal N)_\Omega
+
I(\widetilde A:S_i|\mathcal N)_\Omega,
\]
we therefore obtain
\[
I(\widetilde A:S_i|\mathcal N)_\Omega=2k.
\]

Since $\dim S_i=2^k$,
\begin{align}
I(\widetilde A:S_i|\mathcal N)_\Omega
&=
S(S_i|\mathcal N)_\Omega
-
S(S_i|\widetilde A\mathcal N)_\Omega
\nonumber\\
&\le
k-(-k)
=
2k,
\label{eq:supp_conditional_MI_bound}
\end{align}
where
\[
S(S_i|\mathcal N)_\Omega\le k,
\qquad
S(S_i|\widetilde A\mathcal N)_\Omega\ge-k,
\]
follow from subadditivity and the Araki--Lieb inequality
~\hyperlink{ArakiLieb1970Supp,LiebRuskai1973Supp}
{\textcolor{blue}{[9,10]}}.
Since Eq.~\eqref{eq:supp_conditional_MI_bound} is saturated, both
bounds must be saturated individually:
\[
S(S_i|\mathcal N)_\Omega=k,
\qquad
S(S_i|\widetilde A\mathcal N)_\Omega=-k.
\]
In particular,
\begin{equation}
S(S_i\mathcal N)_\Omega
=
S(\mathcal N)_\Omega+k.
\label{eq:supp_SiN_entropy}
\end{equation}

Exact recovery also implies, by
Lemma~\ref{lem:supp_recovery_decoupling},
\[
\Omega_{\widetilde AAS_{\bar i}}
=
\frac{I_{\widetilde A}}{2^k}
\otimes
\Omega_{AS_{\bar i}}.
\]
Since the global encoded Choi state is pure,
\[
S(S_i\mathcal N)_\Omega
=
S(\widetilde AAS_{\bar i})_\Omega
=
k+S(AS_{\bar i})_\Omega.
\]
Comparing this with Eq.~\eqref{eq:supp_SiN_entropy} gives
\begin{equation}
S(AS_{\bar i})_\Omega
=
S(\mathcal N)_\Omega
\label{eq:supp_ASbar_entropy}
\end{equation}
for every $i$.

Purity of the global state also gives
\[
S(AS)_\Omega
=
S(\widetilde A\mathcal N)_\Omega.
\]
Since $\widetilde A$ and $\mathcal N$ are uncorrelated,
\[
S(AS)_\Omega
=
k+S(\mathcal N)_\Omega.
\]
Together with Eq.~\eqref{eq:supp_ASbar_entropy}, this yields
\[
S(S_i|AS_{\bar i})_\Omega
=
S(AS)_\Omega-S(AS_{\bar i})_\Omega
=
k.
\]

This is the maximal conditional entropy allowed for the
$2^k$-dimensional system $S_i$. Indeed,
\[
D\!\left(
\Omega_{AS}
\,\middle\|\,
\frac{I_{S_i}}{2^k}
\otimes
\Omega_{AS_{\bar i}}
\right)
=
k-S(S_i|AS_{\bar i})_\Omega
=
0.
\]
Hence
\begin{equation}
\Omega_{AS}
=
\frac{I_{S_i}}{2^k}
\otimes
\Omega_{AS_{\bar i}}
\label{eq:supp_each_output_factor}
\end{equation}
for every $i$.

Applying Eq.~\eqref{eq:supp_each_output_factor} successively to the
$m$ disjoint output registers gives
\begin{equation}
\Omega_{AS}
=
\Omega_A
\otimes
\bigotimes_{i=1}^{m}
\frac{I_{S_i}}{2^k}
=
\Omega_A
\otimes
\frac{I_S}{2^{mk}}.
\label{eq:supp_AS_full_factor}
\end{equation}
Therefore
\[
S(AS)_\Omega
=
S(A)_\Omega+mk.
\]
On the other hand,
\[
S(AS)_\Omega
=
k+S(\rho_{\mathcal N}).
\]
Comparing the two expressions gives
\begin{equation}
S(\rho_{\mathcal N})
=
(m-1)k+S(A)_\Omega
\ge
(m-1)k.
\label{eq:supp_all_output_entanglement_bound}
\end{equation}
Since the resource is pure,
$S(\rho_{\mathcal N})=S(\rho_S)$, and
Eq.~\eqref{eq:supp_all_output_entanglement_bound} is precisely the
claimed architecture-independent lower bound on the signal--noise
entanglement.

It remains to establish individual concealment. Since $m\ge2$, for
any fixed $i$ we may choose $j\neq i$. Exact recovery from
$S_j\mathcal N$ gives
\[
I(\widetilde A:AS_{\bar j})_\Omega=0.
\]
Because $S_i\subseteq S_{\bar j}$, data processing implies
\[
I(\widetilde A:S_i)_\Omega=0.
\]
Moreover, Eq.~\eqref{eq:supp_AS_full_factor} gives
\[
\Omega_{S_i}
=
\frac{I_{S_i}}{2^k}.
\]
Hence
\[
\Omega_{\widetilde A S_i}
=
\frac{I_{\widetilde A}}{2^k}
\otimes
\frac{I_{S_i}}{2^k},
\]
which is the Choi state of the completely depolarizing channel.
Therefore exact recovery from every $S_i\mathcal N$ necessarily
implies perfect individual concealment of every $S_i$.
\end{proof}

We now prove the stronger spectral consequence stated as
Corollary~3 in the main text.

\begin{proof}
Set
\[
d=2^k,
\qquad
D=2^{mk},
\qquad
L=\frac{D}{d}=2^{(m-1)k}.
\]
Before encoding, the $AS$ marginal is $\frac{I_A}{d}\otimes\rho_S.$ Since the encoder acts unitarily on $AS$,
$\Omega_{AS}
\simeq
\frac{I_A}{d}\otimes\rho_S,$ where $\simeq$ denotes unitary equivalence. On the other hand,
Eq.~\eqref{eq:supp_AS_full_factor} gives
\[
\Omega_{AS}
=
\Omega_A\otimes\frac{I_S}{D}.
\]
Therefore
\begin{equation}
\frac{I_A}{d}\otimes\rho_S
\simeq
\Omega_A\otimes\frac{I_S}{D}.
\label{eq:supp_spectral_equivalence}
\end{equation}

Let $\mu$ be a distinct eigenvalue of $\Omega_A$ with multiplicity
$n_\mu$. The corresponding eigenvalue on the right-hand side of
Eq.~\eqref{eq:supp_spectral_equivalence} is
$\frac{\mu}{D},$
with multiplicity $Dn_\mu$. If $\lambda$ is the corresponding
eigenvalue of $\rho_S$ with multiplicity $m_\lambda$, then the
left-hand side contains
$\frac{\lambda}{d}$ with multiplicity $d\,m_\lambda$. Equality of the two spectra therefore
requires
$$\frac{\lambda}{d}
=
\frac{\mu}{D},
\qquad
d\,m_\lambda
=
D\,n_\mu.$$

Since $D=dL$, this gives $\lambda=\frac{\mu}{L},
m_\lambda=L\,n_\mu.$ Thus every eigenvalue multiplicity of $\rho_S$ is divisible by $L$. Equivalently, if $\sigma$ is a $k$-qubit density operator with the
same spectrum as $\Omega_A$, then
\begin{equation}
\rho_S
\simeq
\sigma\otimes\frac{I_L}{L},
\qquad
L=2^{(m-1)k}.
\label{eq:supp_resource_spectral_form}
\end{equation}
Hence every distinct eigenvalue of $\rho_S$ has multiplicity divisible
by $2^{(m-1)k}.$

This proves Corollary~3. The condition is necessary for exact
all-output recovery; no sufficiency statement for an arbitrary
encoder is implied.
\end{proof}

\section{Exact characterization of graph-state resources}
\label{sec:supp_graph_resources}

This section provides the derivations underlying Sec.~III of the main
text. We first establish the graph-state cut algebra and characterize
the encoder-compatible cut-kernel directions. We then prove Theorem~2
and derive Corollaries~4 and~5 of the main text.

Throughout, $\nu=mk,$ and all vectors, ranks, and kernels are over $\mathbb F_2$. For a fixed
oriented balanced cut $S:\mathcal N$, write the graph adjacency matrix
as
\[
\Gamma
=
\begin{pmatrix}
A_S & B_{S\mathcal N}\\
B_{S\mathcal N}^{\mathsf T} & A_{\mathcal N}
\end{pmatrix}.
\]
We use the standard stabilizer representation of graph states
~\hyperlink{Gottesman1997Supp,HeinEisertBriegel2004Supp}
{\textcolor{blue}{[2,11]}}.
\subsection{Graph-state cut algebra}
\label{subsec:supp_graph_cut_algebra}

For $x\in\mathbb F_2^\nu$ supported on the signal vertices, define
\[
K(x)
=
\prod_{v\in S}K_v^{x_v},
\qquad
K_v
=
X_v\prod_{u\in N(v)}Z_u.
\]
Up to an irrelevant Pauli phase, its restrictions to the two sides of
the cut are
\[
K(x)\big|_S
\doteq
X^xZ^{A_Sx},
\qquad
K(x)\big|_{\mathcal N}
\doteq
Z^{B_{S\mathcal N}^{\mathsf T}x}.
\]
Hence a signal-generated stabilizer acts trivially on \(\mathcal N\)
precisely when $
x\in\ker(B_{S\mathcal N}^{\mathsf T}),
$ where \(\ker(B_{S\mathcal N}^{\mathsf T})\) denotes the set of binary
signal-side vectors \(x\) satisfying $
B_{S\mathcal N}^{\mathsf T}x=0.
$

\begin{lemma}[Reduced graph-state spectrum and cut kernel]
\label{lem:supp_graph_reduced_spectrum}
Let
\[
r=\rank_{\mathbb F_2}B_{S\mathcal N},
\qquad
\mathcal K=\ker(B_{S\mathcal N}^{\mathsf T}).
\]
For each $x\in\mathcal K$, define the signed signal operator $K_S(x)$
through
\[
K(x)=K_S(x)\otimes I_{\mathcal N}.
\]
Then
\begin{equation}
\rho_S
=
2^{-\nu}\sum_{x\in\mathcal K}K_S(x)
=
2^{-r}\Pi_{\mathcal K},
\label{eq:supp_graph_reduced_state}
\end{equation}
where $\Pi_{\mathcal K}$ is a projector of rank $2^r$. Consequently,
\[
\operatorname{SchmidtRank}_{S:\mathcal N}(\ket G)=2^r,
\qquad
S(\rho_S)=r.
\]
\end{lemma}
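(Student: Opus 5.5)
The plan is to start from the standard stabilizer expansion of the graph-state projector, trace out the key register, and identify which stabilizer elements survive. Since $\ket G$ is the unique joint $+1$ eigenstate of the $2\nu$ commuting generators $K_v$, we have
\[
\proj G=\frac{1}{2^{2\nu}}\sum_{y\in\mathbb F_2^{2\nu}}K(y),
\qquad
K(y)=\prod_{v}K_v^{y_v}.
\]
Taking $\Tr_{\mathcal N}$ kills every term whose restriction to $\mathcal N$ is a nonidentity Pauli, because such operators are traceless. A term survives only if its $\mathcal N$-part is proportional to identity, in which case $\Tr_{\mathcal N}$ contributes a factor $2^\nu$.

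The first step is to identify the surviving terms. Write $y=(x,z)$ with $x$ on $S$ and $z$ on $\mathcal N$. Up to phase, the restriction of $K(y)$ to $\mathcal N$ is $X^{z}Z^{B^{\mathsf T}x+A_{\mathcal N}z}$. Triviality on $\mathcal N$ forces $z=0$ from the $X$-part, and then $B^{\mathsf T}x=0$. Hence the surviving terms are exactly $K(x)=K_S(x)\otimes I_{\mathcal N}$ with $x\in\mathcal K$. Because $K(x)$ is a stabilizer element, its phase on $S$ is fixed and signed, with no stray $\pm i$, since stabilizer elements are Hermitian. This gives
\[
\rho_S=2^{-2\nu}\,2^{\nu}\sum_{x\in\mathcal K}K_S(x)=2^{-\nu}\sum_{x\in\mathcal K}K_S(x).
\]

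The second step is to show that this sum is a scaled projector. The map $x\mapsto K_S(x)$ is a group homomorphism from $\mathcal K$ into commuting Hermitian Paulis on $S$:
\begin{itemize}
\item The elements commute because they are restrictions of commuting stabilizers with trivial $\mathcal N$-part.
\item Products respect signs because $K(x)K(x')=K(x+x')$ exactly.
\item The map is injective because the $X$-part $X^x$ determines $x$.
\item $-I$ is not in the image, since otherwise $-I$ would stabilize $\ket G$.
\end{itemize}
Therefore $\Pi_{\mathcal K}:=|\mathcal K|^{-1}\sum_{x\in\mathcal K}K_S(x)$ is the projector onto the joint $+1$ eigenspace of an independent commuting Pauli group with $\dim\mathcal K=\nu-r$ generators. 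This eigenspace has dimension $2^{\nu-(\nu-r)}=2^r$. We then get $\rho_S=2^{-\nu}2^{\nu-r}\Pi_{\mathcal K}=2^{-r}\Pi_{\mathcal K}$. Here we use $\dim\ker(B^{\mathsf T})=\nu-\rank B^{\mathsf T}=\nu-r$ by rank–nullity, with $B$ square.

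The conclusions then follow immediately. $\rho_S$ is maximally mixed on a $2^r$-dimensional subspace, so the Schmidt rank is $2^r$ and $S(\rho_S)=r$. The step requiring the most care is the phase and sign bookkeeping: confirming that the restricted operators $K_S(x)$ form a genuine abelian group without $-I$, so that the averaged sum is a projector rather than an arbitrary Hermitian operator. I would handle this by arguing at the level of the full stabilizer group, as above, rather than by tracking explicit $Z$-commutation phases.
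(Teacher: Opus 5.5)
Your proposal is correct and follows the paper's proof. Both arguments expand $\proj G$ over the stabilizer group and use the noise-side labels $(z,\,B_{S\mathcal N}^{\mathsf T}x+A_{\mathcal N}z)$ to show that only $z=0$, $x\in\mathcal K$ survive the trace over $\mathcal N$; both then identify the resulting sum with a rank-$2^r$ stabilizer projector via rank--nullity. Your explicit homomorphism and sign bookkeeping simply makes rigorous the step the paper takes for granted when it writes $\Pi_{\mathcal K}=\prod_a\bigl(I+K_S(x_a)\bigr)/2$.
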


\begin{proof}[Proof of Lemma~\ref{lem:supp_graph_reduced_spectrum}]
A graph stabilizer labelled by binary vectors $x$ on $S$ and $z$ on
$\mathcal N$ restricts to the noise subsystem with binary Pauli labels
\[
\left(
z,\,
B_{S\mathcal N}^{\mathsf T}x+A_{\mathcal N}z
\right).
\]
Such a stabilizer survives the trace over $\mathcal N$ exactly when
\[
z=0,
\qquad
B_{S\mathcal N}^{\mathsf T}x=0.
\]
Using the standard stabilizer-state projector expansion
~\hyperlink{Gottesman1997Supp}{\textcolor{blue}{[2]}}, we therefore
obtain the first equality in
Eq.~\eqref{eq:supp_graph_reduced_state}.

Let
\[
d_{\mathcal K}
=
\dim\mathcal K
=
\nu-r,
\]
and choose a basis
$x_1,\ldots,x_{d_{\mathcal K}}$ of $\mathcal K$. The corresponding
independent commuting stabilizers define the projector
\[
\Pi_{\mathcal K}
=
\prod_{a=1}^{d_{\mathcal K}}
\frac{I+K_S(x_a)}{2}.
\]
Its rank is $2^{\nu-d_{\mathcal K}}
=
2^r.$ It follows that
\[
\rho_S
=
2^{-r}\Pi_{\mathcal K},
\]
which is the second equality in
Eq.~\eqref{eq:supp_graph_reduced_state}. Thus $\rho_S$ has a flat
nonzero spectrum of rank $2^r$. Since $\ket G$ is pure, the Schmidt
rank across the cut is $2^r$ and $S(\rho_S)=r.$

\end{proof}

Lemma~\ref{lem:supp_graph_reduced_spectrum} reproduces the standard
graph-state cut-rank relation
~\hyperlink{HeinEisertBriegel2004Supp,FattalEtAl2004Supp}
{\textcolor{blue}{[11,12]}}. In particular,
\begin{equation}
\rank_{\mathbb F_2}B_{S\mathcal N}=\nu
\quad\Longleftrightarrow\quad
\rho_S=\frac{I_S}{2^\nu}.
\label{eq:supp_graph_full_cut_rank}
\end{equation}
Combining
Lemma~\ref{lem:supp_graph_reduced_spectrum} with Proposition~3 of the
main text further gives the architecture-independent necessary
condition
\[
\rank_{\mathbb F_2}B_{S\mathcal N}
\ge
(m-1)k.
\]

\subsection{Exceptional kernel structure}
\label{subsec:supp_exceptional_kernel}

We now determine which cut-kernel stabilizers belong to the exceptional
signal algebra selected by Theorem~1. For odd $m$, define the
sector-indicator map
\begin{equation}
F:\mathbb F_2^k\longrightarrow\mathbb F_2^\nu,
\qquad
Fc=
(c_1\mathbf1_m,\ldots,c_k\mathbf1_m).
\label{eq:supp_sector_indicator_map}
\end{equation}

\begin{lemma}[Exceptional kernel structure]
\label{lem:supp_exceptional_structure}
For even $m$, no nonzero cut-kernel vector is exceptional. For odd $m$, every nonzero exceptional vector has the form $x=Fc$,
with $F$ defined in Eq.~\eqref{eq:supp_sector_indicator_map}. Within
the fixed graph-state Pauli frame, the three encoder classes satisfy
\[
\begin{array}{c|c}
\text{encoder class}
&
\text{exceptional condition}
\\
\hline
XZ/ZX
&
(A_S+I_\nu)Fc=0,
\quad
\mathbf1_k^{\mathsf T}c=0,
\\[1mm]
YZ/ZY
&
A_SFc=0,
\\[1mm]
XY/YX
&
\text{no nonzero exceptional vector}.
\end{array}
\]
\end{lemma}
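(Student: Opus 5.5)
The plan is to reduce the lemma to Lemma~\ref{lem:supp_encoder_pauli_action} applied sector by sector, and then to read off, for each encoder class, when the graph stabilizer $K_S(x)\doteq X^xZ^{A_Sx}$ equals the required complete-sector operator.

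By Definition~\ref{def:main_exceptional_kernel_vector}, a kernel vector $x$ is exceptional exactly when $K_S(x)$ loses all signal support under conjugation by $U_{P,Q}^{(m,k)}$. We will use the same conjugation direction as in the proof of Theorem~1; on signal-only Paulis the two directions have identical support behaviour, since in the canonical frame the two rotations commute on these operators. Because the encoder factorizes over sectors, $K_S(x)$ splits as a product of its restrictions to the sectors $\mathsf S_1,\ldots,\mathsf S_k$, and each restriction is conjugated independently. So $x$ is exceptional if and only if every sector restriction is individually mapped to an operator with trivial signal support.

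\textbf{Even $m$.} Lemma~\ref{lem:supp_encoder_pauli_action}(i) forces every sector restriction to be the identity. Hence $X^xZ^{A_Sx}\doteq I$, so $x=0$.

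\textbf{Odd $m$.} Lemma~\ref{lem:supp_encoder_pauli_action}(ii) says each sector restriction must be either $I$ or $R^{\otimes m}$. This gives $K_S(x)\doteq T(c)$ for some $c\in\F^k$.

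The $X$-part of $K_S(x)$ is $X^x$. Every choice $R\in\{X,Y,Z\}$ has $X$-part either $\mathbf 1_m$ (for $R=X,Y$) or $0$ (for $R=Z$) on an active sector. We then split by class:

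\begin{itemize}[leftmargin=*]
\item[] \emph{$XY/YX$, where $R=\pm Z$.} Comparing $X$-parts gives $x=0$.
\item[] \emph{$YZ/ZY$, where $R=\pm X$.} We need $x=Fc$ and $Z$-part zero, i.e.\ $A_SFc=0$.
\item[] \emph{$XZ/ZX$, where $R=\pm Y$.} We need $x=Fc$ and $Z$-part equal to $X$-part, i.e.\ $A_SFc=Fc$, equivalently $(A_S+I_\nu)Fc=0$.
\end{itemize}

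Conversely, in each class these conditions make $K_S(x)$ proportional to $T(c)$. Equation~\eqref{eq:supp_exceptional_to_input} then shows that its image lies entirely on $A$, which gives sufficiency.

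It remains to derive the parity constraint $\mathbf 1_k^{\mathsf T}c=0$ in the $XZ/ZX$ branch. Take the inner product of $(A_S+I_\nu)Fc=0$ with $Fc$. Since $A_S$ is symmetric with zero diagonal, the quadratic form $y^{\mathsf T}A_Sy=\sum_{a<b}2(A_S)_{ab}y_ay_b$ vanishes over $\F$. What remains is $(Fc)^{\mathsf T}Fc=m|c|\equiv|c|\pmod 2$, using that $m$ is odd. Hence $\mathbf 1_k^{\mathsf T}c=0$. This condition is implied by the first one, so listing it changes nothing but records the parity consequence.

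Membership in $\ker B_{S\mathcal N}^{\mathsf T}$ is assumed throughout, so no further condition arises. The main point requiring care is sign and phase bookkeeping: $K_S(x)$ equals $\pm T(c)$ only up to phase. This is harmless, because support statements are phase-independent and the general theorem allows real coefficients $\alpha_c$ of either sign.
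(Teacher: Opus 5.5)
Your proposal is correct and follows the paper's proof essentially step for step. Like the paper, you reduce sector by sector via the Pauli-action lemma and match the $X$- and $Z$-parts of $X^xZ^{A_Sx}$ against $R^{\otimes m}$ in each encoder class. You also obtain the parity constraint from the same argument, $x^{\mathsf T}A_Sx=0$ together with $m$ odd.

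One small aside is inaccurate: you justify that the two conjugation directions agree by saying the rotations commute, but this holds only for odd $m$. For even $m$ the generators $P_AP^{\otimes m}$ and $Q_AQ^{\otimes m}$ anticommute. The remark is unnecessary anyway, because the definition of exceptional vectors and the Pauli-action lemma both use $U(\cdot)U^\dagger$.
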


\begin{proof}
Write
\[
x=(x_1,\ldots,x_k),
\qquad
y_j=(A_Sx)_j,
\]
where $x_j,y_j\in\mathbb F_2^m$ denote the restrictions to sector
$j$. The corresponding signal Pauli is
\[
K_S(x)
\doteq
X^xZ^{A_Sx},
\]
whose restriction to sector $j$ is proportional to
$X^{x_j}Z^{y_j}$.

Because the encoder factorizes across sectors, every active sector
must independently lie along the exceptional direction identified in
Lemma~\ref{lem:supp_encoder_pauli_action}. For even $m$, that lemma
immediately excludes every nonzero $x$.

Now let $m$ be odd. The unique nonidentity exceptional Pauli in each
active sector is $R^{\otimes m}$, where $R=iPQ$.

For the $XZ/ZX$ class, $R=\pm Y$. Hence every active sector must
satisfy $x_j=y_j=\mathbf1_m.$

Therefore
\[
x=Fc,
\qquad
A_SFc=Fc,
\]
or equivalently,
\[
(A_S+I_\nu)Fc=0.
\]
Since $A_S$ is symmetric with zero diagonal,
\[
x^{\mathsf T}A_Sx=0
\]
over $\mathbb F_2$. Using $A_Sx=x$, we obtain
\[
0
=
x^{\mathsf T}x
=
m\,\operatorname{wt}(c)
\pmod 2.
\]
Because $m$ is odd, this implies
$\mathbf1_k^{\mathsf T}c=0.$

Thus the even-sector-parity condition follows automatically from the
graph relation once the kernel vector has the complete-sector form
$x=Fc$.

For the $YZ/ZY$ class, $R=\pm X$. Every active sector must therefore
satisfy
\[
x_j=\mathbf1_m,
\qquad
y_j=0,
\]
which gives
\[
x=Fc,
\qquad
A_SFc=0,
\]
with no additional parity constraint.

Finally, for the $XY/YX$ class, $R=\pm Z$. However,
\[
K_S(x)\doteq X^xZ^{A_Sx}
\]
has nontrivial $X$ support whenever $x\neq0$. Hence no nonzero
cut-kernel vector can lie entirely along the required $Z$ direction.
\end{proof}

Accordingly, the exceptional subspaces appearing in Theorem~2 are
\begin{equation}
\mathcal E_{P,Q}^{(m,k)}(A_S)
=
\left\{
\begin{array}{@{}ll@{}}
\{0\},
&
m\ \mathrm{even},
\\[1.5mm]

\left\{
Fc:
\substack{
(A_S+I_\nu)Fc=0,\\
\mathbf1_k^{\mathsf T}c=0
}
\right\},
&
\substack{
m\ \mathrm{odd},\\
(P,Q)\in\{(X,Z),(Z,X)\},
}
\\[2mm]

\{Fc:A_SFc=0\},
&
\substack{
m\ \mathrm{odd},\\
(P,Q)\in\{(Y,Z),(Z,Y)\},
}
\\[1.5mm]

\{0\},
&
\substack{
m\ \mathrm{odd},\\
(P,Q)\in\{(X,Y),(Y,X)\}.
}
\end{array}
\right.
\label{eq:supp_exceptional_subspaces}
\end{equation}
\subsection{Proof of Theorem~2 of the main text}
\label{subsec:supp_proof_graph_exact}

We now prove the exact graph-state criterion stated as Theorem~2 in
the main text.

\begin{proof}
By Lemma~\ref{lem:supp_graph_reduced_spectrum}, the signal marginal of
the graph state is
\[
\rho_S
=
2^{-\nu}
\sum_{x\in\mathcal K}K_S(x),
\qquad
\mathcal K
=
\ker(B_{S\mathcal N}^{\mathsf T}).
\]
Every $x\in\mathcal K$ appears with a nonzero coefficient, and
distinct kernel vectors correspond to distinct signal stabilizers.

By Theorem~1, exact recovery from every $S_i\mathcal N$ under the
fixed two-Pauli encoder is possible if and only if every nonidentity
Pauli component of $\rho_S$ belongs to the exceptional signal algebra.
Lemma~\ref{lem:supp_exceptional_structure}, equivalently
Eq.~\eqref{eq:supp_exceptional_subspaces}, identifies precisely which
graph-state stabilizers have this property. Therefore the exact
condition is
\begin{equation}
\ker(B_{S\mathcal N}^{\mathsf T})
\subseteq
\mathcal E_{P,Q}^{(m,k)}(A_S).
\label{eq:supp_graph_exact_criterion}
\end{equation}

Eq.~\eqref{eq:supp_graph_exact_criterion} is both necessary and
sufficient. Necessity is independent of the recovery map, since
Theorem~1 already allows an arbitrary CPTP decoder on each authorized
subsystem $S_i\mathcal N$. Conversely, whenever
Eq.~\eqref{eq:supp_graph_exact_criterion} holds, Theorem~1 guarantees
exact recovery from every $S_i\mathcal N$ and perfect concealment of
every individual signal $S_i$.

This proves Theorem~2.
\end{proof}

The result shows why cut rank alone does not characterize the
rank-deficient branches. If
\[
\rank_{\mathbb F_2}B_{S\mathcal N}<mk,
\]
the cut kernel is nontrivial. Exact recovery remains possible only
when every missing cut direction lies in the encoder-compatible
exceptional subspace. Thus the cut rank quantifies the amount of
signal--noise entanglement, whereas the complete kernel determines
whether the corresponding entanglement deficit is compatible with
the fixed encoder.

\subsection{Proof of Corollary~4 of the main text}
\label{subsec:supp_proof_graph_classes}

The parity- and encoder-dependent rank consequences follow directly
from Theorem~2, Lemma~\ref{lem:supp_exceptional_structure}, and
rank--nullity.

\begin{proof}
For even $m$, the exceptional subspace is trivial. Theorem~2 therefore
requires
\[
\ker(B_{S\mathcal N}^{\mathsf T})=\{0\},
\]
which is equivalent to
\[
\rank_{\mathbb F_2}B_{S\mathcal N}=mk.
\]

Now let $m$ be odd. For the $XZ/ZX$ class, every exceptional vector
has the form $Fc$ with $\mathbf1_k^{\mathsf T}c=0.$ The allowed coefficient space therefore has dimension at most $k-1$.
Since $F$ is injective,
\begin{equation}
\dim\ker(B_{S\mathcal N}^{\mathsf T})
\le
k-1,
\qquad
\rank_{\mathbb F_2}B_{S\mathcal N}
\ge
mk-k+1.
\label{eq:supp_XZ_rank_bound}
\end{equation}

For the $YZ/ZY$ class, the exceptional subspace is contained in
$\operatorname{im}F$, whose dimension is $k$. Hence
\begin{equation}
\dim\ker(B_{S\mathcal N}^{\mathsf T})
\le
k,
\qquad
\rank_{\mathbb F_2}B_{S\mathcal N}
\ge
(m-1)k.
\label{eq:supp_YZ_rank_bound}
\end{equation}
The rank bound in Eq.~\eqref{eq:supp_YZ_rank_bound} coincides with the
architecture-independent lower bound of Proposition~3. It is not,
however, sufficient by itself: every kernel vector must also satisfy
the exact compatibility condition $A_SFc=0.$

For the $XY/YX$ class, the exceptional subspace is again trivial.
Thus
\[
\rank_{\mathbb F_2}B_{S\mathcal N}=mk
\]
is necessary and sufficient.
\end{proof}

Within the standard graph-state Pauli frame, the $YZ/ZY$ branch is the
only class whose exceptional subspace can attain nullity $k$. If
\[
\dim\mathcal K=k,
\qquad
\mathcal K
\subseteq
\mathcal E_{P,Q}^{(m,k)}(A_S),
\]
then
\begin{equation}
S(\rho_S)
=
\rank_{\mathbb F_2}B_{S\mathcal N}
=
(m-1)k.
\label{eq:supp_graph_entanglement_floor}
\end{equation}
Thus the graph-state resource saturates the architecture-independent
signal--noise entanglement floor. By contrast, the $XZ/ZX$ branch has
nullity at most $k-1$, as shown in
Eq.~\eqref{eq:supp_XZ_rank_bound}.

These distinctions are relative to the chosen Pauli frame. Any ordered
pair of distinct Pauli axes is related to any other by a single-qubit
Clifford conjugation. A simultaneous Clifford rotation of the input and
signal systems therefore maps the encoder together with its compatible
resource representative without changing the signal--noise
entanglement. The classification above thus concerns compatibility
between a fixed graph-state representative and a fixed encoder frame,
rather than an absolute inequivalence among different Pauli-axis
encoders.

\subsection{Proof of Corollary~5 of the main text}
\label{subsec:supp_proof_k1_classification}

For $k=1$, the sector-indicator map generates only the one-dimensional
subspace spanned by $\mathbf1_m$:
\[
\operatorname{im}F
=
\operatorname{span}\{\mathbf1_m\}.
\]

\begin{proof}
For the $XZ/ZX$ class, the only possible nonzero coefficient is
$c=1$, while the even-sector-parity condition requires
\[
\mathbf1_1^{\mathsf T}c=0.
\]
Hence no nonzero exceptional vector exists, and full cut rank is
necessary and sufficient.

For the $XY/YX$ class, the exceptional subspace is already trivial,
so the same full-rank condition follows.

For the $YZ/ZY$ class with even $m$, the general even-$m$ result again
requires full cut rank. Now let $m$ be odd. In this case, the
exceptional subspace is contained in
$\operatorname{span}\{\mathbf1_m\}$ and therefore has dimension at
most one. Besides the full-rank branch, the only possible
rank-deficient valid case is
\begin{equation}
\ker(B_{S\mathcal N}^{\mathsf T})
=
\operatorname{span}\{\mathbf1_m\},
\qquad
A_S\mathbf1_m=0.
\label{eq:supp_k1_rank_deficient_branch}
\end{equation}
The kernel in Eq.~\eqref{eq:supp_k1_rank_deficient_branch} has
dimension one. Rank--nullity therefore gives
\[
\rank_{\mathbb F_2}B_{S\mathcal N}
=
m-1.
\]
This proves Corollary~5.
\end{proof}

The graph-state criterion therefore separates three distinct pieces of
information. The cut rank determines the amount of signal--noise
entanglement, the full cut kernel identifies the correlations
responsible for any entanglement deficit, and the fixed encoder frame
determines whether those correlations are exceptional or obstructive.
Full cut rank completely characterizes the even-$m$ and $XY/YX$
branches, whereas the exceptional odd-$m$ branches require the full
kernel test.

Section~S3 turns this fixed-realization criterion into
\textsc{GSECC} and provides constructive Clifford recovery for every
certified realization.

\section{Certification and constructive decoding}
\label{sec:supp_certification}

This section provides the technical details underlying Sec.~IV of the
main text. We first derive the fixed-cut implementation of
\emph{Graph-State Encrypted-Cloning Certification} (\textsc{GSECC}),
including the row-signature construction of an admissible sector
decomposition. We then prove Proposition~4, Observation~1, and
Observation~2 of the main text, and conclude with the associated
Clifford-synthesis costs.

\subsection{Exact \textsc{GSECC} implementation}
\label{subsec:supp_gsecc}

Let $G=(V,E)$ be a graph on $2\nu$ vertices, with $\nu=mk$. An
$(m,k)$ realization consists of an oriented balanced partition
\[
V=S\sqcup\mathcal N,
\qquad
|S|=|\mathcal N|=\nu,
\]
together with a partition
\[
\Pi_S=\{\mathsf S_1,\ldots,\mathsf S_k\}
\]
of the signal vertices into $k$ sectors of size $m$. We denote the
complete realization by
\begin{equation}
\mathfrak R=(S:\mathcal N,\Pi_S).
\label{eq:supp_gsecc_realization}
\end{equation}

For a fixed encoder $U_{P,Q}^{(m,k)}$, the graph is
\textsc{GSECC}-valid precisely when at least one realization in
Eq.~\eqref{eq:supp_gsecc_realization} satisfies the exact kernel
condition of Eq.~\eqref{eq:main_gsecc_definition}. Certification
therefore has two levels: a fixed-cut test and an existential search
over oriented balanced cuts.

For a prescribed cut $S:\mathcal N$, extract $A_S$ and
$B\equiv B_{S\mathcal N}$ from the adjacency matrix and compute a
kernel-basis matrix
\begin{equation}
W=
\begin{pmatrix}
w_1&\cdots&w_\eta
\end{pmatrix}
\in\mathbb F_2^{\nu\times\eta},
\qquad
\operatorname{col}(W)=\ker B^{\mathsf T},
\qquad
\eta=\dim\ker B^{\mathsf T}.
\label{eq:supp_gsecc_kernel_basis}
\end{equation}
The columns of $W$ can be obtained by binary Gaussian elimination.

For even $m$, and for odd $m$ in the $XY/YX$ encoder class, the
exceptional subspace is trivial. Hence the cut is admissible exactly
when
\[
\eta=0,
\qquad\text{equivalently}\qquad
\rank_{\mathbb F_2}B=\nu.
\]
The sector decomposition is irrelevant in these branches.

For the two exceptional odd-$m$ branches, an admissible sector
decomposition can be reconstructed directly from $W$, without
enumerating partitions of the signal vertices.

\begin{lemma}[Row-signature sector criterion]
\label{lem:supp_row_signature}
For each signal vertex $v\in S$, define its kernel row signature by
\begin{equation}
\sigma(v)=W_{v,:}\in\mathbb F_2^\eta.
\label{eq:supp_row_signature}
\end{equation}
There exists a partition of $S$ into $k$ sectors of size $m$ on which
every vector in $\ker B^{\mathsf T}$ is constant if and only if every
row-signature class defined by Eq.~\eqref{eq:supp_row_signature} has
cardinality divisible by $m$.
\end{lemma}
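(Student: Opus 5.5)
The proof reduces to one elementary observation. A vector $w\in\ker B^{\mathsf T}$ is constant on a subset $T\subseteq S$ exactly when all vertices of $T$ agree in the corresponding coordinate. Since the columns $w_1,\ldots,w_\eta$ of $W$ span $\ker B^{\mathsf T}$, the following are equivalent:
\begin{itemize}
\item every kernel vector is constant on $T$;
\item every basis column $w_a$ is constant on $T$;
\item all vertices of $T$ share the same row signature $\sigma(v)=W_{v,:}$.
\end{itemize}
The forward direction of the first equivalence is immediate, since each $w_a$ is itself a kernel vector. For the converse, note that constancy on $T$ is preserved under $\mathbb F_2$-linear combinations. The second equivalence is just a restatement: $w_a$ is constant on $T$ for every $a$ if and only if the rows $W_{v,:}$, $v\in T$, coincide.

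I will state this first as a short auxiliary claim and then apply it sector by sector. I will also remark, as the main text does, that the row-signature classes do not depend on the choice of basis. Replacing $W$ by $WM$ with $M$ invertible gives $\sigma'(v)=\sigma(v)M$, and right multiplication by an invertible matrix preserves equality of rows.

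\emph{Necessity.} Suppose $\Pi_S=\{\mathsf S_1,\ldots,\mathsf S_k\}$ is a partition into blocks of size $m$ on which every kernel vector is constant. By the claim, each block lies inside a single row-signature class. Each class is then a disjoint union of whole blocks, so its cardinality is a multiple of $m$.

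\emph{Sufficiency.} Suppose every class $C_\tau=\{v:\sigma(v)=\tau\}$ has $|C_\tau|=n_\tau m$. Split each $C_\tau$ arbitrarily into $n_\tau$ blocks of size $m$. Within each block all signatures agree, so by the claim every kernel vector is constant on it. Because the classes partition $S$ and $|S|=\nu=mk$, the total number of blocks is $\sum_\tau n_\tau=k$. This yields the required partition into $k$ sectors of size $m$.

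No step is a genuine obstacle. The only point needing care is the converse direction of the auxiliary claim, where constancy on the spanning set must be shown to propagate to all of $\ker B^{\mathsf T}$. It does, because $\mathbb F_2$-linear combinations of functions constant on $T$ are again constant on $T$. The trivial case $\eta=0$ is also covered: all vertices then carry the empty signature and form one class of size $\nu=mk$, so the condition holds automatically.
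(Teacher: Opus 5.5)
Your proof is correct and follows essentially the same route as the paper. Necessity holds because each sector lies inside a single row-signature class, and sufficiency holds because splitting each class into blocks of size $m$ makes every column of $W$, and hence every kernel vector, constant on each block; you also make the same basis-independence remark, and your explicit count $\sum_\tau n_\tau=k$ and treatment of $\eta=0$ are small details the paper leaves implicit.
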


\begin{proof}
Suppose first that such a sector decomposition exists. If two
vertices $u$ and $v$ lie in the same sector, every
$x\in\ker B^{\mathsf T}$ satisfies $x_u=x_v$. In particular, this
holds for each basis vector $w_a$, so
\[
W_{u,:}=W_{v,:}.
\]
Thus every sector lies entirely within a single row-signature class.
Each signature class is therefore a disjoint union of sectors of size
$m$, and its cardinality must be divisible by $m$.

Conversely, suppose every row-signature class has cardinality
divisible by $m$. Partition each class arbitrarily into blocks of
size $m$. Within every such block, the corresponding rows of $W$ are
identical. Hence every column of $W$, and therefore every vector in
\[
\operatorname{col}(W)=\ker B^{\mathsf T},
\]
is constant on that block. The resulting blocks form the required
sector decomposition.

The criterion is independent of the chosen kernel basis. Indeed,
replacing $W$ by $WM$ for any invertible
$M\in\operatorname{GL}(\eta,\mathbb F_2)$ preserves equality of row
signatures.
\end{proof}

It remains to impose the encoder-specific exceptional-subspace
condition. For the $XZ/ZX$ class, every kernel vector must satisfy
\begin{equation}
(A_S+I_\nu)W=0.
\label{eq:supp_gsecc_XZ_condition}
\end{equation}
For odd $m$, Eq.~\eqref{eq:supp_gsecc_XZ_condition}, together with the
graph constraint, already implies the even-sector-parity condition
derived in Sec.~\ref{subsec:supp_exceptional_kernel}; no independent
parity test is required once a sector-constant decomposition exists.

For the $YZ/ZY$ class, the corresponding condition is
\begin{equation}
A_SW=0.
\label{eq:supp_gsecc_YZ_condition}
\end{equation}
Because Eqs.~\eqref{eq:supp_gsecc_XZ_condition} and
\eqref{eq:supp_gsecc_YZ_condition} are linear, checking them on the
kernel-basis matrix $W$ is equivalent to checking every vector in
$\ker B^{\mathsf T}$.

The exact fixed-cut procedure is therefore as follows. Compute
$A_S$, $B$, and a kernel basis $W$ as in
Eq.~\eqref{eq:supp_gsecc_kernel_basis}. In the full-rank branches,
accept exactly when $\eta=0$. In an exceptional odd-$m$ branch, group
the signal vertices according to the row signatures in
Eq.~\eqref{eq:supp_row_signature}, reject if any signature class has
cardinality not divisible by $m$, and then test
Eq.~\eqref{eq:supp_gsecc_XZ_condition} or
Eq.~\eqref{eq:supp_gsecc_YZ_condition}, as appropriate. If both tests
pass, split each signature class into blocks of size $m$. These blocks
provide an admissible sector assignment satisfying the exact kernel
criterion.

All fixed-cut operations involve binary matrices of size $O(\nu)$,
and ordinary Gaussian elimination gives an $O(\nu^3)$ upper bound.
Row grouping and the remaining matrix tests are lower-order within
this bound.

At the graph level, the oriented signal subsystem can be chosen in
$\binom{2\nu}{\nu}$ ways. Since the row-signature construction removes the separate
enumeration of sector partitions, a direct exhaustive implementation
has the upper bound
\begin{equation}
O\left[
\binom{2\nu}{\nu}\nu^3
\right].
\label{eq:supp_gsecc_exhaustive_cost}
\end{equation}
Eq.~\eqref{eq:supp_gsecc_exhaustive_cost} is an exhaustive-search bound
for the stated certification procedure, not a complexity
classification of \textsc{GSECC}. Graph automorphisms, equivalent
cuts, rank pruning, and early signature rejection can further reduce
the practical search.

Failure of one oriented cut does not invalidate the graph.
\textsc{GSECC} returns \textsc{valid} whenever at least one cut admits
the required sector structure and satisfies the encoder-compatible
kernel condition. A successful certification therefore supplies
precisely the cut and sector data required for constructive recovery.
\subsection{Proof of Proposition~4 of the main text}
\label{subsec:supp_constructive_decoder}

We now prove the constructive Clifford-decoder statement of
Proposition~4.

For a certified realization, define the encoding isometry
\[
V_G\ket{\psi}_A
=
U_{P,Q}^{(m,k)}
\bigl(
\ket{\psi}_A\otimes\ket G_{S\mathcal N}
\bigr).
\]
Since $\ket G$ is a stabilizer state and the sector-wise two-Pauli
encoder is Clifford, the image of $V_G$ is a
$2^k$-dimensional stabilizer code in $AS\mathcal N$
~\hyperlink{Gottesman1997Supp}{\textcolor{blue}{[2]}}.

For recovery from the $i$th authorized subsystem, write
\[
\mathcal R_i=S_i\mathcal N,
\qquad
\mathcal E_i=AS_{\bar i},
\]
for the authorized and erased subsystems, respectively.

\begin{proof}
Let $\{K_v\}$ be independent generators of the graph-state
stabilizer. Their encoded images
\[
\widetilde K_v
=
U_{P,Q}^{(m,k)}
K_v
U_{P,Q}^{(m,k)\dagger}
\]
stabilize the code $V_G(\mathcal H_A)$. Representatives of the
encoded input Pauli operators may be chosen as
\[
\widehat X_j
=
U_{P,Q}^{(m,k)}
X_{A_j}
U_{P,Q}^{(m,k)\dagger},
\qquad
\widehat Z_j
=
U_{P,Q}^{(m,k)}
Z_{A_j}
U_{P,Q}^{(m,k)\dagger}.
\]

Because the realization satisfies
Theorem~\ref{thm:graph_exact_main}, the complete $k$-qubit input is
exactly recoverable from $\mathcal R_i$. By the
recovery--decoupling criterion,
\[
I(\widetilde A:\mathcal E_i)_\Omega=0,
\]
so $\mathcal E_i$ is an exactly correctable erasure region for the
stabilizer code defined by $V_G$.

The stabilizer cleaning lemma
~\hyperlink{BravyiTerhal2009Supp}{\textcolor{blue}{[13]}}
therefore guarantees that every encoded logical Pauli has an
equivalent representative supported entirely on $\mathcal R_i$.
Hence there exist encoded stabilizer products
$S_{j,X}^{(i)}$ and $S_{j,Z}^{(i)}$ such that
\begin{equation}
\overline X_j^{(i)}
=
\widehat X_j S_{j,X}^{(i)},
\qquad
\overline Z_j^{(i)}
=
\widehat Z_j S_{j,Z}^{(i)}
\label{eq:supp_cleaned_logical_paulis}
\end{equation}
have no support on $\mathcal E_i$. These operators retain the encoded
input action,
\[
\overline X_j^{(i)}V_G
=
V_GX_{A_j},
\qquad
\overline Z_j^{(i)}V_G
=
V_GZ_{A_j},
\]
and therefore satisfy the canonical Pauli symplectic relations.

Any canonical symplectic set can be completed to a full symplectic
basis and mapped to standard single-qubit Pauli generators by a
Clifford transformation
~\hyperlink{DehaeneDeMoor2003Supp,AaronsonGottesman2004Supp}
{\textcolor{blue}{[3,14]}}. Hence there exist a Clifford unitary
$C_i$ and a factorization
\[
\mathcal H_{\mathcal R_i}
\simeq
\mathcal H_{A_i'}\otimes\mathcal H_{\mathcal J_i},
\qquad
\dim A_i'=2^k,
\]
such that
\begin{equation}
\begin{aligned}
C_i\overline X_j^{(i)}C_i^\dagger
&=
X_{A_j'}\otimes I_{\mathcal J_i},
\\
C_i\overline Z_j^{(i)}C_i^\dagger
&=
Z_{A_j'}\otimes I_{\mathcal J_i}.
\end{aligned}
\label{eq:supp_clifford_logical_standardization}
\end{equation}
Eq.~\eqref{eq:supp_clifford_logical_standardization} extracts, on the
encoded Choi state, a maximally entangled partner of $\widetilde A$
into $A_i'$. The decoder
\begin{equation}
\mathcal D_i(\rho)
=
\Tr_{\mathcal J_i}
\!\left[
C_i\rho C_i^\dagger
\right]
\label{eq:supp_constructive_decoder}
\end{equation}
therefore satisfies
\[
\mathcal D_i\circ\mathcal Q_i=\id_A.
\]
This proves Proposition~4.
\end{proof}

The construction is algorithmic. Represent the encoded input Paulis
by binary symplectic vectors $\ell_{j,\mu}$, with
$\mu\in\{X,Z\}$, and let the rows of $H_G$ generate the stabilizer of
the encoded subspace. If $\Pi_{\mathcal E_i}$ denotes restriction of
a Pauli vector to the erased subsystem, then cleaning the logical
operators in Eq.~\eqref{eq:supp_cleaned_logical_paulis} reduces to
solving
\begin{equation}
\Pi_{\mathcal E_i}
\left[
\ell_{j,\mu}
+
c_{j,\mu}^{(i)}H_G
\right]
=
0
\qquad
\text{over }\mathbb F_2.
\label{eq:supp_cleaning_linear_system}
\end{equation}
Exact correctability guarantees a solution of
Eq.~\eqref{eq:supp_cleaning_linear_system} for every $j$ and $\mu$.
After these systems are solved, symplectic Gaussian elimination maps
the cleaned representatives to the standard Pauli generators in
Eq.~\eqref{eq:supp_clifford_logical_standardization} and synthesizes
the Clifford $C_i$
~\hyperlink{DehaeneDeMoor2003Supp,AaronsonGottesman2004Supp}
{\textcolor{blue}{[3,14]}}.

The construction applies to every certified branch, including the
rank-deficient exceptional odd-$m$ cases. Full cut rank is therefore
not required for exact Clifford recovery. The corresponding
coherent-information and quantum-capacity saturation has already been
stated in Eq.~({\color{blue}2}) of the main text and
is not repeated here.

\subsection{Proof of Observation~1 of the main text}
\label{subsec:supp_explicit_bell_reduction}

We next prove the explicit noise-side Bell reduction stated as
Observation~1. Let
$B\equiv B_{S\mathcal N}$ be invertible. We use the standard
computational-basis representation of graph states
~\hyperlink{HeinEisertBriegel2004Supp}{\textcolor{blue}{[11]}}.

\begin{proof}
The graph state can be written as
\begin{equation}
\ket G
=
2^{-\nu}
\sum_{s,n\in\mathbb F_2^\nu}
(-1)^{
q_S(s)+s^{\mathsf T}Bn+q_{\mathcal N}(n)
}
\ket s_S\ket n_{\mathcal N},
\label{eq:supp_graph_computational_form}
\end{equation}
where
\[
q_S(s)
=
\sum_{a<b}(A_S)_{ab}s_as_b,
\qquad
q_{\mathcal N}(n)
=
\sum_{a<b}(A_{\mathcal N})_{ab}n_an_b.
\]

Applying $CZ_{A_{\mathcal N}}^{(\mathcal N)}$ to
Eq.~\eqref{eq:supp_graph_computational_form} cancels the
noise-internal graph phase and gives
\[
2^{-\nu}
\sum_{s,n}
(-1)^{q_S(s)+s^{\mathsf T}Bn}
\ket s\ket n.
\]

Next apply $H_{\mathcal N}^{\otimes\nu}$. Since
\[
H^{\otimes\nu}\ket n
=
2^{-\nu/2}
\sum_{y\in\mathbb F_2^\nu}
(-1)^{n^{\mathsf T}y}\ket y,
\]
the sum over $n$ satisfies
\[
\sum_n
(-1)^{n^{\mathsf T}(B^{\mathsf T}s+y)}
=
2^\nu\delta_{y,B^{\mathsf T}s}.
\]
The state therefore becomes
\[
2^{-\nu/2}
\sum_s
(-1)^{q_S(s)}
\ket s_S
\ket{B^{\mathsf T}s}_{\mathcal N}.
\]

Because $B$ is invertible, the linear reversible map
$L_{B^{-\mathsf T}}$ sends
$\ket{B^{\mathsf T}s}$ to $\ket s$, yielding
\[
2^{-\nu/2}
\sum_s
(-1)^{q_S(s)}
\ket s_S\ket s_{\mathcal N}.
\]
Finally, $CZ_{A_S}^{(\mathcal N)}$ contributes the phase
$(-1)^{q_S(s)}$ to the correlated noise basis state and therefore
cancels the remaining signal-internal phase. Hence
\begin{equation}
(I_S\otimes U_{\mathcal N})\ket G
=
\ket{\Phi_{2^\nu}}_{S\mathcal N},
\label{eq:supp_bell_reduction}
\end{equation}
where
\begin{equation}
U_{\mathcal N}
=
CZ_{A_S}^{(\mathcal N)}
L_{B^{-\mathsf T}}
H_{\mathcal N}^{\otimes\nu}
CZ_{A_{\mathcal N}}^{(\mathcal N)}.
\label{eq:supp_noise_clifford}
\end{equation}
Eqs.~\eqref{eq:supp_bell_reduction} and
\eqref{eq:supp_noise_clifford} establish the explicit Bell reduction
claimed in Observation~\ref{obs:explicit_bell_reduction_main}.
\end{proof}

Thus every full-cut-rank graph resource is explicitly related to the
canonical maximally entangled resource by a Clifford acting only on
$\mathcal N$, consistent with the standard bipartite normal form of
stabilizer entanglement
~\hyperlink{FattalEtAl2004Supp}{\textcolor{blue}{[12]}}.

Because the Clifford in Eq.~\eqref{eq:supp_noise_clifford} acts only
on $\mathcal N$, it commutes with the sector-wise encoder. The
authorized recovery may therefore be written as
\begin{equation}
\mathcal D_i
=
\mathcal D_i^{\rm Bell}
\circ
\operatorname{Ad}_{I_{S_i}\otimes U_{\mathcal N}},
\label{eq:supp_bell_resource_decoder}
\end{equation}
where $\mathcal D_i^{\rm Bell}$ denotes an exact decoder for the
canonical Bell resource. Eq.~\eqref{eq:supp_bell_resource_decoder}
shows that the noise-side Clifford provides a closed-form
canonicalization of the resource, but is not by itself the complete
authorized recovery map.


\subsection{Proof of Observation~2 of the main text}
\label{subsec:supp_bell_reduction_locality}

We now prove the locality criterion stated as Observation~2. The
statement concerns the specific noise-side Bell canonicalization of
Observation~1, rather than arbitrary recovery maps.

\begin{proof}
Set
\[
M=B^{-\mathsf T}.
\]
Represent a noise Pauli by its binary symplectic vector $(x,z)$. Under
the Clifford
\[
U_{\mathcal N}
=
CZ_{A_S}
L_M
H^{\otimes\nu}
CZ_{A_{\mathcal N}},
\]
the Pauli labels transform as
\begin{equation}
\begin{aligned}
x'
&=
B^{-\mathsf T}A_{\mathcal N}x
+
B^{-\mathsf T}z,
\\
z'
&=
\bigl(
B+A_SB^{-\mathsf T}A_{\mathcal N}
\bigr)x
+
A_SB^{-\mathsf T}z.
\end{aligned}
\label{eq:supp_noise_symplectic_action}
\end{equation}

Suppose first that $U_{\mathcal N}$ is a tensor product of
single-qubit Clifford operations, up to a permutation of the noise
qubits. Such a transformation cannot spread a single-qubit Pauli over
more than one output qubit.

Consider the input $Z_j$, corresponding to $x=0$ and $z=e_j$.
From Eq.~\eqref{eq:supp_noise_symplectic_action},
\[
x'=B^{-\mathsf T}e_j.
\]
Locality therefore requires every column of $B^{-\mathsf T}$ to have
Hamming weight one. Since $B^{-\mathsf T}$ is invertible, it must be a
permutation matrix, and hence so must $B$.

For a permutation matrix, $B^{-\mathsf T}=B$. Eq.~\eqref{eq:supp_noise_symplectic_action}
then gives for the image of $Z_j$
\[
x'=Be_j,
\qquad
z'=A_SBe_j.
\]
The $Z$ component may have support only on the same output qubit as
$Be_j$. Because $A_S$ has zero diagonal, it has no component on that
qubit. Hence
\[
A_SBe_j=0
\]
for every $j$, which implies
$A_S=0.$

With $A_S=0$, Eq.~\eqref{eq:supp_noise_symplectic_action} gives for
the input $X_j$
\[
x'=BA_{\mathcal N}e_j,
\qquad
z'=Be_j.
\]
The same locality argument, together with the zero diagonal of
$A_{\mathcal N}$, yields $A_{\mathcal N}=0.$

Thus locality requires
\begin{equation}
A_S=A_{\mathcal N}=0,
\qquad
B\ \text{a permutation matrix}.
\label{eq:supp_noise_locality_condition}
\end{equation}

Conversely, suppose Eq.~\eqref{eq:supp_noise_locality_condition}
holds. Both controlled-$Z$ layers are then trivial, and
\begin{equation}
U_{\mathcal N}
=
L_{B^{-\mathsf T}}H^{\otimes\nu}.
\label{eq:supp_local_noise_clifford}
\end{equation}
Since $B$ is a permutation matrix,
$L_{B^{-\mathsf T}}$ merely permutes the noise qubits. Thus
Eq.~\eqref{eq:supp_local_noise_clifford} is a tensor product of
single-qubit Hadamard operations up to a wire permutation. This proves
Observation~2.
\end{proof}

Observation~2 therefore characterizes locality only for the explicit
noise-side Bell canonicalization of Observation~1. It does not
classify the locality of arbitrary CPTP or Clifford recovery maps
acting on $S_i\mathcal N$.

\subsection{Clifford synthesis and circuit cost}
\label{subsec:supp_clifford_synthesis}

The general and full-rank constructions lead to complementary
Clifford-synthesis procedures.

For an arbitrary certified realization, including rank-deficient
exceptional cases, the cleaning equations are solved first. The
resulting authorized Pauli representatives form a canonical
symplectic set, from which the decoding Clifford $C_i$ is obtained by
standard symplectic Gaussian elimination
~\hyperlink{DehaeneDeMoor2003Supp,AaronsonGottesman2004Supp}
{\textcolor{blue}{[3,14]}}.

For a full-rank cut,
Observation~\ref{obs:explicit_bell_reduction_main} gives the explicit
noise-side sequence
\[
CZ_{A_{\mathcal N}}
\;\longrightarrow\;
H^{\otimes\nu}
\;\longrightarrow\;
L_{B^{-\mathsf T}}
\;\longrightarrow\;
CZ_{A_S}.
\]
The reversible map $L_{B^{-\mathsf T}}$ can be synthesized with
$O(\nu^2)$ CNOT gates by Gaussian elimination
~\hyperlink{PatelMarkovHayes2008Supp}{\textcolor{blue}{[15]}}.
Each controlled-$Z$ layer contains at most
$\nu(\nu-1)/2$ gates, so the complete Bell reduction has an
$O(\nu^2)$ Clifford-gate upper bound under unrestricted connectivity.

The actual cost is graph dependent. If $B$ is a permutation matrix,
the linear-reversible layer reduces to a wire permutation, while
sparsity of $A_S$ and $A_{\mathcal N}$ reduces the controlled-$Z$
count. The fully local limit of
Observation~\ref{obs:main_decoder_locality} occurs when
\[
A_S=A_{\mathcal N}=0,
\qquad
B\ \text{is a permutation matrix}.
\]
Explicit certificates and decoders for representative resource
families are given in Sec.~S4.
\section{Resource landscape and representative families}
\label{sec:supp_resource_landscape}

This section provides the explicit constructions and certificates
underlying Sec.~V of the main text. We first establish scalable
full-rank graph families and record the representative $24$-qubit
certificates. We then analyze the rank-deficient example exhibiting
encoder dependence and conclude with the fixed-encoder and
architecture-independent obstructions for Dicke and $W$ resources.

\subsection{Scalable full-rank graph families}
\label{subsec:supp_scalable_families}

The full-rank branch of Theorem~2 immediately yields scalable
resources whenever a graph admits a balanced signal--noise cut with
invertible cut matrix. We denote the rectangular
$2a\times b$ cluster graph by
\[
G_{2a\times b}^{\mathrm{cl}}
\equiv
P_{2a}\square P_b,
\qquad
ab=\nu .
\]

\begin{proposition}[Scalable full-rank graph families]
\label{prop:supp_scalable_graphs}
Let $\nu=mk$. Perfect matchings, paths $P_{2\nu}$, cycles
$C_{2\nu}$, and rectangular cluster graphs
$G_{2a\times b}^{\mathrm{cl}}$ with $ab=\nu$ admit balanced cuts for
which
\[
\rank_{\mathbb F_2}B_{S\mathcal N}=\nu .
\]
Hence each provides a valid resource for every factorization
$\nu=mk$ with $m\ge2$.
\end{proposition}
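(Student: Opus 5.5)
For each family the plan is to exhibit an explicit oriented balanced cut and order its vertices so that $B_{S\mathcal N}$ is visibly invertible over $\mathbb F_2$. Lemma~\ref{lem:supp_graph_reduced_spectrum} then gives $\rho_S=I_S/2^\nu$. Corollary~1 of the main text then yields a valid resource for every factorization $\nu=mk$ with $m\ge2$, independently of the sector decomposition and of the two-Pauli encoder. The work is therefore purely combinatorial: one choice of $S$ per family, and a triangularity (or permutation) argument for $B_{S\mathcal N}$.

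\emph{Perfect matching.} Put one endpoint of each edge into $S$ and the other into $\mathcal N$. Then $A_S=A_{\mathcal N}=0$ and $B_{S\mathcal N}$ is a permutation matrix.

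\emph{Path $P_{2\nu}$.} Label the vertices $1,\ldots,2\nu$, and take $S$ to be the odd vertices and $\mathcal N$ the even vertices. Order both sides by $i\mapsto 2i-1$ and $i\mapsto 2i$. Signal vertex $2i-1$ is adjacent to $2i$ and, for $i\ge2$, to $2i-2$. Hence $B_{S\mathcal N}=I_\nu+J$, with $J$ the lower shift, which is unit lower triangular and therefore invertible.

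\emph{Rectangular cluster $G^{\mathrm{cl}}_{2a\times b}=P_{2a}\square P_b$.} Use the same idea row-wise: $S$ consists of the odd rows and $\mathcal N$ of the even rows. Horizontal edges stay inside one side and only populate $A_S$ and $A_{\mathcal N}$. Vertical edges give
\[
B_{S\mathcal N}=(I_a+J_a)\otimes I_b,
\]
which is invertible. For $a=1$ this reduces to $B=I_b$ with $A_S=A_{\mathcal N}=P_b$, matching the entry in Table~\ref{tab:main_24q_resources}.

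\emph{Cycle $C_{2\nu}$.} This is the step I expect to be the main obstacle, because the naive parity cut fails. It gives $B=I+(\text{cyclic shift})$, which annihilates $\mathbf 1_\nu$, so its rank is only $\nu-1$. Contiguous cuts are worse, with rank $2$. I would therefore use the block pattern $S,S,\mathcal N,\mathcal N$ repeated around the cycle.

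For $\nu$ even, each signal vertex then has exactly one noise neighbour and each noise vertex exactly one signal neighbour. So $B_{S\mathcal N}$ is a permutation matrix, while $A_S$ and $A_{\mathcal N}$ are the matchings inside the blocks (the $CZ_{E_L}$ and $CZ_{E_R}$ layers in Table~\ref{tab:main_24q_resources}).

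For $\nu$ odd, write $2\nu=4t+2$. Close the pattern with a final pair: vertex $4t+1$ in $S$ and vertex $4t+2$ in $\mathcal N$. Now every signal vertex except $4t+1$ still has a unique noise neighbour, and $4t+1$ has two, namely $4t$ and $4t+2$. Pair each signal vertex with its designated neighbour, sending $4t+1\mapsto 4t+2$, and order the noise side accordingly. Vertex $4t$ is then the only noise vertex with an extra signal neighbour, so $B_{S\mathcal N}$ equals the identity plus a single off-diagonal entry. It is therefore triangular with unit diagonal, hence invertible.

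I would double-check the wrap-around adjacency $4t+2\sim1$ carefully, since vertex $1$ must be matched to $4t+2$ or to a neighbouring noise vertex, and this is where an index slip would break invertibility. If the bookkeeping proves awkward, the fallback is to compute $\det B$ directly for this circulant-plus-defect structure.
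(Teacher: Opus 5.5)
Your overall strategy, and your cuts for perfect matchings, paths and $P_{2a}\square P_b$, are the paper's own. In each case the cut matrix is unit triangular or block bidiagonal, and you then apply Lemma~\ref{lem:supp_graph_reduced_spectrum} and Corollary~\ref{cor:maximal_resource_main}. Your even-$\nu$ cycle case is also correct.

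The odd-$\nu$ cycle step, however, contains exactly the index slip you feared. In $C_{4t+2}$, vertex $1$ has neighbours $2\in S$ and $4t+2\in\mathcal N$, so its only noise neighbour is $4t+2$. Sending $4t+1\mapsto 4t+2$ therefore assigns $4t+2$ twice and leaves $4t$ unmatched, so no noise ordering of the kind you describe exists. Your claim about $4t$ is also wrong. Its only signal neighbour is $4t+1$; the noise vertex with two signal neighbours is $4t+2$, which is adjacent to $1$ and $4t+1$. It looks as if you carried over the even-case wrap-around $1\sim 4t$.

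The repair is to pair $1\mapsto 4t+2$ and $4t+1\mapsto 4t$, keeping all other designated pairs. The only cut edge outside this matching is then $\{4t+1,4t+2\}$. In the paired ordering, $B_{S\mathcal N}=I_\nu+E$, where $E$ is a single matrix unit in the row of $4t+1$ and the column of $4t+2$. Listing $1$ before $4t+1$ makes this unit lower triangular. For instance, for $\nu=3$, rows $(1,2,5)$ against columns $(6,3,4)$ give $B_{S\mathcal N}=I_3+E_{31}$. With this correction your proof is complete.

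The genuine difference from the paper is the choice of cycle cut. On vertices $0,\ldots,2\nu-1$ the paper uses one cut for every $\nu\ge 2$:
\[
S=\{0,1,3,5,\ldots,2\nu-3\},\qquad \mathcal N=\{2,4,\ldots,2\nu-2,2\nu-1\}.
\]
This is the parity cut you rejected, with the two endpoints $0$ and $2\nu-1$ of the wrap-around edge exchanged. Rows $1,3,\ldots,2\nu-3$ then form the path's unit lower-bidiagonal block on columns $2,4,\ldots,2\nu-2$. Row $0$ is the unit vector on column $2\nu-1$, and that column has no other entry. So $B_{S\mathcal N}$ is block diagonal and invertible, with no case split on the parity of $\nu$.

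Your pattern $S,S,\mathcal N,\mathcal N$ is what the paper uses only for its $C_{24}$ certificate. For even $\nu$ it buys more: $B_{S\mathcal N}$ is a permutation matrix, so the linear-reversible layer in the Bell reduction of Observation~\ref{obs:explicit_bell_reduction_main} is just a wire permutation. The cost is a separate odd case, which is exactly where the bookkeeping went wrong.
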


\begin{proof}
For a perfect matching, choose one endpoint of every edge as signal
and the other as noise. With the corresponding ordering,
\[
A_S=A_{\mathcal N}=0,
\qquad
B_{S\mathcal N}=I_\nu .
\]

For the path $P_{2\nu}$ with vertices
$0,1,\ldots,2\nu-1$, choose
\[
S=(0,2,\ldots,2\nu-2),
\qquad
\mathcal N=(1,3,\ldots,2\nu-1).
\]
The cut matrix is
\[
B_{S\mathcal N}
=
\begin{pmatrix}
1&0&0&\cdots&0\\
1&1&0&\cdots&0\\
0&1&1&\ddots&\vdots\\
\vdots&\ddots&\ddots&\ddots&0\\
0&\cdots&0&1&1
\end{pmatrix},
\]
which is triangular with unit diagonal and is therefore invertible
over $\mathbb F_2$.

For $C_{2\nu}$, one possible balanced cut is
\[
S=\{0,1,3,5,\ldots,2\nu-3\},
\qquad
\mathcal N=\{2,4,\ldots,2\nu-2,2\nu-1\}.
\]
Successive binary row additions reduce the corresponding cut matrix
to the identity, so this cut also has full rank.

Finally, consider $G_{2a\times b}^{\mathrm{cl}}$. Choose alternating
horizontal rows as signal and noise and order them pairwise. The cut
matrix has block-bidiagonal form
\[
B_{S\mathcal N}
=
\begin{pmatrix}
I_b&0&\cdots&0\\
I_b&I_b&\ddots&\vdots\\
0&I_b&\ddots&0\\
\vdots&\ddots&\ddots&I_b
\end{pmatrix},
\]
and is therefore invertible over $\mathbb F_2$.
\end{proof}

The path and rectangular-grid constructions include the standard
linear and cluster graph states
~\hyperlink{HeinEisertBriegel2004Supp}{\textcolor{blue}{[11]}}.
Their full cut rank implies maximal signal--noise entanglement through
the standard graph-state cut-rank relation
~\hyperlink{FattalEtAl2004Supp}{\textcolor{blue}{[12]}}.

At the opposite extreme, complete and star graphs have only one
independent binary correlation across every balanced cut. For
$K_{2\nu}$,
\[
B_{S\mathcal N}=J_\nu,
\qquad
\rank_{\mathbb F_2}B_{S\mathcal N}=1,
\]
where $J_\nu$ is the $\nu\times\nu$ all-ones matrix. A star graph
likewise has balanced-cut rank one. Its graph state belongs to the
local-Clifford orbit of the GHZ state
~\hyperlink{HeinEisertBriegel2004Supp}{\textcolor{blue}{[11]}}.
Neither family therefore satisfies the exact graph-state criterion for
$m\ge2$.


\subsection{Verification of Observation~3 of the main text:
rank-deficient recovery, encoder dependence, and joint-signal information}
\label{subsec:supp_rank_deficient_examples}

We now verify Observation~3 of the main text. The same construction
also provides the example anticipated in
Sec.~\ref{subsec:supp_joint_ciphertext}, where the complete signal
register contains more information than the universally accessible
$R$-basis moments.

Take $m=3$, $k=1$, and
\[
A_S=A_{\mathcal N}=0,
\qquad
B=
\begin{pmatrix}
1&0&0\\
0&1&0\\
1&1&0
\end{pmatrix}.
\]
The corresponding cut data are
\begin{equation}
\rank_{\mathbb F_2}B=2,
\qquad
\ker(B^{\mathsf T})
=
\operatorname{span}
\left\{
(1,1,1)^{\mathsf T}
\right\}.
\label{eq:supp_rank_deficient_cut_data}
\end{equation}
Let
\[
w=(1,1,1)^{\mathsf T}.
\]
Since $A_Sw=0$, the unique nontrivial stabilizer supported entirely on
the signal register is
\[
K_S(w)
=
X^wZ^{A_Sw}
=
X^{\otimes3}.
\]
By Lemma~\ref{lem:supp_graph_reduced_spectrum}, the reduced signal
state is therefore
\begin{equation}
\rho_S
=
\frac{1}{8}
\left(
I_S+X^{\otimes3}
\right).
\label{eq:supp_rank_deficient_signal_state}
\end{equation}

For the $YZ/ZY$ encoder class, the exceptional axis is $R=\pm X$.
Hence the nontrivial component in
Eq.~\eqref{eq:supp_rank_deficient_signal_state} is precisely the
complete-sector exceptional direction. Equivalently, the full cut
kernel in Eq.~\eqref{eq:supp_rank_deficient_cut_data} satisfies the
exceptional-subspace condition of Theorem~2. The graph-state resource
therefore supports exact recovery of the input qubit from every
authorized subsystem $S_i\mathcal N$.

This realization also makes explicit the information available under
joint access to all three signals. Consider the $YZ$ representative
and write the input state as
\[
\rho_A
=
\frac{1}{2}
\left(
I_A+r_xX_A+r_yY_A+r_zZ_A
\right).
\]
Direct conjugation by $U_{Y,Z}^{(3,1)}$ gives
\begin{equation}
\mathcal C_S(\rho_A)
=
\frac{1}{8}
\left[
I_S
-r_xX^{\otimes3}
-r_zY^{\otimes3}
+r_yZ^{\otimes3}
\right].
\label{eq:supp_rank_deficient_joint_channel}
\end{equation}
Consequently,
\begin{equation}
\begin{aligned}
\Tr\!\left[
X^{\otimes3}\mathcal C_S(\rho_A)
\right]
&=-r_x,
\\
\Tr\!\left[
Y^{\otimes3}\mathcal C_S(\rho_A)
\right]
&=-r_z,
\\
\Tr\!\left[
Z^{\otimes3}\mathcal C_S(\rho_A)
\right]
&=r_y.
\end{aligned}
\label{eq:supp_rank_deficient_bloch_moments}
\end{equation}
Eq.~\eqref{eq:supp_rank_deficient_bloch_moments} determines all three
Bloch components of the input state. Hence the complete signal channel
in Eq.~\eqref{eq:supp_rank_deficient_joint_channel} is injective on
the input-qubit state space, even though every individual signal is
perfectly concealed. This injectivity concerns state identification
from joint measurement statistics and does not, by itself, imply the
existence of an exact CPTP recovery map acting on the complete signal
register.

The signal--noise entanglement of the resource is
\begin{equation}
S(\rho_S)
=
\rank_{\mathbb F_2}B
=
2
=
(m-1)k.
\label{eq:supp_rank_deficient_entropy_saturation}
\end{equation}
Thus Eq.~\eqref{eq:supp_rank_deficient_entropy_saturation} saturates
the architecture-independent lower bound of Proposition~3. The
missing cut ebit is associated precisely with the exceptional kernel
direction and therefore does not obstruct complementary decoupling.

For the $XZ/ZX$ encoder class, by contrast, the exceptional axis is
$R=\pm Y$. The graph-state criterion would require
\[
(A_S+I_3)w=0.
\]
In the present example,
\begin{equation}
(A_S+I_3)w
=
w
\neq0,
\label{eq:supp_rank_deficient_XZ_failure}
\end{equation}
so the unique nonzero cut-kernel direction is not exceptional.
Moreover, it activates a single sector and therefore violates the
required even-sector-parity condition. Hence the same graph resource,
oriented cut, cut rank, and cut kernel are invalid for the $XZ/ZX$
encoder class.

This comparison is relative to the fixed graph-state Pauli frame.
Ordered pairs of distinct Pauli axes are related by single-qubit
Clifford conjugations. Simultaneous Clifford rotations of the input and
signal systems therefore map between encoder frames and their
compatible resource representatives without changing the
signal--noise entanglement. Observation~3 thus demonstrates the
dependence of a fixed graph-state representative on the chosen encoder
frame, rather than an absolute inequivalence among the Pauli-axis
encoder families.

Finally, the graph can be made connected without changing the
certificate. Adding edges entirely within $\mathcal N$ modifies only
$A_{\mathcal N}$, while leaving $A_S$, $B$, and
$\ker(B^{\mathsf T})$ unchanged. Such modifications therefore preserve
the cut entanglement, the exceptional-kernel condition, and all of the
conclusions above.


\subsection{Representative 24-qubit certificates}
\label{subsec:supp_24q_examples}

We next give the explicit certificates underlying the representative
$24$-qubit examples discussed in the main text. Throughout,
\[
\nu=12,
\qquad
(m,k)=(2,6).
\]
Since $m$ is even, Corollary~4 reduces validity to the existence of a
balanced signal--noise cut satisfying
\begin{equation}
\rank_{\mathbb F_2}B_{S\mathcal N}=12.
\label{eq:supp_24q_full_rank_condition}
\end{equation}

\paragraph{Perfect matching.}
Choose opposite endpoints of each matched edge as signal and noise.
Then
\[
A_S=A_{\mathcal N}=0,
\qquad
B=I_{12},
\]
so Eq.~\eqref{eq:supp_24q_full_rank_condition} is immediately
satisfied. The corresponding graph state is locally Clifford
equivalent to $12$ Bell pairs. For a single matched edge,
\[
CZ\,\ket{+}_S\ket{+}_{\mathcal N}
=
(I_S\otimes H_{\mathcal N})
\ket{\Phi^+}_{S\mathcal N},
\]
and hence, for the complete matching,
\begin{equation}
\ket{G_{\rm match}}
=
\left(
I_S\otimes H_{\mathcal N}^{\otimes12}
\right)
\ket{\Phi_{2^{12}}}_{S\mathcal N},
\label{eq:supp_matching_bell_equivalence}
\end{equation}
where
\[
\ket{\Phi_{2^{12}}}_{S\mathcal N}
=
\bigotimes_{j=1}^{12}
\ket{\Phi^+}_{S_j\mathcal N_j}.
\]
Thus the perfect matching is a graph-state representative of the
canonical Bell-pair resource, up to local Clifford operations.
Consistently, Observation~1 gives
\[
U_{\mathcal N}^{\rm match}
=
H^{\otimes12},
\]
which implements the Bell reduction in
Eq.~\eqref{eq:supp_matching_bell_equivalence}.

\paragraph{Path $P_{24}$.}
Choose
\[
S=(0,2,4,\ldots,22),
\qquad
\mathcal N=(1,3,5,\ldots,23).
\]
The cut matrix is the lower-bidiagonal matrix of
Proposition~\ref{prop:supp_scalable_graphs} and has full rank. Its
inverse transpose is
\[
(M_P)_{ab}
=
\begin{cases}
1,&a\le b,\\
0,&a>b,
\end{cases}
\qquad
M_P=B^{-\mathsf T}.
\]
Observation~1 therefore gives
\begin{equation}
U_{\mathcal N}^{P_{24}}
=
L_{M_P}H^{\otimes12}.
\label{eq:supp_path_bell_reduction}
\end{equation}
A direct implementation of the reversible layer $L_{M_P}$ is the
reverse CNOT ladder
\[
\mathrm{CNOT}_{23\rightarrow21}
\mathrm{CNOT}_{21\rightarrow19}
\cdots
\mathrm{CNOT}_{5\rightarrow3}
\mathrm{CNOT}_{3\rightarrow1}.
\]
Thus the Bell reduction in
Eq.~\eqref{eq:supp_path_bell_reduction} uses $12$ Hadamards and
$11$ CNOTs.

\paragraph{Cycle $C_{24}$.}
Choose
\[
S=(0,3,4,7,8,11,12,15,16,19,20,23)
\]
and
\[
\mathcal N=(1,2,5,6,9,10,13,14,17,18,21,22).
\]
For this cut,
\[
B=I_{12},
\]
and hence Eq.~\eqref{eq:supp_24q_full_rank_condition} holds. The
noise-internal edges are
\[
E_R=
\{(1,2),(5,6),(9,10),(13,14),(17,18),(21,22)\},
\]
while the signal adjacency, transferred to the chosen noise ordering,
gives
\[
E_L=
\{(1,22),(2,5),(6,9),(10,13),(14,17),(18,21)\}.
\]
The noise-side Bell reduction is therefore
\begin{equation}
U_{\mathcal N}^{C_{24}}
=
\left(\prod_{e\in E_L}CZ_e\right)
H^{\otimes12}
\left(\prod_{e\in E_R}CZ_e\right),
\label{eq:supp_cycle_bell_reduction}
\end{equation}
containing $12$ Hadamards and $12$ controlled-$Z$ gates.

\paragraph{Rectangular cluster
$G_{2\times12}^{\mathrm{cl}}$.}
Choose one row as $S$ and the other as $\mathcal N$. Then
\[
B=I_{12},
\qquad
A_S=A_{\mathcal N}=A(P_{12}),
\]
so the cut again satisfies
Eq.~\eqref{eq:supp_24q_full_rank_condition}. Observation~1 gives
\begin{equation}
U_{\mathcal N}^{G_{2\times12}^{\mathrm{cl}}}
=
CZ_{A(P_{12})}
H^{\otimes12}
CZ_{A(P_{12})}.
\label{eq:supp_cluster_bell_reduction}
\end{equation}
The reduction in Eq.~\eqref{eq:supp_cluster_bell_reduction} contains
$12$ Hadamards and $22$ controlled-$Z$ gates.

\paragraph{Sparse Erd\H{o}s--R\'enyi sample.}
For the fixed $G(24,0.35)$ sample generated with seed $7$, a
certifying cut is
\[
S_{\rm ER}
=
\{1,3,4,8,9,11,14,16,17,20,21,22\},
\]
with
\[
\mathcal N_{\rm ER}
=
\{0,2,5,6,7,10,12,13,15,18,19,23\}.
\]
With these orderings,
\begin{equation}
B_{\rm ER}
=
\begin{pmatrix}
1&1&1&0&1&0&1&1&0&0&0&1\\
0&0&0&0&0&0&0&1&0&1&1&1\\
1&0&0&0&1&0&0&1&0&1&1&0\\
0&0&0&1&0&1&0&0&1&0&0&0\\
1&1&0&1&1&0&0&0&1&1&1&0\\
1&0&0&1&0&0&0&0&0&0&0&0\\
0&1&0&1&1&0&0&0&0&0&1&0\\
1&0&1&1&1&1&0&0&1&0&0&0\\
0&0&0&0&1&1&0&0&0&1&1&0\\
0&0&0&0&1&0&1&1&0&1&1&0\\
0&0&0&1&0&0&0&0&0&0&0&0\\
1&0&0&1&0&0&0&0&0&0&1&0
\end{pmatrix},
\qquad
\rank_{\mathbb F_2}B_{\rm ER}=12.
\label{eq:supp_ER_cut_matrix}
\end{equation}
Thus Eq.~\eqref{eq:supp_ER_cut_matrix} provides an explicit
certificate of the full-rank condition
Eq.~\eqref{eq:supp_24q_full_rank_condition}. The corresponding
noise-side canonicalization is
\begin{equation}
U_{\mathcal N}^{\rm ER}
=
CZ_{A_S^{\rm ER}}
L_{B_{\rm ER}^{-\mathsf T}}
H^{\otimes12}
CZ_{A_{\mathcal N}^{\rm ER}}.
\label{eq:supp_ER_bell_reduction}
\end{equation}
The reversible middle layer in
Eq.~\eqref{eq:supp_ER_bell_reduction} can be obtained by the
Gaussian-elimination synthesis described in Sec.~S3. Its explicit
inverse matrix is not required for certification.

\paragraph{Excluded examples.}
For the fixed $G(24,0.9)$ sample generated with seed $42$, exhaustive
balanced-cut search gives maximal cut rank $11$. For $K_{24}$ and the
star/GHZ graph, the maximal balanced-cut rank is $1$. Since all these
values are strictly below the requirement in
Eq.~\eqref{eq:supp_24q_full_rank_condition}, none of these examples is
valid for $(m,k)=(2,6)$.

The Erd\H{o}s--R\'enyi results concern only these fixed finite
realizations and do not imply an edge-density threshold. The chosen
sparse sample admits a full-rank balanced cut, whereas the chosen dense
sample does not; no monotonic dependence on edge probability is
inferred.

Among the certified examples, the perfect matching is locally
Clifford equivalent to the canonical Bell-pair resource and, by
Observation~2, is the only one whose explicit noise-side Bell
reduction is local up to permutation. The path, cycle, rectangular
cluster, and sparse random sample require collective Clifford
processing. In every certified case, the complete authorized recovery
can be constructed using Proposition~4.

\subsection{Dicke and W resources}
\label{subsec:supp_dicke_resources}

We finally consider non-graph resources and prove Observation~4 of the
main text. The $n$-qubit Dicke state with $r$ excitations is
~\hyperlink{Dicke1954Supp}{\textcolor{blue}{[16]}}
\[
\ket{D_n^{(r)}}
=
\binom nr^{-1/2}
\sum_{\substack{z\in\{0,1\}^n\\ |z|=r}}
\ket z.
\]
For the encrypted-cloning resource, set $n=2\nu=2mk.$ Permutation symmetry makes every balanced $\nu|\nu$ bipartition
equivalent. Grouping basis states according to the number $j$ of
excitations in $S$ gives
\begin{equation}
\ket{D_{2\nu}^{(r)}}
=
\sum_{j=j_-}^{j_+}
\sqrt{\lambda_j^{(r)}}\,
\ket{D_\nu^{(j)}}_S
\ket{D_\nu^{(r-j)}}_{\mathcal N},
\label{eq:supp_dicke_schmidt}
\end{equation}
where $j_-=\max\{0,r-\nu\},
j_+=\min\{r,\nu\},$ and
\[
\lambda_j^{(r)}
=
\frac{
\binom{\nu}{j}
\binom{\nu}{r-j}
}{
\binom{2\nu}{r}
}.
\]
The Schmidt vectors for distinct $j$ have different excitation
numbers and are therefore orthogonal. Hence
\begin{equation}
\rank\rho_S
=
j_+-j_-+1
\le
\nu+1,
\qquad
S(\rho_S)
=
H\!\left(\{\lambda_j^{(r)}\}\right)
\le
\log_2(\nu+1).
\label{eq:supp_dicke_rank_entropy}
\end{equation}

We now use these Dicke-state marginals to test the exact resource criterion of Theorem~1, treating the even- and odd-$m$ cases separately.
\begin{proof}[Proof of Observation~4 of the main text]
For even $m$, Theorem~1 requires
\[
\rho_S=\frac{I_S}{2^{mk}}.
\]
By Eq.~\eqref{eq:supp_dicke_rank_entropy}, a Dicke marginal satisfies
\[
\rank\rho_S
\le
mk+1
<
2^{mk},
\]
since $mk\ge2$. It therefore cannot be maximally mixed.

Now let $m\ge3$ be odd. By Theorem~1,
\[
\rho_S
=
\frac{1}{2^{mk}}
\sum_{c\in\mathbb F_2^k}
\alpha_c\,T(c),
\qquad
\alpha_0=1,
\]
with
\[
T(c)
=
\prod_{j=1}^k
\left(
R_{\mathsf S_j}^{\otimes m}
\right)^{c_j}.
\]
Consider two distinct signal qubits $S_{j,a}$ and $S_{j,b}$ in the
same sector. If $c_j=1$, tracing out the remaining $m-2\ge1$ qubits
of that sector removes the term because $\Tr R=0$. If $c_j=0$ but
$c\neq0$, at least one other active sector is traced out completely,
again giving zero. Thus only the identity term survives, and every
valid odd-$m$ resource must satisfy
\begin{equation}
\rho_{S_{j,a}S_{j,b}}
=
\frac{I_4}{4},
\qquad
a\neq b.
\label{eq:supp_required_two_qubit_marginal}
\end{equation}

For a Dicke state with $n=2mk$, permutation symmetry gives
\begin{equation}
\begin{aligned}
\rho_2^{(r)}
={}&
\frac{(n-r)(n-r-1)}{n(n-1)}
\ket{00}\!\bra{00}
+
\frac{r(r-1)}{n(n-1)}
\ket{11}\!\bra{11}
\\
&+
\frac{r(n-r)}{n(n-1)}
(\ket{01}+\ket{10})
(\bra{01}+\bra{10}).
\end{aligned}
\label{eq:supp_dicke_two_qubit_marginal}
\end{equation}
For $0<r<n$, Eq.~\eqref{eq:supp_dicke_two_qubit_marginal} contains a
nonzero
$\ket{01}\!\bra{10}+\ket{10}\!\bra{01}$ contribution and therefore
cannot satisfy Eq.~\eqref{eq:supp_required_two_qubit_marginal}. At
the endpoints $r=0$ and $r=n$, the two-qubit marginal is a pure
product state and again differs from $I_4/4$. Thus no Dicke state
$\ket{D_{2mk}^{(r)}}$ satisfies the exact pure-resource criterion of
Theorem~1 for any $m\ge2$, $k\ge1$, and any sector-wise two-Pauli
encoder. This proves Observation~4.
\end{proof}

Observation~4 is an exact but encoder-specific obstruction.
Independently, Proposition~3 applies to an arbitrary unitary encoder
on $AS$ acting trivially on $\mathcal N$ and requires
\[
S(\rho_S)\ge(m-1)k.
\]
Combining this with Eq.~\eqref{eq:supp_dicke_rank_entropy} gives the
architecture-independent exclusion
\begin{equation}
(m-1)k>\log_2(mk+1)
\quad\Longrightarrow\quad
\text{Dicke resource impossible}.
\label{eq:supp_dicke_entropy_obstruction}
\end{equation}
For fixed excitation number $r$, the sharper necessary condition is
\begin{equation}
(m-1)k
\le
H\!\left(\{\lambda_j^{(r)}\}\right).
\label{eq:supp_dicke_fixed_r_entropy_condition}
\end{equation}
Violation of Eq.~\eqref{eq:supp_dicke_fixed_r_entropy_condition}
excludes the corresponding Dicke resource for any unitary encoder
covered by Proposition~3. Satisfaction of
Eqs.~\eqref{eq:supp_dicke_entropy_obstruction} and
\eqref{eq:supp_dicke_fixed_r_entropy_condition} is only necessary and
does not imply that another encoder succeeds.

For $m=1$, Proposition~1 requires
\[
\rho_S=\frac{I_S}{2^k}.
\]
For $k=1$, the unrestricted task is excluded by the absence of an
$\operatorname{AME}(4,2)$ state. For $k>1$, Eq.~\eqref{eq:supp_dicke_rank_entropy}
gives
\[
\rank\rho_S\le k+1<2^k,
\]
so no Dicke resource has the required maximally mixed
$2^k$-dimensional signal marginal. This is a restriction on the Dicke
family rather than a universal single-output impossibility.

The $W$ state is the single-excitation Dicke state,
\[
\ket{W_{2\nu}}
=
\ket{D_{2\nu}^{(1)}},
\]
with balanced-cut decomposition
\[
\ket{W_{2\nu}}
=
\frac{1}{\sqrt2}
\left(
\ket{W_\nu}_S\ket{0^\nu}_{\mathcal N}
+
\ket{0^\nu}_S\ket{W_\nu}_{\mathcal N}
\right).
\]
Hence
\[
\rank\rho_S=2,
\qquad
S(\rho_S)=1.
\]
Since the $W$ state belongs to the Dicke family, Observation~4 excludes
it for every $m\ge2$ within the fixed two-Pauli architecture.
Independently, Proposition~3 excludes it for an arbitrary unitary
encoder of the form considered there whenever
\begin{equation}
(m-1)k>1.
\label{eq:supp_W_entropy_obstruction}
\end{equation}
The case with $m\ge2$ not excluded by
Eq.~\eqref{eq:supp_W_entropy_obstruction} alone is
\[
(m,k)=(2,1),
\]
which remains excluded for the prescribed two-Pauli encoder by
Observation~4.

The Dicke family is therefore ruled out for two complementary reasons.
First, its signal correlations are incompatible with the restricted
operator structure allowed by the fixed two-Pauli encoder. Second,
independently of that encoder choice, its signal--noise entanglement is
insufficient for exact all-output recovery over a broad range of
parameters.


\bigskip
\noindent{\large\bf References}

\begin{enumerate}

\item[]
\hypertarget{SchumacherNielsen1996Supp}{\textcolor{blue}{[1]}}
B. Schumacher and M. A. Nielsen,
``Quantum data processing and error correction,''
Phys. Rev. A \textbf{54}, 2629--2635 (1996).

\item[]
\hypertarget{Gottesman1997Supp}{\textcolor{blue}{[2]}}
D. Gottesman,
\textit{Stabilizer Codes and Quantum Error Correction},
Ph.D. thesis, California Institute of Technology (1997).

\item[]
\hypertarget{DehaeneDeMoor2003Supp}{\textcolor{blue}{[3]}}
J. Dehaene and B. De Moor,
``Clifford group, stabilizer states, and linear and quadratic operations
over GF(2),''
Phys. Rev. A \textbf{68}, 042318 (2003).
\item[]
\hypertarget{Lloyd1997Supp}{\textcolor{blue}{[4]}}
S. Lloyd,
``Capacity of the noisy quantum channel,''
Phys. Rev. A \textbf{55}, 1613--1622 (1997).

\item[]
\hypertarget{Devetak2005Supp}{\textcolor{blue}{[5]}}
I. Devetak,
``The private classical capacity and quantum capacity of a quantum
channel,''
IEEE Trans. Inf. Theory \textbf{51}, 44--55 (2005).

\item[]
\hypertarget{DevetakShor2005Supp}{\textcolor{blue}{[6]}}
I. Devetak and P. W. Shor,
``The capacity of a quantum channel for simultaneous transmission of
classical and quantum information,''
Commun. Math. Phys. \textbf{256}, 287--303 (2005).

\item[]
\hypertarget{HelwigEtAl2012Supp}{\textcolor{blue}{[7]}}
W. Helwig, W. Cui, A. Riera, J. I. Latorre, and H.-K. Lo,
``Absolute maximal entanglement and quantum secret sharing,''
Phys. Rev. A \textbf{86}, 052335 (2012).

\item[]
\hypertarget{HiguchiSudbery2000Supp}{\textcolor{blue}{[8]}}
A. Higuchi and A. Sudbery,
``How entangled can two couples get?''
Phys. Lett. A \textbf{273}, 213--217 (2000).

\item[]
\hypertarget{ArakiLieb1970Supp}{\textcolor{blue}{[9]}}
H. Araki and E. H. Lieb,
``Entropy inequalities,''
Commun. Math. Phys. \textbf{18}, 160--170 (1970).

\item[]
\hypertarget{LiebRuskai1973Supp}{\textcolor{blue}{[10]}}
E. H. Lieb and M. B. Ruskai,
``Proof of the strong subadditivity of quantum-mechanical entropy,''
J. Math. Phys. \textbf{14}, 1938--1941 (1973).


\item[]
\hypertarget{HeinEisertBriegel2004Supp}{\textcolor{blue}{[11]}}
M. Hein, J. Eisert, and H.-J. Briegel,
``Multiparty entanglement in graph states,''
Phys. Rev. A \textbf{69}, 062311 (2004).

\item[]
\hypertarget{FattalEtAl2004Supp}{\textcolor{blue}{[12]}}
D. Fattal, T. S. Cubitt, Y. Yamamoto, S. Bravyi, and I. L. Chuang,
``Entanglement in the stabilizer formalism,''
arXiv:quant-ph/0406168 (2004).


\item[]
\hypertarget{BravyiTerhal2009Supp}{\textcolor{blue}{[13]}}
S. Bravyi and B. M. Terhal,
``A no-go theorem for a two-dimensional self-correcting quantum memory
based on stabilizer codes,''
New J. Phys. \textbf{11}, 043029 (2009).

\item[]
\hypertarget{AaronsonGottesman2004Supp}{\textcolor{blue}{[14]}}
S. Aaronson and D. Gottesman,
``Improved simulation of stabilizer circuits,''
Phys. Rev. A \textbf{70}, 052328 (2004).

\item[]
\hypertarget{PatelMarkovHayes2008Supp}{\textcolor{blue}{[15]}}
K. N. Patel, I. L. Markov, and J. P. Hayes,
``Optimal synthesis of linear reversible circuits,''
Quantum Inf. Comput. \textbf{8}, 282--294 (2008).


\item[]
\hypertarget{Dicke1954Supp}{\textcolor{blue}{[16]}}
R. H. Dicke,
``Coherence in spontaneous radiation processes,''
Phys. Rev. \textbf{93}, 99--110 (1954).

\end{enumerate}


\begin{thebibliography}{54}%
\makeatletter
\providecommand \@ifxundefined [1]{%
 \@ifx{#1\undefined}
}%
\providecommand \@ifnum [1]{%
 \ifnum #1\expandafter \@firstoftwo
 \else \expandafter \@secondoftwo
 \fi
}%
\providecommand \@ifx [1]{%
 \ifx #1\expandafter \@firstoftwo
 \else \expandafter \@secondoftwo
 \fi
}%
\providecommand \natexlab [1]{#1}%
\providecommand \enquote  [1]{``#1''}%
\providecommand \bibnamefont  [1]{#1}%
\providecommand \bibfnamefont [1]{#1}%
\providecommand \citenamefont [1]{#1}%
\providecommand \href@noop [0]{\@secondoftwo}%
\providecommand \href [0]{\begingroup \@sanitize@url \@href}%
\providecommand \@href[1]{\@@startlink{#1}\@@href}%
\providecommand \@@href[1]{\endgroup#1\@@endlink}%
\providecommand \@sanitize@url [0]{\catcode `\\12\catcode `\$12\catcode
  `\&12\catcode `\#12\catcode `\^12\catcode `\_12\catcode `\%12\relax}%
\providecommand \@@startlink[1]{}%
\providecommand \@@endlink[0]{}%
\providecommand \url  [0]{\begingroup\@sanitize@url \@url }%
\providecommand \@url [1]{\endgroup\@href {#1}{\urlprefix }}%
\providecommand \urlprefix  [0]{URL }%
\providecommand \Eprint [0]{\href }%
\providecommand \doibase [0]{https://doi.org/}%
\providecommand \selectlanguage [0]{\@gobble}%
\providecommand \bibinfo  [0]{\@secondoftwo}%
\providecommand \bibfield  [0]{\@secondoftwo}%
\providecommand \translation [1]{[#1]}%
\providecommand \BibitemOpen [0]{}%
\providecommand \bibitemStop [0]{}%
\providecommand \bibitemNoStop [0]{.\EOS\space}%
\providecommand \EOS [0]{\spacefactor3000\relax}%
\providecommand \BibitemShut  [1]{\csname bibitem#1\endcsname}%
\let\auto@bib@innerbib\@empty
\bibitem [{\citenamefont {Wootters}\ and\ \citenamefont
  {Zurek}(1982)}]{wootters1982single}%
  \BibitemOpen
  \bibfield  {author} {\bibinfo {author} {\bibfnamefont {W.~K.}\ \bibnamefont
  {Wootters}}\ and\ \bibinfo {author} {\bibfnamefont {W.~H.}\ \bibnamefont
  {Zurek}},\ }\bibfield  {title} {\bibinfo {title} {A single quantum cannot be
  cloned},\ }\href {https://doi.org/10.1038/299802a0} {\bibfield  {journal}
  {\bibinfo  {journal} {Nature}\ }\textbf {\bibinfo {volume} {299}},\ \bibinfo
  {pages} {802} (\bibinfo {year} {1982})}\BibitemShut {NoStop}%
\bibitem [{\citenamefont {Dieks}(1982)}]{dieks1982communication}%
  \BibitemOpen
  \bibfield  {author} {\bibinfo {author} {\bibfnamefont {D.}~\bibnamefont
  {Dieks}},\ }\bibfield  {title} {\bibinfo {title} {Communication by {EPR}
  devices},\ }\href {https://doi.org/10.1016/0375-9601(82)90084-6} {\bibfield
  {journal} {\bibinfo  {journal} {Physics Letters A}\ }\textbf {\bibinfo
  {volume} {92}},\ \bibinfo {pages} {271} (\bibinfo {year} {1982})}\BibitemShut
  {NoStop}%
\bibitem [{\citenamefont {Scarani}\ \emph {et~al.}(2005)\citenamefont
  {Scarani}, \citenamefont {Iblisdir}, \citenamefont {Gisin},\ and\
  \citenamefont {Ac\'{\i}n}}]{scarani2005quantum}%
  \BibitemOpen
  \bibfield  {author} {\bibinfo {author} {\bibfnamefont {V.}~\bibnamefont
  {Scarani}}, \bibinfo {author} {\bibfnamefont {S.}~\bibnamefont {Iblisdir}},
  \bibinfo {author} {\bibfnamefont {N.}~\bibnamefont {Gisin}},\ and\ \bibinfo
  {author} {\bibfnamefont {A.}~\bibnamefont {Ac\'{\i}n}},\ }\bibfield  {title}
  {\bibinfo {title} {Quantum cloning},\ }\href
  {https://doi.org/10.1103/RevModPhys.77.1225} {\bibfield  {journal} {\bibinfo
  {journal} {Rev. Mod. Phys.}\ }\textbf {\bibinfo {volume} {77}},\ \bibinfo
  {pages} {1225} (\bibinfo {year} {2005})}\BibitemShut {NoStop}%
\bibitem [{\citenamefont {Barnum}\ \emph {et~al.}(1996)\citenamefont {Barnum},
  \citenamefont {Caves}, \citenamefont {Fuchs}, \citenamefont {Jozsa},\ and\
  \citenamefont {Schumacher}}]{barnum1996noncommuting}%
  \BibitemOpen
  \bibfield  {author} {\bibinfo {author} {\bibfnamefont {H.}~\bibnamefont
  {Barnum}}, \bibinfo {author} {\bibfnamefont {C.~M.}\ \bibnamefont {Caves}},
  \bibinfo {author} {\bibfnamefont {C.~A.}\ \bibnamefont {Fuchs}}, \bibinfo
  {author} {\bibfnamefont {R.}~\bibnamefont {Jozsa}},\ and\ \bibinfo {author}
  {\bibfnamefont {B.}~\bibnamefont {Schumacher}},\ }\bibfield  {title}
  {\bibinfo {title} {Noncommuting mixed states cannot be broadcast},\ }\href
  {https://doi.org/10.1103/PhysRevLett.76.2818} {\bibfield  {journal} {\bibinfo
   {journal} {Phys. Rev. Lett.}\ }\textbf {\bibinfo {volume} {76}},\ \bibinfo
  {pages} {2818} (\bibinfo {year} {1996})}\BibitemShut {NoStop}%
\bibitem [{\citenamefont {Yamaguchi}\ and\ \citenamefont
  {Kempf}(2026)}]{yamaguchi2026encrypted}%
  \BibitemOpen
  \bibfield  {author} {\bibinfo {author} {\bibfnamefont {K.}~\bibnamefont
  {Yamaguchi}}\ and\ \bibinfo {author} {\bibfnamefont {A.}~\bibnamefont
  {Kempf}},\ }\bibfield  {title} {\bibinfo {title} {Encrypted qubits can be
  cloned},\ }\href {https://doi.org/10.1103/y4y1-1ll6} {\bibfield  {journal}
  {\bibinfo  {journal} {Phys. Rev. Lett.}\ }\textbf {\bibinfo {volume} {136}},\
  \bibinfo {pages} {010801} (\bibinfo {year} {2026})}\BibitemShut {NoStop}%
\bibitem [{\citenamefont {Yamaguchi}\ \emph {et~al.}(2026)\citenamefont
  {Yamaguchi}, \citenamefont {Rullk{\"o}tter}, \citenamefont {Shehzad},
  \citenamefont {Wagner}, \citenamefont {Tutschku},\ and\ \citenamefont
  {Kempf}}]{yamaguchi2026experimental}%
  \BibitemOpen
  \bibfield  {author} {\bibinfo {author} {\bibfnamefont {K.}~\bibnamefont
  {Yamaguchi}}, \bibinfo {author} {\bibfnamefont {L.}~\bibnamefont
  {Rullk{\"o}tter}}, \bibinfo {author} {\bibfnamefont {I.}~\bibnamefont
  {Shehzad}}, \bibinfo {author} {\bibfnamefont {S.~J.}\ \bibnamefont {Wagner}},
  \bibinfo {author} {\bibfnamefont {C.}~\bibnamefont {Tutschku}},\ and\
  \bibinfo {author} {\bibfnamefont {A.}~\bibnamefont {Kempf}},\ }\href
  {https://arxiv.org/abs/2602.10695} {\bibinfo {title} {Experimental
  demonstration that qubits can be cloned at will, if encrypted with a
  single-use decryption key}} (\bibinfo {year} {2026}),\ \Eprint
  {https://arxiv.org/abs/2602.10695} {arXiv:2602.10695 [quant-ph]} \BibitemShut
  {NoStop}%
\bibitem [{\citenamefont {Cear{\u{a}}}(2026)}]{ceara2026qudit}%
  \BibitemOpen
  \bibfield  {author} {\bibinfo {author} {\bibfnamefont {F.-I.}\ \bibnamefont
  {Cear{\u{a}}}},\ }\bibfield  {title} {\bibinfo {title} {Cloning encrypted
  quantum states in arbitrary dimensions},\ }\href
  {https://doi.org/10.1103/b32z-r5p7} {\bibfield  {journal} {\bibinfo
  {journal} {Phys. Rev. A}\ }\textbf {\bibinfo {volume} {114}},\ \bibinfo
  {pages} {022450} (\bibinfo {year} {2026})}\BibitemShut {NoStop}%
\bibitem [{\citenamefont {Lim}\ and\ \citenamefont {Lo}(2026)}]{lim2026ame}%
  \BibitemOpen
  \bibfield  {author} {\bibinfo {author} {\bibfnamefont {Z.~L.}\ \bibnamefont
  {Lim}}\ and\ \bibinfo {author} {\bibfnamefont {H.-K.}\ \bibnamefont {Lo}},\
  }\href {https://arxiv.org/abs/2605.26866} {\bibinfo {title} {Encrypted
  cloning, absolute maximal entanglement and quantum secret sharing}} (\bibinfo
  {year} {2026}),\ \Eprint {https://arxiv.org/abs/2605.26866} {arXiv:2605.26866
  [quant-ph]} \BibitemShut {NoStop}%
\bibitem [{\citenamefont {Gianini}\ \emph
  {et~al.}(2026{\natexlab{a}})\citenamefont {Gianini}, \citenamefont {Hasan},
  \citenamefont {Mio}, \citenamefont {Cimato},\ and\ \citenamefont
  {Damiani}}]{gianini2026leak}%
  \BibitemOpen
  \bibfield  {author} {\bibinfo {author} {\bibfnamefont {G.}~\bibnamefont
  {Gianini}}, \bibinfo {author} {\bibfnamefont {O.}~\bibnamefont {Hasan}},
  \bibinfo {author} {\bibfnamefont {C.}~\bibnamefont {Mio}}, \bibinfo {author}
  {\bibfnamefont {S.}~\bibnamefont {Cimato}},\ and\ \bibinfo {author}
  {\bibfnamefont {E.}~\bibnamefont {Damiani}},\ }\href
  {https://arxiv.org/abs/2604.10155} {\bibinfo {title} {Encrypted clones can
  leak: Classification of informative subsets in quantum encrypted cloning}}
  (\bibinfo {year} {2026}{\natexlab{a}}),\ \Eprint
  {https://arxiv.org/abs/2604.10155} {arXiv:2604.10155 [quant-ph]} \BibitemShut
  {NoStop}%
\bibitem [{\citenamefont {Gianini}\ \emph
  {et~al.}(2026{\natexlab{b}})\citenamefont {Gianini}, \citenamefont {Cimato},
  \citenamefont {Lin}, \citenamefont {Hasan},\ and\ \citenamefont
  {Damiani}}]{gianini2026full}%
  \BibitemOpen
  \bibfield  {author} {\bibinfo {author} {\bibfnamefont {G.}~\bibnamefont
  {Gianini}}, \bibinfo {author} {\bibfnamefont {S.}~\bibnamefont {Cimato}},
  \bibinfo {author} {\bibfnamefont {J.}~\bibnamefont {Lin}}, \bibinfo {author}
  {\bibfnamefont {O.}~\bibnamefont {Hasan}},\ and\ \bibinfo {author}
  {\bibfnamefont {E.}~\bibnamefont {Damiani}},\ }\href
  {https://arxiv.org/abs/2605.27421} {\bibinfo {title} {Full characterization
  of informative subsets in quantum encrypted cloning}} (\bibinfo {year}
  {2026}{\natexlab{b}}),\ \Eprint {https://arxiv.org/abs/2605.27421}
  {arXiv:2605.27421 [quant-ph]} \BibitemShut {NoStop}%
\bibitem [{\citenamefont {Bai}\ \emph {et~al.}(2026)\citenamefont {Bai},
  \citenamefont {Zhou},\ and\ \citenamefont {Luo}}]{bai2026}%
  \BibitemOpen
  \bibfield  {author} {\bibinfo {author} {\bibfnamefont {C.-M.}\ \bibnamefont
  {Bai}}, \bibinfo {author} {\bibfnamefont {X.-L.}\ \bibnamefont {Zhou}},\ and\
  \bibinfo {author} {\bibfnamefont {Y.}~\bibnamefont {Luo}},\ }\href
  {https://arxiv.org/abs/2605.11642} {\bibinfo {title} {Classification of
  informative subsets in quantum encrypted cloning on qudits}} (\bibinfo {year}
  {2026}),\ \Eprint {https://arxiv.org/abs/2605.11642} {arXiv:2605.11642
  [quant-ph]} \BibitemShut {NoStop}%
\bibitem [{\citenamefont {Gianini}\ \emph
  {et~al.}(2026{\natexlab{c}})\citenamefont {Gianini}, \citenamefont {Cimato},
  \citenamefont {Lin}, \citenamefont {Hasan}, \citenamefont {Mio},\ and\
  \citenamefont {Damiani}}]{gianini2026access}%
  \BibitemOpen
  \bibfield  {author} {\bibinfo {author} {\bibfnamefont {G.}~\bibnamefont
  {Gianini}}, \bibinfo {author} {\bibfnamefont {S.}~\bibnamefont {Cimato}},
  \bibinfo {author} {\bibfnamefont {J.}~\bibnamefont {Lin}}, \bibinfo {author}
  {\bibfnamefont {O.}~\bibnamefont {Hasan}}, \bibinfo {author} {\bibfnamefont
  {C.}~\bibnamefont {Mio}},\ and\ \bibinfo {author} {\bibfnamefont
  {E.}~\bibnamefont {Damiani}},\ }\href@noop {} {\bibinfo {title} {Beyond the
  canonical protocol: Quantum encrypted cloning from secret-sharing access
  structures}} (\bibinfo {year} {2026}{\natexlab{c}}),\ \Eprint
  {https://arxiv.org/abs/2606.06552} {arXiv:2606.06552 [quant-ph]} \BibitemShut
  {NoStop}%
\bibitem [{\citenamefont {Gianini}\ \emph
  {et~al.}(2026{\natexlab{d}})\citenamefont {Gianini}, \citenamefont {Hasan},
  \citenamefont {Cimato},\ and\ \citenamefont
  {Damiani}}]{gianini2026diagnosticresource}%
  \BibitemOpen
  \bibfield  {author} {\bibinfo {author} {\bibfnamefont {G.}~\bibnamefont
  {Gianini}}, \bibinfo {author} {\bibfnamefont {O.}~\bibnamefont {Hasan}},
  \bibinfo {author} {\bibfnamefont {S.}~\bibnamefont {Cimato}},\ and\ \bibinfo
  {author} {\bibfnamefont {E.}~\bibnamefont {Damiani}},\ }\href
  {https://arxiv.org/abs/2609.25043} {\bibinfo {title} {Encrypted redundancy as
  a diagnostic resource: Relational diagnosis in quantum encrypted cloning}}
  (\bibinfo {year} {2026}{\natexlab{d}}),\ \Eprint
  {https://arxiv.org/abs/2609.25043} {arXiv:2609.25043 [quant-ph]} \BibitemShut
  {NoStop}%
\bibitem [{\citenamefont {Kimble}(2008)}]{kimble2008quantum}%
  \BibitemOpen
  \bibfield  {author} {\bibinfo {author} {\bibfnamefont {H.~J.}\ \bibnamefont
  {Kimble}},\ }\bibfield  {title} {\bibinfo {title} {The quantum internet},\
  }\href {https://doi.org/10.1038/nature07127} {\bibfield  {journal} {\bibinfo
  {journal} {Nature}\ }\textbf {\bibinfo {volume} {453}},\ \bibinfo {pages}
  {1023} (\bibinfo {year} {2008})}\BibitemShut {NoStop}%
\bibitem [{\citenamefont {Wehner}\ \emph {et~al.}(2018)\citenamefont {Wehner},
  \citenamefont {Elkouss},\ and\ \citenamefont {Hanson}}]{wehner2018quantum}%
  \BibitemOpen
  \bibfield  {author} {\bibinfo {author} {\bibfnamefont {S.}~\bibnamefont
  {Wehner}}, \bibinfo {author} {\bibfnamefont {D.}~\bibnamefont {Elkouss}},\
  and\ \bibinfo {author} {\bibfnamefont {R.}~\bibnamefont {Hanson}},\
  }\bibfield  {title} {\bibinfo {title} {Quantum internet: A vision for the
  road ahead},\ }\href {https://doi.org/10.1126/science.aam9288} {\bibfield
  {journal} {\bibinfo  {journal} {Science}\ }\textbf {\bibinfo {volume}
  {362}},\ \bibinfo {pages} {eaam9288} (\bibinfo {year} {2018})}\BibitemShut
  {NoStop}%
\bibitem [{\citenamefont {Cuomo}\ \emph {et~al.}(2020)\citenamefont {Cuomo},
  \citenamefont {Caleffi},\ and\ \citenamefont
  {Cacciapuoti}}]{cuomo2020towards}%
  \BibitemOpen
  \bibfield  {author} {\bibinfo {author} {\bibfnamefont {D.}~\bibnamefont
  {Cuomo}}, \bibinfo {author} {\bibfnamefont {M.}~\bibnamefont {Caleffi}},\
  and\ \bibinfo {author} {\bibfnamefont {A.~S.}\ \bibnamefont {Cacciapuoti}},\
  }\bibfield  {title} {\bibinfo {title} {Towards a distributed quantum
  computing ecosystem},\ }\href {https://doi.org/10.1049/iet-qtc.2020.0002}
  {\bibfield  {journal} {\bibinfo  {journal} {IET Quantum Communication}\
  }\textbf {\bibinfo {volume} {1}},\ \bibinfo {pages} {3} (\bibinfo {year}
  {2020})}\BibitemShut {NoStop}%
\bibitem [{\citenamefont {Hillery}\ \emph {et~al.}(1999)\citenamefont
  {Hillery}, \citenamefont {Bu{\v{z}}ek},\ and\ \citenamefont
  {Berthiaume}}]{hillery1999quantum}%
  \BibitemOpen
  \bibfield  {author} {\bibinfo {author} {\bibfnamefont {M.}~\bibnamefont
  {Hillery}}, \bibinfo {author} {\bibfnamefont {V.}~\bibnamefont
  {Bu{\v{z}}ek}},\ and\ \bibinfo {author} {\bibfnamefont {A.}~\bibnamefont
  {Berthiaume}},\ }\bibfield  {title} {\bibinfo {title} {Quantum secret
  sharing},\ }\href {https://doi.org/10.1103/PhysRevA.59.1829} {\bibfield
  {journal} {\bibinfo  {journal} {Phys. Rev. A}\ }\textbf {\bibinfo {volume}
  {59}},\ \bibinfo {pages} {1829} (\bibinfo {year} {1999})}\BibitemShut
  {NoStop}%
\bibitem [{\citenamefont {Cleve}\ \emph {et~al.}(1999)\citenamefont {Cleve},
  \citenamefont {Gottesman},\ and\ \citenamefont {Lo}}]{CleveGottesmanLo1999}%
  \BibitemOpen
  \bibfield  {author} {\bibinfo {author} {\bibfnamefont {R.}~\bibnamefont
  {Cleve}}, \bibinfo {author} {\bibfnamefont {D.}~\bibnamefont {Gottesman}},\
  and\ \bibinfo {author} {\bibfnamefont {H.-K.}\ \bibnamefont {Lo}},\
  }\bibfield  {title} {\bibinfo {title} {How to share a quantum secret},\
  }\href {https://doi.org/10.1103/PhysRevLett.83.648} {\bibfield  {journal}
  {\bibinfo  {journal} {Physical Review Letters}\ }\textbf {\bibinfo {volume}
  {83}},\ \bibinfo {pages} {648} (\bibinfo {year} {1999})}\BibitemShut
  {NoStop}%
\bibitem [{\citenamefont {Gottesman}(2000)}]{Gottesman2000}%
  \BibitemOpen
  \bibfield  {author} {\bibinfo {author} {\bibfnamefont {D.}~\bibnamefont
  {Gottesman}},\ }\bibfield  {title} {\bibinfo {title} {Theory of quantum
  secret sharing},\ }\href {https://doi.org/10.1103/PhysRevA.61.042311}
  {\bibfield  {journal} {\bibinfo  {journal} {Physical Review A}\ }\textbf
  {\bibinfo {volume} {61}},\ \bibinfo {pages} {042311} (\bibinfo {year}
  {2000})}\BibitemShut {NoStop}%
\bibitem [{\citenamefont {Imai}\ \emph {et~al.}(2005)\citenamefont {Imai},
  \citenamefont {M{\"u}ller-Quade}, \citenamefont {Nascimento}, \citenamefont
  {Tuyls},\ and\ \citenamefont {Winter}}]{ImaiEtAl2005}%
  \BibitemOpen
  \bibfield  {author} {\bibinfo {author} {\bibfnamefont {H.}~\bibnamefont
  {Imai}}, \bibinfo {author} {\bibfnamefont {J.}~\bibnamefont
  {M{\"u}ller-Quade}}, \bibinfo {author} {\bibfnamefont {A.~C.~A.}\
  \bibnamefont {Nascimento}}, \bibinfo {author} {\bibfnamefont
  {P.}~\bibnamefont {Tuyls}},\ and\ \bibinfo {author} {\bibfnamefont
  {A.}~\bibnamefont {Winter}},\ }\bibfield  {title} {\bibinfo {title} {An
  information theoretical model for quantum secret sharing},\ }\href
  {https://doi.org/10.26421/QIC5.1-7} {\bibfield  {journal} {\bibinfo
  {journal} {Quantum Information and Computation}\ }\textbf {\bibinfo {volume}
  {5}},\ \bibinfo {pages} {69} (\bibinfo {year} {2005})}\BibitemShut {NoStop}%
\bibitem [{\citenamefont {Gottesman}(1997)}]{Gottesman1997}%
  \BibitemOpen
  \bibfield  {author} {\bibinfo {author} {\bibfnamefont {D.}~\bibnamefont
  {Gottesman}},\ }\emph {\bibinfo {title} {Stabilizer Codes and Quantum Error
  Correction}},\ \href@noop {} {Ph.D. thesis},\ \bibinfo  {school} {California
  Institute of Technology} (\bibinfo {year} {1997}),\ \bibinfo {note}
  {arXiv:quant-ph/9705052}\BibitemShut {NoStop}%
\bibitem [{\citenamefont {Schumacher}\ and\ \citenamefont
  {Nielsen}(1996)}]{SchumacherNielsen1996}%
  \BibitemOpen
  \bibfield  {author} {\bibinfo {author} {\bibfnamefont {B.}~\bibnamefont
  {Schumacher}}\ and\ \bibinfo {author} {\bibfnamefont {M.~A.}\ \bibnamefont
  {Nielsen}},\ }\bibfield  {title} {\bibinfo {title} {Quantum data processing
  and error correction},\ }\href {https://doi.org/10.1103/PhysRevA.54.2629}
  {\bibfield  {journal} {\bibinfo  {journal} {Physical Review A}\ }\textbf
  {\bibinfo {volume} {54}},\ \bibinfo {pages} {2629} (\bibinfo {year}
  {1996})}\BibitemShut {NoStop}%
\bibitem [{\citenamefont {Markham}\ and\ \citenamefont
  {Sanders}(2008)}]{MarkhamSanders2008}%
  \BibitemOpen
  \bibfield  {author} {\bibinfo {author} {\bibfnamefont {D.}~\bibnamefont
  {Markham}}\ and\ \bibinfo {author} {\bibfnamefont {B.~C.}\ \bibnamefont
  {Sanders}},\ }\bibfield  {title} {\bibinfo {title} {Graph states for quantum
  secret sharing},\ }\href {https://doi.org/10.1103/PhysRevA.78.042309}
  {\bibfield  {journal} {\bibinfo  {journal} {Phys. Rev. A}\ }\textbf {\bibinfo
  {volume} {78}},\ \bibinfo {pages} {042309} (\bibinfo {year}
  {2008})}\BibitemShut {NoStop}%
\bibitem [{\citenamefont {Sarvepalli}(2012)}]{Sarvepalli2012}%
  \BibitemOpen
  \bibfield  {author} {\bibinfo {author} {\bibfnamefont {P.}~\bibnamefont
  {Sarvepalli}},\ }\bibfield  {title} {\bibinfo {title} {Nonthreshold quantum
  secret-sharing schemes in the graph-state formalism},\ }\href
  {https://doi.org/10.1103/PhysRevA.86.042303} {\bibfield  {journal} {\bibinfo
  {journal} {Physical Review A}\ }\textbf {\bibinfo {volume} {86}},\ \bibinfo
  {pages} {042303} (\bibinfo {year} {2012})}\BibitemShut {NoStop}%
\bibitem [{\citenamefont {Marin}\ \emph {et~al.}(2013)\citenamefont {Marin},
  \citenamefont {Markham},\ and\ \citenamefont
  {Perdrix}}]{MarinMarkhamPerdrix2013}%
  \BibitemOpen
  \bibfield  {author} {\bibinfo {author} {\bibfnamefont {A.}~\bibnamefont
  {Marin}}, \bibinfo {author} {\bibfnamefont {D.}~\bibnamefont {Markham}},\
  and\ \bibinfo {author} {\bibfnamefont {S.}~\bibnamefont {Perdrix}},\
  }\bibfield  {title} {\bibinfo {title} {Access structure in graphs in high
  dimension and application to secret sharing},\ }in\ \href
  {https://doi.org/10.4230/LIPIcs.TQC.2013.308} {\emph {\bibinfo {booktitle}
  {8th Conference on the Theory of Quantum Computation, Communication and
  Cryptography (TQC 2013)}}},\ \bibinfo {series} {Leibniz International
  Proceedings in Informatics (LIPIcs)}, Vol.~\bibinfo {volume} {22}\ (\bibinfo
  {year} {2013})\ pp.\ \bibinfo {pages} {308--324}\BibitemShut {NoStop}%
\bibitem [{\citenamefont {Horodecki}\ \emph {et~al.}(2009)\citenamefont
  {Horodecki}, \citenamefont {Horodecki}, \citenamefont {Horodecki},\ and\
  \citenamefont {Horodecki}}]{horodecki2009quantum}%
  \BibitemOpen
  \bibfield  {author} {\bibinfo {author} {\bibfnamefont {R.}~\bibnamefont
  {Horodecki}}, \bibinfo {author} {\bibfnamefont {P.}~\bibnamefont
  {Horodecki}}, \bibinfo {author} {\bibfnamefont {M.}~\bibnamefont
  {Horodecki}},\ and\ \bibinfo {author} {\bibfnamefont {K.}~\bibnamefont
  {Horodecki}},\ }\bibfield  {title} {\bibinfo {title} {Quantum entanglement},\
  }\href {https://doi.org/10.1103/RevModPhys.81.865} {\bibfield  {journal}
  {\bibinfo  {journal} {Rev. Mod. Phys.}\ }\textbf {\bibinfo {volume} {81}},\
  \bibinfo {pages} {865} (\bibinfo {year} {2009})}\BibitemShut {NoStop}%
\bibitem [{\citenamefont {G{\"u}hne}\ and\ \citenamefont
  {T{\'o}th}(2009)}]{guhne2009entanglement}%
  \BibitemOpen
  \bibfield  {author} {\bibinfo {author} {\bibfnamefont {O.}~\bibnamefont
  {G{\"u}hne}}\ and\ \bibinfo {author} {\bibfnamefont {G.}~\bibnamefont
  {T{\'o}th}},\ }\bibfield  {title} {\bibinfo {title} {Entanglement
  detection},\ }\href {https://doi.org/10.1016/j.physrep.2009.02.004}
  {\bibfield  {journal} {\bibinfo  {journal} {Physics Reports}\ }\textbf
  {\bibinfo {volume} {474}},\ \bibinfo {pages} {1} (\bibinfo {year}
  {2009})}\BibitemShut {NoStop}%
\bibitem [{\citenamefont {Hein}\ \emph {et~al.}(2006)\citenamefont {Hein},
  \citenamefont {D{\"u}r}, \citenamefont {Eisert}, \citenamefont {Raussendorf},
  \citenamefont {Van~den Nest},\ and\ \citenamefont
  {Briegel}}]{hein2006entanglement}%
  \BibitemOpen
  \bibfield  {author} {\bibinfo {author} {\bibfnamefont {M.}~\bibnamefont
  {Hein}}, \bibinfo {author} {\bibfnamefont {W.}~\bibnamefont {D{\"u}r}},
  \bibinfo {author} {\bibfnamefont {J.}~\bibnamefont {Eisert}}, \bibinfo
  {author} {\bibfnamefont {R.}~\bibnamefont {Raussendorf}}, \bibinfo {author}
  {\bibfnamefont {M.}~\bibnamefont {Van~den Nest}},\ and\ \bibinfo {author}
  {\bibfnamefont {H.-J.}\ \bibnamefont {Briegel}},\ }\bibfield  {title}
  {\bibinfo {title} {Entanglement in graph states and its applications},\ }in\
  \href {https://doi.org/10.3254/978-1-61499-018-5-115} {\emph {\bibinfo
  {booktitle} {Quantum Computers, Algorithms and Chaos}}},\ \bibinfo {series}
  {Proceedings of the International School of Physics ``Enrico Fermi''}, Vol.\
  \bibinfo {volume} {162}\ (\bibinfo  {publisher} {IOS Press},\ \bibinfo {year}
  {2006})\ pp.\ \bibinfo {pages} {115--218}\BibitemShut {NoStop}%
\bibitem [{\citenamefont {Hein}\ \emph {et~al.}(2004)\citenamefont {Hein},
  \citenamefont {Eisert},\ and\ \citenamefont {Briegel}}]{hein2004}%
  \BibitemOpen
  \bibfield  {author} {\bibinfo {author} {\bibfnamefont {M.}~\bibnamefont
  {Hein}}, \bibinfo {author} {\bibfnamefont {J.}~\bibnamefont {Eisert}},\ and\
  \bibinfo {author} {\bibfnamefont {H.~J.}\ \bibnamefont {Briegel}},\
  }\bibfield  {title} {\bibinfo {title} {Multiparty entanglement in graph
  states},\ }\href {https://doi.org/10.1103/PhysRevA.69.062311} {\bibfield
  {journal} {\bibinfo  {journal} {Phys. Rev. A}\ }\textbf {\bibinfo {volume}
  {69}},\ \bibinfo {pages} {062311} (\bibinfo {year} {2004})}\BibitemShut
  {NoStop}%
\bibitem [{\citenamefont {Van~den Nest}\ \emph {et~al.}(2004)\citenamefont
  {Van~den Nest}, \citenamefont {Dehaene},\ and\ \citenamefont
  {De~Moor}}]{vandennest2004graphical}%
  \BibitemOpen
  \bibfield  {author} {\bibinfo {author} {\bibfnamefont {M.}~\bibnamefont
  {Van~den Nest}}, \bibinfo {author} {\bibfnamefont {J.}~\bibnamefont
  {Dehaene}},\ and\ \bibinfo {author} {\bibfnamefont {B.}~\bibnamefont
  {De~Moor}},\ }\bibfield  {title} {\bibinfo {title} {Graphical description of
  the action of local clifford transformations on graph states},\ }\href
  {https://doi.org/10.1103/PhysRevA.69.022316} {\bibfield  {journal} {\bibinfo
  {journal} {Phys. Rev. A}\ }\textbf {\bibinfo {volume} {69}},\ \bibinfo
  {pages} {022316} (\bibinfo {year} {2004})}\BibitemShut {NoStop}%
\bibitem [{\citenamefont {Fattal}\ \emph {et~al.}(2004)\citenamefont {Fattal},
  \citenamefont {Cubitt}, \citenamefont {Yamamoto}, \citenamefont {Bravyi},\
  and\ \citenamefont {Chuang}}]{FattalEtAl2004}%
  \BibitemOpen
  \bibfield  {author} {\bibinfo {author} {\bibfnamefont {D.}~\bibnamefont
  {Fattal}}, \bibinfo {author} {\bibfnamefont {T.~S.}\ \bibnamefont {Cubitt}},
  \bibinfo {author} {\bibfnamefont {Y.}~\bibnamefont {Yamamoto}}, \bibinfo
  {author} {\bibfnamefont {S.}~\bibnamefont {Bravyi}},\ and\ \bibinfo {author}
  {\bibfnamefont {I.~L.}\ \bibnamefont {Chuang}},\ }\href@noop {} {\bibinfo
  {title} {Entanglement in the stabilizer formalism}} (\bibinfo {year}
  {2004})\BibitemShut {NoStop}%
\bibitem [{\citenamefont {Choi}(1975)}]{Choi1975}%
  \BibitemOpen
  \bibfield  {author} {\bibinfo {author} {\bibfnamefont {M.-D.}\ \bibnamefont
  {Choi}},\ }\bibfield  {title} {\bibinfo {title} {Completely positive linear
  maps on complex matrices},\ }\href
  {https://doi.org/10.1016/0024-3795(75)90075-0} {\bibfield  {journal}
  {\bibinfo  {journal} {Linear Algebra and its Applications}\ }\textbf
  {\bibinfo {volume} {10}},\ \bibinfo {pages} {285} (\bibinfo {year}
  {1975})}\BibitemShut {NoStop}%
\bibitem [{sup()}]{supplemental}%
  \BibitemOpen
  \href@noop {} {}\bibinfo {note} {See Supplemental Material for proofs of the
  general pure-resource criterion, the single-output characterization, and the
  architecture-independent entanglement bound; the exact graph-state cut-kernel
  classification and its parity- and encoder-dependent branches; the
  \textsc{GSECC} certification procedure, constructive Clifford recovery, and
  noise-side Bell reduction; and explicit scalable, finite-size,
  rank-deficient, Dicke, and $W$-state resource analyses.}\BibitemShut {Stop}%
\bibitem [{\citenamefont {Lloyd}(1997)}]{Lloyd1997}%
  \BibitemOpen
  \bibfield  {author} {\bibinfo {author} {\bibfnamefont {S.}~\bibnamefont
  {Lloyd}},\ }\bibfield  {title} {\bibinfo {title} {Capacity of the noisy
  quantum channel},\ }\href {https://doi.org/10.1103/PhysRevA.55.1613}
  {\bibfield  {journal} {\bibinfo  {journal} {Physical Review A}\ }\textbf
  {\bibinfo {volume} {55}},\ \bibinfo {pages} {1613} (\bibinfo {year}
  {1997})}\BibitemShut {NoStop}%
\bibitem [{\citenamefont {Devetak}(2005)}]{Devetak2005}%
  \BibitemOpen
  \bibfield  {author} {\bibinfo {author} {\bibfnamefont {I.}~\bibnamefont
  {Devetak}},\ }\bibfield  {title} {\bibinfo {title} {The private classical
  capacity and quantum capacity of a quantum channel},\ }\href
  {https://doi.org/10.1109/TIT.2004.839515} {\bibfield  {journal} {\bibinfo
  {journal} {IEEE Transactions on Information Theory}\ }\textbf {\bibinfo
  {volume} {51}},\ \bibinfo {pages} {44} (\bibinfo {year} {2005})}\BibitemShut
  {NoStop}%
\bibitem [{\citenamefont {Devetak}\ and\ \citenamefont
  {Shor}(2005)}]{DevetakShor2005}%
  \BibitemOpen
  \bibfield  {author} {\bibinfo {author} {\bibfnamefont {I.}~\bibnamefont
  {Devetak}}\ and\ \bibinfo {author} {\bibfnamefont {P.~W.}\ \bibnamefont
  {Shor}},\ }\bibfield  {title} {\bibinfo {title} {The capacity of a quantum
  channel for simultaneous transmission of classical and quantum information},\
  }\href {https://doi.org/10.1007/s00220-005-1317-6} {\bibfield  {journal}
  {\bibinfo  {journal} {Communications in Mathematical Physics}\ }\textbf
  {\bibinfo {volume} {256}},\ \bibinfo {pages} {287} (\bibinfo {year}
  {2005})}\BibitemShut {NoStop}%
\bibitem [{\citenamefont {Higuchi}\ and\ \citenamefont
  {Sudbery}(2000)}]{HiguchiSudbery2000}%
  \BibitemOpen
  \bibfield  {author} {\bibinfo {author} {\bibfnamefont {A.}~\bibnamefont
  {Higuchi}}\ and\ \bibinfo {author} {\bibfnamefont {A.}~\bibnamefont
  {Sudbery}},\ }\bibfield  {title} {\bibinfo {title} {How entangled can two
  couples get?},\ }\href {https://doi.org/10.1016/S0375-9601(00)00480-1}
  {\bibfield  {journal} {\bibinfo  {journal} {Physics Letters A}\ }\textbf
  {\bibinfo {volume} {273}},\ \bibinfo {pages} {213} (\bibinfo {year}
  {2000})}\BibitemShut {NoStop}%
\bibitem [{\citenamefont {Helwig}\ \emph {et~al.}(2012)\citenamefont {Helwig},
  \citenamefont {Cui}, \citenamefont {Riera}, \citenamefont {Latorre},\ and\
  \citenamefont {Lo}}]{HelwigEtAl2012}%
  \BibitemOpen
  \bibfield  {author} {\bibinfo {author} {\bibfnamefont {W.}~\bibnamefont
  {Helwig}}, \bibinfo {author} {\bibfnamefont {W.}~\bibnamefont {Cui}},
  \bibinfo {author} {\bibfnamefont {A.}~\bibnamefont {Riera}}, \bibinfo
  {author} {\bibfnamefont {J.~I.}\ \bibnamefont {Latorre}},\ and\ \bibinfo
  {author} {\bibfnamefont {H.-K.}\ \bibnamefont {Lo}},\ }\bibfield  {title}
  {\bibinfo {title} {Absolute maximal entanglement and quantum secret
  sharing},\ }\href {https://doi.org/10.1103/PhysRevA.86.052335} {\bibfield
  {journal} {\bibinfo  {journal} {Physical Review A}\ }\textbf {\bibinfo
  {volume} {86}},\ \bibinfo {pages} {052335} (\bibinfo {year}
  {2012})}\BibitemShut {NoStop}%
\bibitem [{\citenamefont {Oum}(2005)}]{oum2005rankwidth}%
  \BibitemOpen
  \bibfield  {author} {\bibinfo {author} {\bibfnamefont {S.-i.}\ \bibnamefont
  {Oum}},\ }\bibfield  {title} {\bibinfo {title} {Rank-width and
  vertex-minors},\ }\href {https://doi.org/10.1016/j.jctb.2005.03.003}
  {\bibfield  {journal} {\bibinfo  {journal} {Journal of Combinatorial Theory,
  Series B}\ }\textbf {\bibinfo {volume} {95}},\ \bibinfo {pages} {79}
  (\bibinfo {year} {2005})}\BibitemShut {NoStop}%
\bibitem [{\citenamefont {Dehaene}\ and\ \citenamefont
  {De~Moor}(2003)}]{DehaeneDeMoor2003}%
  \BibitemOpen
  \bibfield  {author} {\bibinfo {author} {\bibfnamefont {J.}~\bibnamefont
  {Dehaene}}\ and\ \bibinfo {author} {\bibfnamefont {B.}~\bibnamefont
  {De~Moor}},\ }\bibfield  {title} {\bibinfo {title} {Clifford group,
  stabilizer states, and linear and quadratic operations over binary vector
  spaces},\ }\href {https://doi.org/10.1103/PhysRevA.68.042318} {\bibfield
  {journal} {\bibinfo  {journal} {Physical Review A}\ }\textbf {\bibinfo
  {volume} {68}},\ \bibinfo {pages} {042318} (\bibinfo {year}
  {2003})}\BibitemShut {NoStop}%
\bibitem [{\citenamefont {Bravyi}\ and\ \citenamefont
  {Terhal}(2009)}]{BravyiTerhal2009}%
  \BibitemOpen
  \bibfield  {author} {\bibinfo {author} {\bibfnamefont {S.}~\bibnamefont
  {Bravyi}}\ and\ \bibinfo {author} {\bibfnamefont {B.~M.}\ \bibnamefont
  {Terhal}},\ }\bibfield  {title} {\bibinfo {title} {A no-go theorem for a
  two-dimensional self-correcting quantum memory based on stabilizer codes},\
  }\href {https://doi.org/10.1088/1367-2630/11/4/043029} {\bibfield  {journal}
  {\bibinfo  {journal} {New Journal of Physics}\ }\textbf {\bibinfo {volume}
  {11}},\ \bibinfo {pages} {043029} (\bibinfo {year} {2009})},\ \Eprint
  {https://arxiv.org/abs/0810.1983} {arXiv:0810.1983 [quant-ph]} \BibitemShut
  {NoStop}%
\bibitem [{\citenamefont {Aaronson}\ and\ \citenamefont
  {Gottesman}(2004)}]{aaronson2004improved}%
  \BibitemOpen
  \bibfield  {author} {\bibinfo {author} {\bibfnamefont {S.}~\bibnamefont
  {Aaronson}}\ and\ \bibinfo {author} {\bibfnamefont {D.}~\bibnamefont
  {Gottesman}},\ }\bibfield  {title} {\bibinfo {title} {Improved simulation of
  stabilizer circuits},\ }\href {https://doi.org/10.1103/PhysRevA.70.052328}
  {\bibfield  {journal} {\bibinfo  {journal} {Phys. Rev. A}\ }\textbf {\bibinfo
  {volume} {70}},\ \bibinfo {pages} {052328} (\bibinfo {year}
  {2004})}\BibitemShut {NoStop}%
\bibitem [{\citenamefont {Patel}\ \emph {et~al.}(2008)\citenamefont {Patel},
  \citenamefont {Markov},\ and\ \citenamefont {Hayes}}]{PatelMarkovHayes2008}%
  \BibitemOpen
  \bibfield  {author} {\bibinfo {author} {\bibfnamefont {K.~N.}\ \bibnamefont
  {Patel}}, \bibinfo {author} {\bibfnamefont {I.~L.}\ \bibnamefont {Markov}},\
  and\ \bibinfo {author} {\bibfnamefont {J.~P.}\ \bibnamefont {Hayes}},\
  }\bibfield  {title} {\bibinfo {title} {Optimal synthesis of linear reversible
  circuits},\ }\href {https://doi.org/10.26421/QIC8.3-4-4} {\bibfield
  {journal} {\bibinfo  {journal} {Quantum Information and Computation}\
  }\textbf {\bibinfo {volume} {8}},\ \bibinfo {pages} {282} (\bibinfo {year}
  {2008})}\BibitemShut {NoStop}%
\bibitem [{\citenamefont {Briegel}\ and\ \citenamefont
  {Raussendorf}(2001)}]{briegel2001persistent}%
  \BibitemOpen
  \bibfield  {author} {\bibinfo {author} {\bibfnamefont {H.~J.}\ \bibnamefont
  {Briegel}}\ and\ \bibinfo {author} {\bibfnamefont {R.}~\bibnamefont
  {Raussendorf}},\ }\bibfield  {title} {\bibinfo {title} {Persistent
  entanglement in arrays of interacting particles},\ }\href
  {https://doi.org/10.1103/PhysRevLett.86.910} {\bibfield  {journal} {\bibinfo
  {journal} {Phys. Rev. Lett.}\ }\textbf {\bibinfo {volume} {86}},\ \bibinfo
  {pages} {910} (\bibinfo {year} {2001})}\BibitemShut {NoStop}%
\bibitem [{\citenamefont {Raussendorf}\ and\ \citenamefont
  {Briegel}(2001)}]{raussendorf2001one}%
  \BibitemOpen
  \bibfield  {author} {\bibinfo {author} {\bibfnamefont {R.}~\bibnamefont
  {Raussendorf}}\ and\ \bibinfo {author} {\bibfnamefont {H.~J.}\ \bibnamefont
  {Briegel}},\ }\bibfield  {title} {\bibinfo {title} {A one-way quantum
  computer},\ }\href {https://doi.org/10.1103/PhysRevLett.86.5188} {\bibfield
  {journal} {\bibinfo  {journal} {Phys. Rev. Lett.}\ }\textbf {\bibinfo
  {volume} {86}},\ \bibinfo {pages} {5188} (\bibinfo {year}
  {2001})}\BibitemShut {NoStop}%
\bibitem [{\citenamefont {Raussendorf}\ \emph {et~al.}(2003)\citenamefont
  {Raussendorf}, \citenamefont {Browne},\ and\ \citenamefont
  {Briegel}}]{raussendorf2003measurement}%
  \BibitemOpen
  \bibfield  {author} {\bibinfo {author} {\bibfnamefont {R.}~\bibnamefont
  {Raussendorf}}, \bibinfo {author} {\bibfnamefont {D.~E.}\ \bibnamefont
  {Browne}},\ and\ \bibinfo {author} {\bibfnamefont {H.~J.}\ \bibnamefont
  {Briegel}},\ }\bibfield  {title} {\bibinfo {title} {Measurement-based quantum
  computation on cluster states},\ }\href
  {https://doi.org/10.1103/PhysRevA.68.022312} {\bibfield  {journal} {\bibinfo
  {journal} {Phys. Rev. A}\ }\textbf {\bibinfo {volume} {68}},\ \bibinfo
  {pages} {022312} (\bibinfo {year} {2003})}\BibitemShut {NoStop}%
\bibitem [{\citenamefont {Erd{\H{o}}s}\ and\ \citenamefont
  {R{\'e}nyi}(1959)}]{erdos1959}%
  \BibitemOpen
  \bibfield  {author} {\bibinfo {author} {\bibfnamefont {P.}~\bibnamefont
  {Erd{\H{o}}s}}\ and\ \bibinfo {author} {\bibfnamefont {A.}~\bibnamefont
  {R{\'e}nyi}},\ }\bibfield  {title} {\bibinfo {title} {On random graphs. i.},\
  }\href {https://doi.org/10.5486/PMD.1959.6.3-4.12} {\bibfield  {journal}
  {\bibinfo  {journal} {Publicationes Mathematicae Debrecen}\ }\textbf
  {\bibinfo {volume} {6}},\ \bibinfo {pages} {290} (\bibinfo {year}
  {1959})}\BibitemShut {NoStop}%
\bibitem [{\citenamefont {Gilbert}(1959)}]{gilbert1959}%
  \BibitemOpen
  \bibfield  {author} {\bibinfo {author} {\bibfnamefont {E.~N.}\ \bibnamefont
  {Gilbert}},\ }\bibfield  {title} {\bibinfo {title} {Random graphs},\ }\href
  {https://doi.org/10.1214/aoms/1177706098} {\bibfield  {journal} {\bibinfo
  {journal} {The Annals of Mathematical Statistics}\ }\textbf {\bibinfo
  {volume} {30}},\ \bibinfo {pages} {1141} (\bibinfo {year}
  {1959})}\BibitemShut {NoStop}%
\bibitem [{\citenamefont {Roy}\ and\ \citenamefont
  {Gupta}(2026)}]{Roy_Gupta2026Github}%
  \BibitemOpen
  \bibfield  {author} {\bibinfo {author} {\bibfnamefont {P.}~\bibnamefont
  {Roy}}\ and\ \bibinfo {author} {\bibfnamefont {S.}~\bibnamefont {Gupta}},\
  }\href
  {https://github.com/pritam-roy-99/Graph-State-Encrypted-Cloning-Algorithms}
  {\bibinfo {title} {Graph-state encrypted cloning algorithms}},\ \bibinfo
  {howpublished} {GitHub repository} (\bibinfo {year} {2026})\BibitemShut
  {NoStop}%
\bibitem [{\citenamefont {Greenberger}\ \emph {et~al.}(1989)\citenamefont
  {Greenberger}, \citenamefont {Horne},\ and\ \citenamefont
  {Zeilinger}}]{greenberger1989going}%
  \BibitemOpen
  \bibfield  {author} {\bibinfo {author} {\bibfnamefont {D.~M.}\ \bibnamefont
  {Greenberger}}, \bibinfo {author} {\bibfnamefont {M.~A.}\ \bibnamefont
  {Horne}},\ and\ \bibinfo {author} {\bibfnamefont {A.}~\bibnamefont
  {Zeilinger}},\ }\bibfield  {title} {\bibinfo {title} {Going beyond bell's
  theorem},\ }in\ \href {https://doi.org/10.1007/978-94-017-0849-4_10} {\emph
  {\bibinfo {booktitle} {Bell's Theorem, Quantum Theory and Conceptions of the
  Universe}}},\ \bibinfo {series} {Fundamental Theories of Physics},
  Vol.~\bibinfo {volume} {37},\ \bibinfo {editor} {edited by\ \bibinfo {editor}
  {\bibfnamefont {M.}~\bibnamefont {Kafatos}}}\ (\bibinfo  {publisher}
  {Springer},\ \bibinfo {address} {Dordrecht},\ \bibinfo {year} {1989})\ pp.\
  \bibinfo {pages} {69--72}\BibitemShut {NoStop}%
\bibitem [{\citenamefont {Dicke}(1954)}]{dicke1954}%
  \BibitemOpen
  \bibfield  {author} {\bibinfo {author} {\bibfnamefont {R.~H.}\ \bibnamefont
  {Dicke}},\ }\bibfield  {title} {\bibinfo {title} {Coherence in spontaneous
  radiation processes},\ }\href {https://doi.org/10.1103/PhysRev.93.99}
  {\bibfield  {journal} {\bibinfo  {journal} {Phys. Rev.}\ }\textbf {\bibinfo
  {volume} {93}},\ \bibinfo {pages} {99} (\bibinfo {year} {1954})}\BibitemShut
  {NoStop}%
\bibitem [{\citenamefont {Moreno}\ and\ \citenamefont
  {Parisio}(2018)}]{MorenoParisio2018}%
  \BibitemOpen
  \bibfield  {author} {\bibinfo {author} {\bibfnamefont {M.~G.~M.}\
  \bibnamefont {Moreno}}\ and\ \bibinfo {author} {\bibfnamefont
  {F.}~\bibnamefont {Parisio}},\ }\href
  {https://doi.org/10.48550/arXiv.1801.00762} {\bibinfo {title} {All
  bipartitions of arbitrary dicke states}} (\bibinfo {year} {2018}),\ \Eprint
  {https://arxiv.org/abs/1801.00762} {arXiv:1801.00762 [quant-ph]} \BibitemShut
  {NoStop}%
\bibitem [{\citenamefont {Munizzi}\ and\ \citenamefont
  {Schnitzer}(2024)}]{MunizziSchnitzer2024}%
  \BibitemOpen
  \bibfield  {author} {\bibinfo {author} {\bibfnamefont {W.}~\bibnamefont
  {Munizzi}}\ and\ \bibinfo {author} {\bibfnamefont {H.~J.}\ \bibnamefont
  {Schnitzer}},\ }\bibfield  {title} {\bibinfo {title} {Entropy cones and
  entanglement evolution for dicke states},\ }\href
  {https://doi.org/10.1103/PhysRevA.109.012405} {\bibfield  {journal} {\bibinfo
   {journal} {Phys. Rev. A}\ }\textbf {\bibinfo {volume} {109}},\ \bibinfo
  {pages} {012405} (\bibinfo {year} {2024})}\BibitemShut {NoStop}%
\bibitem [{\citenamefont {D\"ur}\ \emph {et~al.}(2000)\citenamefont {D\"ur},
  \citenamefont {Vidal},\ and\ \citenamefont {Cirac}}]{dur2000three}%
  \BibitemOpen
  \bibfield  {author} {\bibinfo {author} {\bibfnamefont {W.}~\bibnamefont
  {D\"ur}}, \bibinfo {author} {\bibfnamefont {G.}~\bibnamefont {Vidal}},\ and\
  \bibinfo {author} {\bibfnamefont {J.~I.}\ \bibnamefont {Cirac}},\ }\bibfield
  {title} {\bibinfo {title} {Three qubits can be entangled in two inequivalent
  ways},\ }\href {https://doi.org/10.1103/PhysRevA.62.062314} {\bibfield
  {journal} {\bibinfo  {journal} {Phys. Rev. A}\ }\textbf {\bibinfo {volume}
  {62}},\ \bibinfo {pages} {062314} (\bibinfo {year} {2000})}\BibitemShut
  {NoStop}%
\end{thebibliography}
\end{document}